\newtheorem{theoremduplicate}{Theorem}
\newcommand{\stackmst}{\textsc{StackMST}}
\newcommand{\paths}[4]{ \mathcal{P}(#1, #2, #3, #4) }
\newcommand{\redpaths}[4]{ \widetilde{\mathcal{P}}(#1, #2, #3, #4) }
\newcommand{\bneck}[4]{ w(#1, #2, #3, #4) }
\newcommand{\redbneck}[4]{ \widetilde{w}(#1, #2, #3, #4) }
\newcommand{\intpaths}[4]{ \mathcal{P}_{int}(#1, #2, #3, #4) }
\newcommand{\intredpaths}[4]{ \widetilde{\mathcal{P}}_{int}(#1, #2, #3, #4) }
\newcommand{\intbneck}[4]{ w_{int}(#1, #2, #3, #4) }
\newcommand{\intredbneck}[4]{ \widetilde{w}_{int}(#1, #2, #3, #4) }
\newcommand{\etanode}[1]{ \tilde{#1} }
\DeclareMathOperator{\OPTtw}{OPT}
\DeclareMathOperator{\OPTsymbol}{OPT}
\newcommand{\OPT}[4]{\OPTsymbol_{#2}(#1,#3,#4)}
\newcommand{\pcomp}{parallel-compatible}
\newcommand{\scomp}{series-compatible}
\DeclareMathOperator{\mc}{mc}
\title{The Stackelberg Minimum Spanning Tree Game \\ on Planar and Bounded-Treewidth Graphs\thanks{A preliminary version of this paper appeared in~\cite{CDFJLNW-WINE}}}
\author{
Jean Cardinal
\and
Erik D. Demaine
\and
Samuel Fiorini
\and
Gwena\"el Joret\thanks{G. Joret is a Postdoctoral Researcher of the Fonds National de la Recherche Scientifique (F.R.S.--FNRS).}
\and
Ilan Newman
\and
Oren Weimann
}
\institute{J. Cardinal \at
Universit\'e Libre de Bruxelles (ULB), D\'epartement d'Informatique, CP~212\\
B-1050 Brussels, Belgium\\
\email{jcardin@ulb.ac.be}
\and
E. D. Demaine \at
MIT Computer Science and Artificial Intelligence Laboratory\\
Cambridge, MA 02139, USA\\
\email{edemaine@mit.edu}
\and
S. Fiorini \at
Universit\'e Libre de Bruxelles (ULB), D\'epartement de Math\'ematique, CP~216\\
B-1050 Brussels, Belgium\\
\email{sfiorini@ulb.ac.be}
\and
G. Joret \at
Universit\'e Libre de Bruxelles (ULB), D\'epartement d'Informatique, CP~212\\
B-1050 Brussels, Belgium\\
\email{gjoret@ulb.ac.be}
\and
I. Newman \at
Department of Computer Science, University of Haifa\\
Haifa 31905, Israel\\
\email{ilan@cs.haifa.ac.il}
\and
O. Weimann \at
Department of Computer Science, University of Haifa\\
Haifa 31905, Israel\\
\email{oren@cs.haifa.ac.il}
}
\institute{F. Author \at
              first address \\
              Tel.: +123-45-678910\\
              Fax: +123-45-678910\\
              \email{fauthor@example.com}           %  \\
%             \emph{Present address:} of F. Author  %  if needed
           \and
           S. Author \at
              second address
}
\begin{document}
\titlerunning{The Stackelberg Minimum Spanning Tree Game}
\sloppy

\date{}
\maketitle

\sloppy

\begin{abstract}
The Stackelberg Minimum Spanning Tree Game is a two-level combinatorial pricing problem played on a graph representing a network.
Its edges are colored either red or blue, and the red edges have a given fixed cost, representing the competitor's prices.
The first player chooses an assignment of prices to the blue edges, and the second player then buys the cheapest spanning tree,
using any combination of red and blue edges. The goal of the first player is to maximize the total price of purchased blue edges.

We study this problem in the cases of planar and bounded-treewidth graphs. We show that the problem is NP-hard on planar graphs
but can be solved in polynomial time on graphs of bounded treewidth.
\end{abstract}

\section{Introduction}

A young startup company has just acquired a collection of point-to-point tubes
between various sites on the Interweb.  The company's goal is to sell the use
of these tubes to a particularly stingy client, who will buy a minimum-cost
spanning tree of the network.  Unfortunately, the company has a direct
competitor: the government sells the use of a different collection of
point-to-point tubes at publicly known prices.
Our goal is to set the company's tube prices to maximize the company's income,
given the government's prices and the knowledge that the client will buy a
minimum spanning tree made from any combination of company and government tubes.
Naturally, if we set the prices too high, the client will rather buy the
government's tubes, while if we set the prices too low, we unnecessarily
reduce the company's income.

\begin{figure}
\begin{center}
\includegraphics[scale=.4]{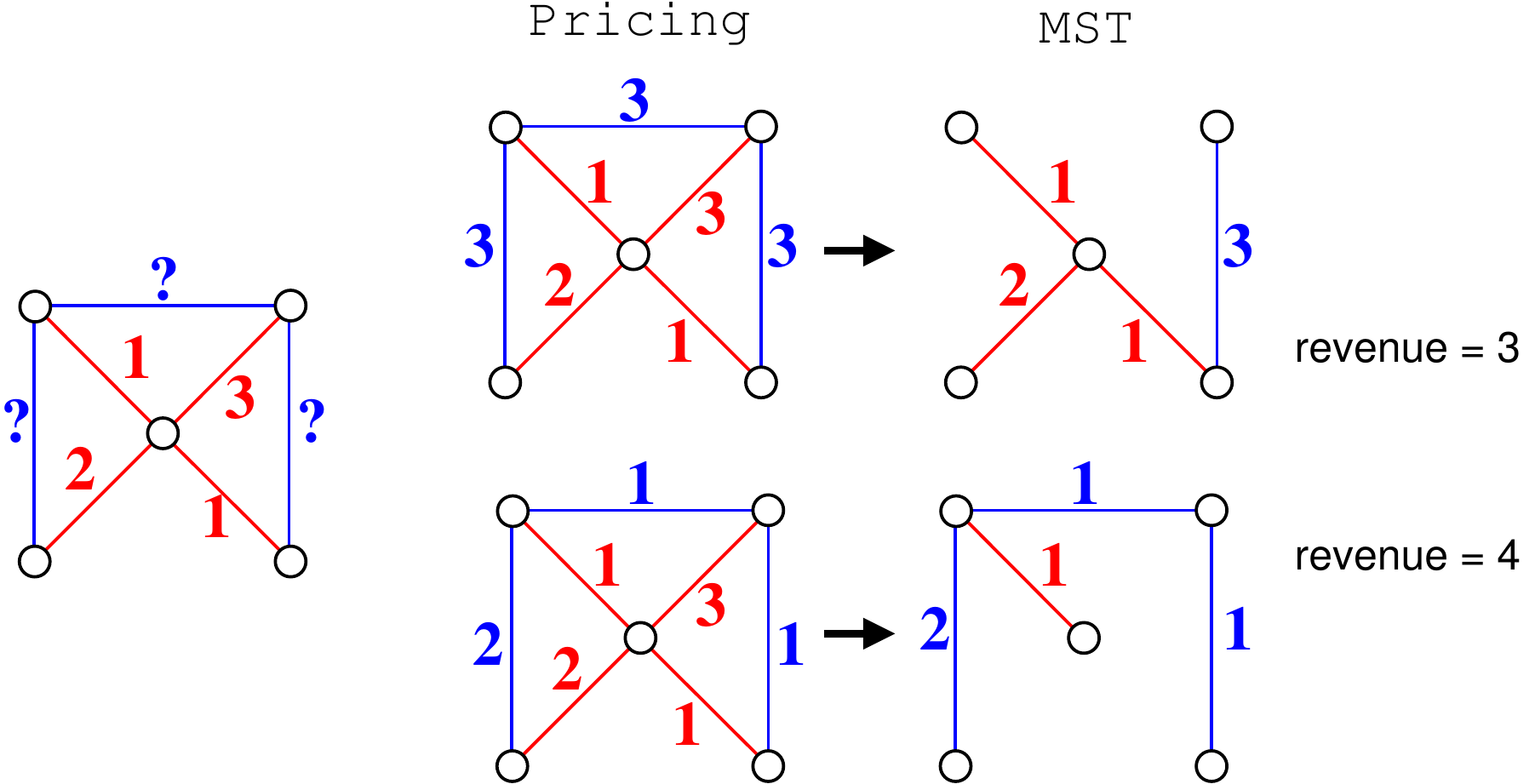}
\end{center}
\caption{\label{fig:example}A sample instance of the {\stackmst} problem.
  The goal is to assign prices to the blue edges to maximize the total
  price of the blue edges purchased in a minimum spanning tree.}
\end{figure}

This problem is called the {\em Stackelberg Minimum Spanning Tree Game}
\cite{CDFJLNW-sub}, and is an example in the growing family of algorithmic
game-theoretic problems about combinatorial optimization in graphs
\cite{GvHKUB04,RSM05,BHK08,vH06,BGPW08,GLSU09-journal,BHGV09,BGLP10,BCKLN10}.
More formally, we are given an undirected graph $G$
(possibly with parallel edges, but no loops), whose edge set $E(G)$ is
partitioned into a {\em red edge set} $R(G)$ and a {\em blue edge set} $B(G)$.
We are also given a cost function $c:R(G)\to {\mathbb R}^+$ assigning a
positive cost to each red edge.  The {\stackmst} problem is to assign
a price $p(e)$ to each blue edge~$e$,
resulting in a weighted graph $(G,c \cup p)$,
to maximize the total price of blue edges in a minimum spanning tree.
We assume that, if there is more than one minimum spanning tree,
we obtain the maximum possible income.
(Otherwise, we could decrease the prices slightly
and get arbitrarily close to the same income.)
Figure~\ref{fig:example} shows an example.

This problem is thus a two-player two-level optimization problem,
in which the leader (the company) chooses a strategy (a price assignment),
taking into account the strategy of the follower (the client),
which is determined by a second-level optimization problem
(the minimum spanning tree problem).
Such a game is known as a {\em Stackelberg game} in economics~\cite{Stack34}.

\paragraph{Motivations and scope.}

The Stackelberg Minimum Spanning Tree Game is a suitable model for real-life network pricing problems,
of the same flavor as those previously used for taxation and freight tariff-setting in the operations
research community (see for instance~\cite{LMS98,BLMS00,BLMS01}). It can be used to model
pricing in communication or transportation networks, and is easily amenable to meaningful generalizations
(see previous works below).

In this contribution, we aim at studying the problem under two natural restrictions. First, we
consider the class of planar instances, i.e., in which the input graph is planar. This can model
situations in which the input network corresponds to geographic connections.
Many important combinatorial optimization problems admit polynomial-time approximation schemes
on planar graphs. Among the first such results, Baker's technique~\cite{Baker94} is well known.
Since then, many more powerful techniques have been proposed~\cite{Klein05,Klein06,BKK07,DHM07,DHK09},
which ultimately rely on the ability to efficiently solve the problem in graphs of bounded treewidth in polynomial time.

This leads us to the second structural restriction we will tackle. Bounded-treewidth graphs have
the property of being ``close'' to trees, in the sense that they have can be augmented into chordal graphs with a bounded
clique number. They also constitute a natural structural restriction, that may be verified in real-life cases, and have proven fundamental
in many other combinatorial problems (see for instance the surveys from Bodlaender~\cite{B06} and Bodlaender and Koster~\cite{BK08}).

Optimization algorithms on bounded-treewidth graphs are generally based on dynamic programming,
using a textbook technique for well-behaved problems. In particular, it was shown by Courcelle~\cite{C08}
that the problem of checking a graph-theoretic property expressible in monadic second-order logic is fixed-parameter tractable
with respect to the treewidth of the graph. However, few if any such dynamic programs have been developed for
a bilevel optimization problem such as \stackmst, and standard techniques do not seem to apply. We expect our contribution
to give a basis for further application of graph decompositions to other bilevel optimization problems.

\paragraph{Previous results.}
The complexity and approximability of the \stackmst\ problem has been studied
in a previous paper~\cite{CDFJLNW-sub}. It was shown that
the problem is APX-hard, but can be approximated within a logarithmic factor.
Also, constant-factor approximation exist for the special cases in which the given
costs are bounded or take a bounded number of distinct values.  Finally, an
integer programming formulation has an integrality gap corresponding to the
best known approximation factors.

Briest et al. \cite{BHK08} generalized the above results to a wider class of pricing problems on graphs. This includes,
in particular, pricing problems with many followers and shortest path pricing games. They show that the single-price
strategy proposed in \cite{CDFJLNW-sub} yields logarithmic approximation factors for these games as well. They also tackle
a Stackelberg bipartite vertex cover game, which is shown to be solvable in polynomial time.

Recently, Bil\`o et al. \cite{BGLP10} studied special cases and another generalization of the \stackmst\ problem. In particular, they show that
the problem is approximable within a constant factor whenever the set of blue edges of $G$ forms a complete graph, and is solvable in polynomial time if,
additionally, there are only two distinct red costs. The generalization involves activation costs for the blue edges, and a leader
with a bounded activation budget. They generalize previous results to that case, and give an approximation factor parameterized by the
radius of the spanning tree induced by the red edges.

\paragraph{Our results.}
In Section~\ref{Planar Graphs}, we prove that {\stackmst}
remains NP-hard when restricted to planar graphs (Theorem~\ref{thm-hard}). The reduction is a strengthening from
our previous result, and is from the minimum connected vertex cover problem.

In Section~\ref{Series-Parallel Graphs}, we develop the tools required for the design of
a polynomial-time dynamic programming algorithm for {\stackmst} in series-parallel graphs.
These graphs have treewidth at most 2 and are planar, and they can be alternatively
defined in an inductive fashion using two composition operations. We show (Theorem~\ref{th-sp}) that the {\stackmst}
problem can be solved in $O(m^4)$ time  on series-parallel graphs with $m$ edges.

Finally, Section~\ref{Bounded-Treewidth Graphs} deals with graphs of arbitrary treewidth $t$. Our Theorem~\ref{th-tw} states
that the problem can be solved in $2^{O(t^3)}m + m^{O(t^2)}$ time on those graphs.

\section{Planar Graphs}
\label{Planar Graphs}

We consider the {\stackmst} problem on planar graphs. We strengthen the hardness result given in~\cite{CDFJLNW-sub} by showing that the problem
remains NP-hard in this special case. The reduction is from the {\em minimum connected vertex cover problem}, which is known to be NP-hard,
even when restricted to planar graphs of maximum degree 4 (see Garey and Johnson~\cite{GJ}). The minimum connected vertex cover problem consists
of finding a minimum-size subset $C$ of the vertices of a graph, such that every edge has at least one endpoint in $C$, and $C$ induces
a connected graph.

\begin{theorem}
\label{thm-hard}
The {\stackmst} problem is NP-hard, even when restricted to planar graphs.
\end{theorem}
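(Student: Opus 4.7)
The plan is to reduce from the minimum connected vertex cover problem on planar graphs of maximum degree $4$, which is NP-hard by Garey and Johnson. Given such an instance $G = (V, E)$, I would build a planar {\stackmst} instance $H$ as follows. For every vertex $v \in V$, introduce a single blue ``selector'' edge $b_v$ whose endpoints are attached to a small local structure of red edges. For every edge $uv \in E$, introduce a planar ``edge gadget'' made entirely of red edges that connects the local structures at $u$ and at $v$. The red costs are assigned two carefully chosen scales: one ``large'' value $M$ that upper-bounds the price each $b_v$ can command, and one ``small'' value that encodes the incremental cost of using the edge gadget for $uv$ as an alternative to buying both of $b_u$ and $b_v$. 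Since the input graph is planar of maximum degree $4$, each vertex has constant-size gadgetry and each edge gadget can be routed along the corresponding edge of a planar embedding of $G$, so $H$ is planar.

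The edge gadget for $uv$ will be designed so that in any spanning tree the follower is forced to either include $b_u$ or $b_v$, or pay the red ``penalty'' encoded in the gadget. This is the vertex-cover constraint: the set $C := \{v : b_v \text{ is in the MST}\}$ must cover every edge, otherwise the follower's tree is suboptimal and a cheaper red alternative would have been used. The leader's revenue, when prices are set at the natural upper bound $M$, is then $M \cdot |C|$ minus a universal constant depending only on the edge gadgets. Thus maximizing revenue amounts to \emph{minimizing} $|C|$ over all sets $C \subseteq V$ that arise from some valid MST of $H$.

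The main obstacle, and the reason this requires a genuine strengthening of the previous reduction, is enforcing \emph{connectedness} of the cover $C$. To handle this, I would exploit the fact that the follower's output is itself a spanning tree of $H$. By tuning the internal structure at each vertex $v$ so that the blue edge $b_v$ lies on the unique cheap route between two designated ``ports'' of $v$'s gadget, I would ensure that the MST restricted to the ``vertex'' and ``edge'' gadgets traces out, in $G$, precisely a spanning subgraph of the subgraph of $G$ induced by $C$. Because a spanning tree is connected, this will force $C$ to induce a connected subgraph of $G$. Conversely, given a connected vertex cover $C$ of $G$, I would exhibit an explicit MST of $H$ selecting exactly $\{b_v : v \in C\}$, establishing the reverse direction.

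The hard part will be calibrating the red costs in the edge gadget (and the ``port-to-port'' routing cost through each vertex gadget) so that three properties hold simultaneously: (i) every MST yields a vertex cover, (ii) every MST yields a \emph{connected} set in $G$, and (iii) the quantitative map between the leader's revenue and $|C|$ is tight enough to make the reduction a Karp reduction. I expect this calibration to be the only delicate step; once it is in place, verification of both directions of the correspondence, together with planarity of $H$, follows from local inspection of the gadgets.
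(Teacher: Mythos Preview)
Your high-level plan---reduce from minimum connected vertex cover on planar graphs, as the paper does---is the right one, but the concrete design you sketch has its arithmetic backwards and cannot be repaired without changing the gadget structure. You write that the leader's revenue is $M\cdot|C|$ minus a universal constant and then conclude that maximizing revenue amounts to \emph{minimizing} $|C|$; but $M\cdot|C|$ is increasing in $|C|$. With one blue selector edge per vertex, all priced at $M$, the leader wants as \emph{many} selectors in the tree as possible, while the follower (who minimizes cost) wants as \emph{few} as possible. So at best your reduction ends up asking whether the minimum connected vertex cover is \emph{large}, which is the co-NP side of the question, not the NP side. The gadgetry you describe---``either buy $b_u$ or $b_v$, or pay a red penalty''---reinforces this: it pushes the follower toward buying \emph{fewer} selectors, not more, so the optimal pricing / optimal MST pair does not encode a minimum.

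The paper's reduction avoids this by an essentially different accounting. There is no single blue edge per vertex; instead the blue edges are the vertex--edge incidence edges of $G$ (so $2m$ of them), doubled by red edges of cost $2$, and a red tree of cost-$1$ edges is laid only among the ``edge'' vertices. Given a connected vertex cover $C$, one prices at $1$ a blue tree spanning $C\cup E$ (using $|C|+m-1$ edges) and prices at $2$ the blue edges attaching the $n-|C|$ vertices \emph{outside} $C$ (the only red alternatives there have cost $2$). The revenue is $(|C|+m-1)\cdot 1+(n-|C|)\cdot 2=m+2n-|C|-1$, which is \emph{decreasing} in $|C|$, so maximizing revenue genuinely corresponds to minimizing the connected vertex cover. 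Connectedness of $C$ is then read off from the fact that the price-$1$ blue subgraph must be connected (otherwise a cost-$1$ red edge between two of its components would beat the blue edge joining them in the tree). If you want to salvage your approach, you need a mechanism of this kind---vertices \emph{outside} the cover must contribute more revenue than those inside---rather than a flat $M$ per selected vertex; without that inversion the reduction points the wrong way.
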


\begin{proof}
Given a planar graph $G = (V,E)$, with $|V|=n$ and $|E|=m$, we construct an instance of {\stackmst} with red costs in $\{1,2\}$. Let $G'=(V',R\cup B)$ be the graph for this instance, with $(R,B)$ a bipartition of the edge set. We first let $V'=V\cup E$. The set of blue edges $B$ is the set $\{ ve : e\in E, v\in e  \}$. Thus the blue subgraph is the vertex-edge incidence graph of $G$, which is clearly planar. Given a planar embedding of the blue subgraph, we connect all vertices $e\in E$ of $G'$ by a tree, all edges of which are red and have cost $1$. The graph can be kept planar by letting those red edges be nonintersecting chords of the faces of the embedding. Finally, we double all blue edges by red edges of cost 2. The whole construction is illustrated in figure~\ref{fig:redu1}.

\begin{figure}
\begin{center}
\subfigure[\label{fig:redu1}The graphs $G$ and $G'$.]{\includegraphics[angle=-90, scale=.45]{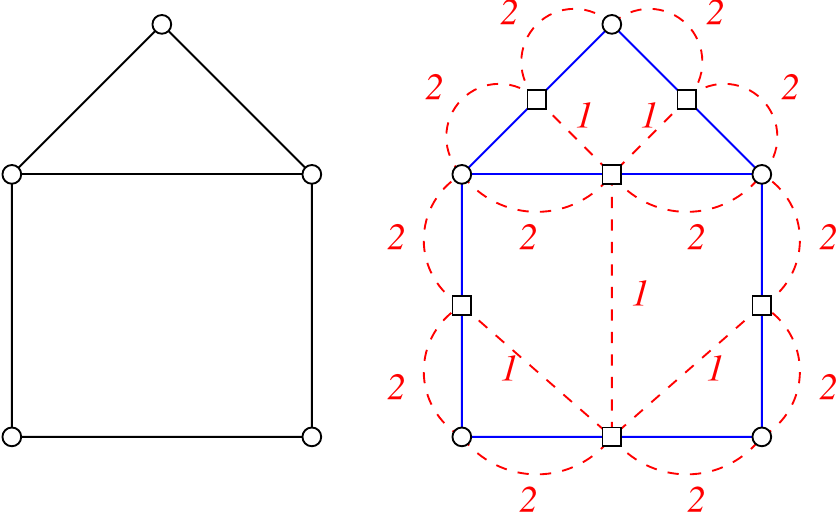}}
\hspace{1cm}
\subfigure[\label{fig:redu2}A connected vertex cover in $G$ and the corresponding price function in $G'$.]{\includegraphics[angle=-90, scale=.45]{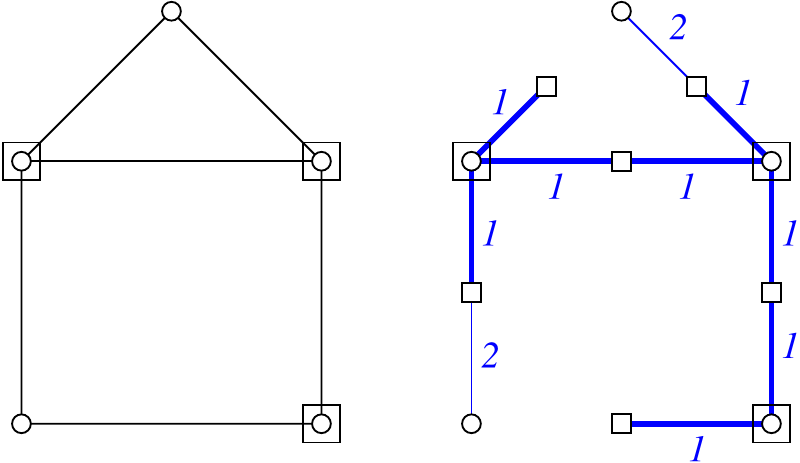}}
\end{center}
\caption{Illustration of the proof of Theorem~\ref{thm-hard}.}
\end{figure}

% claim

Let $t$ be a positive integer. We show that the revenue for an optimal price function for $G'$ is at least $m + 2n - t - 1$ if and only if there exists a connected vertex cover of $G$ of size at most $t$.

\medskip

% <-

$(\Leftarrow)$
We first suppose that there exists such a connected vertex cover $C\subseteq V$, and show how to construct a price function yielding the given revenue.

From the set $C$, we can construct a tree made of blue edges that spans all vertices $e\in E$ of $G'$. The set of vertices of this tree is $C\cup E$,
and its edges are of the form $ue\in E'$, with $u\in C$ and $e\in E$ (see figure~\ref{fig:redu2}).
This tree has $t+m-1$ blue edges, to which we assign price $1$. Now we have to connect the remaining $n-t$ vertices belonging to $V$. Since the only red edges incident to these vertices have cost $2$, we can use $n-t$ blue edges of price $2$ to include these vertices in the minimum spanning tree. The price of the other blue edges is set to $\infty$. The revenue for this price function is exactly $(t+m-1) + 2(n-t) = m+2n-t-1$.

\medskip

% ->

$(\Rightarrow)$
Now suppose that we have a price function yielding revenue at least $m + 2n - t - 1$. We can assume (see~\cite{CDFJLNW-sub}) that all the prices belong to the set $\{ 1,2,\infty \}$. We also assume that the price function is optimal and minimizes the number of red edges in the resulting spanning tree $T$.

First, we observe that $T$ does not contain any red edge.
By contradiction, if $T$ contains a red edge of cost 2, then this edge can be replaced by the parallel blue edge.
On the other hand, if $T$ contains a red edge $f$ of cost 1, we consider the cut defined by removing $f$ from $T$.
In the face used to define $f$, there exists a blue edge having its endpoints across the cut and does not belong to $T$.
So we can use this blue edge, with a price equal to 1, to reconnect the tree.

Now let us consider the blue edges of price $1$ in $T$.
We claim that the graph $H$ induced by these edges contains all vertices $e\in E$ of $G'$ and is connected.

Clearly, all vertices $e\in E$ of $G'$ are incident to a blue edge of price $1$, otherwise it can be reconnected to $T$ with a red edge of cost $1$, and $T$ is not minimum.
Thus $E \subseteq V(H)$, where $V(H)$ is the vertex set of $H$.
Letting $C := V(H)\cap V$, we conclude that $C$ is a vertex cover of the original graph $G$.

Now we show that $H$ is connected. Suppose otherwise; then there exist two vertices of $G'$ in $E$ that are connected by a red edge of cost $1$, and belonging to two different connected components $H_1$ and $H_2$ of $H$.
Consider the (blue) edge that connects $H_1$ and $H_2$ in $T$.
This edge cannot have price 2 in $T$, since $H_1$ and $H_2$ are connected by a red edge of cost $1$.
Hence the blue edge has price 1 and belongs to $H$.
Therefore $H$ is connected and $C$ is a connected vertex cover of $G$.

Finally the remaining vertices $V-C$ of $G'$ must be leaves of $T$, since otherwise they belong to a cycle containing a red edge of cost $1$. The total cost of $T$ is therefore $(m+|C|-1) + 2(n-|C|) = m + 2n - |C| - 1$. Since we know this is at least $m + 2n - t - 1$, we conclude that $|C|\leq t$.
\begin{flushright}\qed\end{flushright}\end{proof}

\section{Series-Parallel Graphs}
\label{Series-Parallel Graphs}

We now describe a polynomial-time dynamic programming algorithm for solving
the {\stackmst} problem on series-parallel graphs. These graphs are planar and have treewidth at most $2$.

We use the following inductive definition of (connected) series-parallel graphs. Consider a connected graph $G$ with two distinguished vertices
$s$ and $t$. The graph $(G,s,t)$ is a {\sl series-parallel} graph if
either $G$ is a single edge $(s,t)$, or $G$ is a {\sl series} or {\sl parallel} composition
of two series-parallel graphs $(G_{1},s_{1},t_{1})$ and $(G_{2},s_{2},t_{2})$.
The series composition of $G_1$ and $G_2$ is formed by setting  $s = s_{1}, t=t_{2}$ and identifying $t_1=s_2$; the parallel composition is formed by identifying $s = s_{1}=s_{2}$ and $t=t_{1}=t_{2}$.

We first give a number of useful lemmas and an outline of the dynamic programming algorithm. This algorithm will use two main rules, corresponding to the series and parallel composition operations. Once the two rules are defined, the description of the algorithm is straightforward.

\subsection{Preliminaries}
\label{sec-def}

Let us fix an instance of {\stackmst}, that is, a graph $G$ with $E(G) = R(G) \cup B(G)$ endowed
with a cost function $c: R(G) \to \mathbb{R}_{+}$. Denote by $c_{1}, c_{2}, \dots, c_{k}$ the different values taken
by $c$, in increasing order. Let also $c_{0}:=0$.\medskip

For two distinct vertices $s,t \in V(G)$ of $G$ and
a subset $F \subseteq B(G)$ of blue edges, define $\paths{G}{F}{s}{t}$
as the set of $st$-paths in the graph $(V(G), R(G) \cup F)$.
Let also $\redpaths{G}{F}{s}{t}$ denote the subset of paths in $\paths{G}{F}{s}{t}$
that contain at least one red edge.
A lemma of Cardinal {\it et al}.~\cite{CDFJLNW-sub} can be restated as follows.

\begin{lemma}[\cite{CDFJLNW-sub}]
\label{lem-opt-for-fixed-blue}
Suppose that $G$ contains a red spanning tree, and let $F\subseteq B(G)$ be
an acyclic subset of blue edges. Then, the maximum revenue achievable by the leader,
over solutions where the set of blue edges bought by the follower is exactly $F$,
is obtained by setting the price of each edge $st \not\in F$ to $+\infty$, and the price
of each edge $st\in F$ to
$$
\min\left\{ \max_{e\in P\cap R(G)} c(e) \mid P\in \redpaths{G}{F}{s}{t} \right\}.
$$
\end{lemma}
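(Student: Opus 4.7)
My plan is to establish the lemma by (a) verifying that the proposed price function is feasible—that is, that there is an MST of $(G, c \cup p)$ whose blue part is exactly $F$—and (b) proving that no feasible price function yields more total revenue. Denote $b(e) := \min\{\max_{f \in P \cap R(G)} c(f) : P \in \redpaths{G}{F}{s}{t}\}$ for $e = st \in F$; since $G$ contains a red spanning tree, $\redpaths{G}{F}{s}{t}$ is nonempty and $b(e)$ is well-defined. Setting $p(e) = +\infty$ for $e \notin F$ is without loss of generality in any feasible price function, since such edges are not purchased and do not contribute to the revenue.

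For the upper bound (b), let $p'$ be any feasible price function and let $T$ be the resulting MST, so that $T$ contains $F$ together with some set of red edges. Fix $e = st \in F$ and let $T_1 \ni s$ and $T_2 \ni t$ be the two components of $T - e$. By the MST cut property, every edge with one endpoint in $T_1$ and one in $T_2$ has weight at least $p'(e)$ (otherwise swapping it in for $e$ gives a strictly better spanning tree). Pick $P^* \in \redpaths{G}{F}{s}{t}$ realising $b(e)$; this path crosses the cut at some edge $e''$. The key observation is that $e''$ must be red: if $e''$ lay in $F$ then $e'' \in T$, forcing $e'' = e$ (since $e$ is the only edge of $T$ crossing the cut), but a simple path $P^* \in \redpaths{G}{F}{s}{t}$ cannot reduce to the single blue edge $e$. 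Consequently $p'(e) \le c(e'') \le \max_{f \in P^* \cap R(G)} c(f) = b(e)$, and summing over $e \in F$ gives $\sum_{e \in F} p'(e) \le \sum_{e \in F} b(e)$.

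For feasibility (a), I would verify the MST cycle property under $p^*(e) := b(e)$. Any cycle $C$ through $e = st \in F$ in $(V(G), R(G) \cup F)$ decomposes as $C = P \cup \{e\}$ with $P$ an $st$-path in $(R(G) \cup F) - e$. Because $F$ is acyclic and $e$ joins the two components of $F - e$, the path $P$ must use at least one red edge, so $P \in \redpaths{G}{F}{s}{t}$ and $\max_{f \in P} w(f) \ge \max_{f \in P \cap R(G)} c(f) \ge b(e) = p^*(e)$. Cycles through $e$ involving a blue edge outside $F$ are trivial since such an edge has weight $+\infty$. Hence no $e \in F$ is ever the strict maximum on a cycle. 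Running Kruskal's algorithm with tie-breaking preferring blue edges over red edges of equal weight then produces an MST containing all of $F$: if $s$ and $t$ were already connected just before $e$ is processed, the connecting path $Q$ would lie in $\redpaths{G}{F}{s}{t}$ (again by acyclicity of $F$) and use only red edges of cost strictly below $b(e)$, contradicting the definition of $b(e)$.

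The main obstacle is the cut-crossing argument in (b): a priori, the bottleneck of $P^*$ under $p'$ could be dominated by the prices $p'(f)$ of other blue edges $f \in F$ lying on $P^*$, creating a coupled system of inequalities across $F$. The argument above decouples these by pinning the cut-crossing edge $e''$ to $R(G)$, using the tree structure of $T$ together with $F \subseteq T$ and simplicity of $P^*$, thereby bounding $p'(e)$ purely in terms of red costs and making $b(e)$ tight.
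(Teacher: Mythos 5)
The paper itself contains no proof of this lemma: it is imported verbatim from~\cite{CDFJLNW-sub} (``A lemma of Cardinal et al.\ can be restated as follows''), so there is no in-paper argument to compare against. Judged on its own, your proof is correct and is the natural exchange-style argument. The upper bound is sound: because $F$ is exactly the blue part of the follower's tree $T$, the only edge of $T$ across the cut defined by $T-e$ is $e$ itself; a path $P^*\in\redpaths{G}{F}{s}{t}$ attaining the bound cannot contain $e$ (a simple $st$-path through the edge $st$ would be that single blue edge, which has no red edge), so the edge of $P^*$ crossing the cut is red, and the MST cut property then bounds $p'(e)$ by a red cost on $P^*$ --- this is exactly the right way to decouple the system of inequalities among the blue prices. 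The achievability side via Kruskal with blue-before-red tie-breaking is also correct: when $e\in F$ is examined, every red edge already in the forest has cost strictly below $b(e)$, and acyclicity of $F$ forces any connecting path to contain a red edge, contradicting the minimality defining $b(e)$; hence all of $F$ is added and an MST containing $F$ exists. The only omission is cosmetic: to conclude you should note that, since blue edges outside $F$ get price $+\infty$ and $G$ has a red spanning tree, every MST has blue part contained in $F$, and since each $b(e)$ is positive the revenue-maximizing MST (the follower's choice under the paper's tie-breaking convention) must contain all of $F$; hence the bought set is exactly $F$ and the revenue equals $\sum_{e\in F} b(e)$, matching the upper bound.
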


This lemma states that if we know the set of blue edges that will eventually be bought, the price of a selected
blue edge $st$ is given by the minimum, over the paths from $s$ to $t$, of the largest red cost on this path.

Motivated by this result, we introduce some more notations.
For a subset $Z \subseteq E(G)$ of edges, we define $\mc(Z)$ as the maximum
cost of a red edge in $Z$ if $Z \cap R(G) \neq \varnothing$, as $c_{0}=0$ otherwise.
(The two letters $\mc$ stand for ``max cost''.)
We define $\bneck{G}{F}{s}{t}$ as
$$
\bneck{G}{F}{s}{t}:= \left\{
\begin{array}{ll}
\min \left\{\mc(P) \mid P \in \paths{G}{F}{s}{t}\right\} &
\textrm{ if } \paths{G}{F}{s}{t} \neq \varnothing; \\
c_{k}  & \textrm{ otherwise}.
\end{array}
\right.
$$
Similarly,
$$
\redbneck{G}{F}{s}{t}:= \left\{
\begin{array}{ll}
\min \left\{\mc(P) \mid P \in \redpaths{G}{F}{s}{t}\right\} &
\textrm{ if } \redpaths{G}{F}{s}{t} \neq \varnothing; \\
c_{k}  & \textrm{ otherwise}.
\end{array}
\right.
$$
Thus, the price assigned to the edge $st\in F$ in Lemma~\ref{lem-opt-for-fixed-blue} is $\redbneck{G}{F}{s}{t}$.
Also, for the purpose of induction, we will consider graphs that do not necessarily contain a red spanning tree;
this is why we need to treat the case where $\paths{G}{F}{s}{t}$ or $\redpaths{G}{F}{s}{t}$ is empty in the above definitions.
\bigskip

In what follows, we let $[k] := \{ 0,1,\ldots , k\}$.
Our dynamic programming solution for series-parallel graphs associates a value to each pair $(H, q)$,
where $q\in [k]^2$, and $H$ is a graph appearing in the series-parallel decomposition of $G$.

A subset $F\subseteq B(G)$ of blue edges {\sl realizes} $q=(i,j)\in [k]^2$ in $(G,s,t)$ if
$F$ is acyclic and $\bneck{G}{F}{s}{t}=c_{i}$. Although this property does not depend on $j$,
the formulation will appear to be convenient. Similarly, we say that $q$ is {\sl realizable} in $(G,s,t)$ if
there exists such a subset $F$.

For $j\in [k]$ and distinct vertices $s,t\in V(G)$, let $G^+$ denote the graph
$G$ with an additional red edge between $s$ and $t$ of cost $c_j$.
We define
$$
\OPT{G}{(i,j)}{s}{t} := \max\left.\left\{ \sum_{uv \in F} \redbneck{G^{+}}{F}{u}{v}
\,\right| F\subseteq B(G), F \textrm{ realizes } (i,j)  \textrm{ in } (G,s,t) \right\},
$$
if such a subset $F$ exists, and set $\OPT{G}{(i,j)}{s}{t} := -\infty$ otherwise.\medskip

Intuitively, we want to keep track of optimal acyclic subsets of blue edges for every graph $G$
obtained during the construction of a series-parallel graph. The problem is, that the weights of the blue edges in the optimal solution
might change as we compose graphs in the series-parallel decomposition. However, the weights of edges
depend only on the maximum red costs, or {\em bottlenecks}, of the new $st$-paths that will be added to $G$. We can thus
prepare $\OPT{G}{}{s}{t}$ for every possible set of bottlenecks. These bottlenecks are the values $j$ in what precedes. The value
$i$ then corresponds to the new bottleneck that is realized, to be taken into account in future compositions.

Note that by Lemma~\ref{lem-opt-for-fixed-blue}, if $G$ has
a red spanning tree, then the maximum revenue achievable by the leader on instance $G$  equals
$$
\max_{i\in [k]}  \OPT{G}{(i,k)}{s}{t}.
$$
This will be the result returned by the dynamic programming solution.

\subsection{Series Compositions}
\label{sec-series}

Let $q=(i,j)$, $q_{1}=(i_{1},j_{1})$, and $q_{2}=(i_{2},j_{2})$, with $q,q_1,q_2\in [k]^2$.
We say that the pair $(q_{1},q_{2})$ is {\sl {\scomp}} with $q$ if
\begin{enumerate}[(S1)]
\item \label{cond-series-i} $\max\{i_{1}, i_{2}\} = i$;
\item \label{cond-series-j1} $\max\{j, i_{2}\} = j_{1}$, and
\item $\max\{j, i_{1}\} = j_{2}$,
\end{enumerate}
Notice that $(q_{1},q_{2})$ is {\scomp} with $q$ if and only if
$(q_{2},q_{1})$ is.

This condition allows us to use the following recursion in our dynamic programming algorithm.
\begin{lemma}
\label{lem-opt-series}
Suppose that
$(G,s,t)$ is a series composition of $(G_{1},s_{1},t_{1})$ and $(G_{2},s_{2},t_{2})$, and that
$q \in [k]^2$ is realizable in $(G,s,t)$. Then
$$
\OPT{G}{q}{s}{t} = \max \left\{ \OPT{G_{1}}{q_{1}}{s_{1}}{t_{1}} + \OPT{G_{2}}{q_{2}}{s_{2}}{t_{2}} \mid (q_1,q_2)\text{\ is\ \scomp\ with\ }q \right\}.
$$
\end{lemma}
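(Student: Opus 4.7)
The plan is to study any acyclic $F \subseteq B(G)$ realizing $q$ by splitting it along the cut vertex $t_1 = s_2$. Writing $F = F_1 \cup F_2$ with $F_\ell := F \cap B(G_\ell)$, the fact that $V(G_1) \cap V(G_2) = \{t_1\}$ implies that $F$ is acyclic in $G$ iff each $F_\ell$ is acyclic in $G_\ell$. Moreover, every $st$-path in $(V(G), R(G) \cup F)$ passes through $t_1$ and decomposes as the concatenation of an $s_1 t_1$-path in $(V(G_1), R(G_1) \cup F_1)$ with an $s_2 t_2$-path in $(V(G_2), R(G_2) \cup F_2)$. Taking minima gives
$$
\bneck{G}{F}{s}{t} \;=\; \max\bigl(\bneck{G_1}{F_1}{s_1}{t_1},\ \bneck{G_2}{F_2}{s_2}{t_2}\bigr),
$$
so setting $c_{i_\ell} := \bneck{G_\ell}{F_\ell}{s_\ell}{t_\ell}$ immediately yields condition (S1).

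The core technical step is the identity
$$
\redbneck{G^+}{F}{u}{v} \;=\; \redbneck{G_1^+}{F_1}{u}{v} \qquad \text{for every } uv \in F_1,
$$
where $G^+$ adds a red $st$-edge of cost $c_j$ and $G_1^+$ adds a red $s_1 t_1$-edge of cost $c_{j_1} := \max(c_j, c_{i_2})$, which is exactly condition (S2). To prove it, I would case-split on any candidate red $uv$-path $P$ in $(V(G^+), R(G^+) \cup F)$. Either $P$ stays inside $V(G_1)$, in which case $P$ uses only edges of $R(G_1) \cup F_1$ (since $t \notin V(G_1)$); or $P$ visits some vertex of $V(G_2) \setminus \{t_1\}$. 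In the latter case, because the only edge leaving $V(G_1)$ that does not pass through $t_1$ is the $st$-edge itself, simplicity of $P$ forces it to use the $st$-edge exactly once and to contain a unique $s_2 t_2$-subpath lying in $(V(G_2), R(G_2) \cup F_2)$. Minimizing the bottleneck of this subpath gives $c_{i_2}$; combined with the cost $c_j$ of the $st$-edge, the detour behaves precisely like a single red $s_1 t_1$-edge of cost $c_{j_1}$, which is the very edge added in $G_1^+$. A symmetric argument for $uv \in F_2$ yields condition (S3) with $c_{j_2} := \max(c_j, c_{i_1})$.

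With these identities, the revenue decomposes additively,
$$
\sum_{uv \in F} \redbneck{G^+}{F}{u}{v} \;=\; \sum_{uv \in F_1} \redbneck{G_1^+}{F_1}{u}{v} \;+\; \sum_{uv \in F_2} \redbneck{G_2^+}{F_2}{u}{v},
$$
and both inequalities of the lemma follow. For $\leq$, take the $F$ achieving $\OPT{G}{q}{s}{t}$, define $q_\ell := (i_\ell, j_\ell)$ from the values above, verify that $(q_1, q_2)$ is \scomp\ with $q$, and note that each $F_\ell$ realizes $q_\ell$ so that each summand on the right is at most $\OPT{G_\ell}{q_\ell}{s_\ell}{t_\ell}$. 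For $\geq$, given any \scomp\ $(q_1, q_2)$ with finite $\OPT$-values, pick witnesses $F_\ell$ for each side and set $F := F_1 \cup F_2$; by the analysis above, $F$ is acyclic, realizes $q$ by (S1), and achieves exactly the desired sum.

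The main obstacle I anticipate is the careful treatment of the degenerate cases where $\paths{G_2}{F_2}{s_2}{t_2}$ or $\redpaths{G^+}{F}{u}{v}$ is empty, which forces the relevant $w$- or $\tilde{w}$-value to the convention $c_k$. One must verify that inserting the fictitious red $s_1 t_1$-edge of cost $c_{j_1} = c_k$ into $G_1^+$ cannot spuriously lower $\redbneck{G_1^+}{F_1}{u}{v}$ below its counterpart in $G^+$; this holds because any $uv$-path actually using that edge already has bottleneck at least $c_k$, so the convention is self-consistent.
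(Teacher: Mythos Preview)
Your proposal is correct and follows essentially the same route as the paper. The paper factors the argument into two auxiliary lemmas --- one showing that $F_1\cup F_2$ realizes $q$ whenever the $F_\ell$ realize series-compatible $q_\ell$, and one proving the key identity $\redbneck{G^+}{F}{u}{v}=\redbneck{G_\ell^+}{F_\ell}{u}{v}$ via two separate claims for the two inequalities --- and then derives the recursion from these, exactly as you outline; your case split on whether the red $uv$-path stays inside $G_1$ or uses the $st$-edge together with an $s_2t_2$-subpath is precisely the paper's mechanism, and your handling of the degenerate case $\paths{G_2}{F_2}{s_2}{t_2}=\varnothing$ via $i_2=k$ matches the paper as well. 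The only point where you are a touch more compressed than the paper is in arguing both inequalities of the bottleneck identity: your sentence ``the detour behaves precisely like a single red $s_1t_1$-edge of cost $c_{j_1}$'' needs to be unpacked into (i) for a fixed $P$ in $G^+$ the actual $s_2t_2$-subpath has bottleneck at least $c_{i_2}$, so replacing the $s_1t_1$-portion by the $c_{j_1}$-edge can only lower $\mc$, and (ii) conversely, a path in $G_1^+$ using the $c_{j_1}$-edge can be expanded using an \emph{optimal} $s_2t_2$-path of bottleneck exactly $c_{i_2}$ to get an equal-$\mc$ path in $G^+$ --- but this is clearly what you intend.
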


We now prove that the recursion is valid. We need the following lemmas. In what follows,
$(G,s,t)$ is a series composition of $(G_{1},s_{1},t_{1})$ and $(G_{2},s_{2},t_{2})$;
$q, q_{1}, q_{2} \in [k]^2$ with $q=(i,j)$, $q_{1}=(i_{1},j_{1})$, and $q_{2}=(i_{2},j_{2})$
are such that $(q_{1},q_{2})$ is {\scomp} with $q$;
and $F_{\ell} \subseteq B(G_{\ell})$ realizes $q_{\ell}$ in $(G_{\ell},s,t)$, for $\ell=1,2$.

We first observe that $F := F_{1} \cup F_{2}$ realizes $q$.

\begin{lemma}
\label{lem-series-F}
$F$ realizes $q$  in $(G,s,t)$.
\end{lemma}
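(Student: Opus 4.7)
The plan is to verify the two conditions in the definition of ``$F$ realizes $q$'': that $F$ is acyclic, and that $\bneck{G}{F}{s}{t} = c_i$.

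For acyclicity, I would exploit the series structure: since $(G,s,t)$ is a series composition, $V(G_1) \cap V(G_2) = \{t_1\} = \{s_2\}$, so the vertex $t_1=s_2$ is a cut vertex separating the rest of $G_1$ from the rest of $G_2$. Any cycle in $F = F_1 \cup F_2$ that used edges from both $F_1$ and $F_2$ would have to traverse this single shared vertex at least twice, contradicting simplicity; hence any cycle in $F$ lies entirely in $F_1$ or entirely in $F_2$, and both are acyclic by hypothesis.

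For the bottleneck, the same cut-vertex observation forces every $st$-path $P$ in $(V(G), R(G) \cup F)$ to decompose uniquely as the concatenation of an $s_1 t_1$-path $P_1$ in $(V(G_1), R(G_1) \cup F_1)$ with an $s_2 t_2$-path $P_2$ in $(V(G_2), R(G_2) \cup F_2)$, and conversely every such pair yields an $st$-path in $G$. Therefore, when both $\paths{G_1}{F_1}{s_1}{t_1}$ and $\paths{G_2}{F_2}{s_2}{t_2}$ are nonempty,
\[
\bneck{G}{F}{s}{t} \;=\; \min_{P_1, P_2} \max\{\mc(P_1), \mc(P_2)\} \;=\; \max\!\left\{\min_{P_1}\mc(P_1),\; \min_{P_2}\mc(P_2)\right\},
\]
because each $\max$ is minimized by independently minimizing its arguments. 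The right-hand side equals $\max\{\bneck{G_1}{F_1}{s_1}{t_1}, \bneck{G_2}{F_2}{s_2}{t_2}\} = \max\{c_{i_1}, c_{i_2}\} = c_i$ by condition (S\ref{cond-series-i}).

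The main obstacle, and the only subtlety, is handling the corner cases in which one (or both) of $\paths{G_\ell}{F_\ell}{s_\ell}{t_\ell}$ is empty, where the definition of $w$ falls back to $c_k$. If, say, $\paths{G_1}{F_1}{s_1}{t_1} = \varnothing$, then no $st$-path exists in $G$ either, so $\bneck{G}{F}{s}{t} = c_k$; meanwhile the realization hypothesis forces $c_{i_1} = c_k$, hence $i_1 = k$, and then (S\ref{cond-series-i}) gives $i = \max\{i_1, i_2\} = k$, so $c_i = c_k$ as required. The case where both are empty is analogous. This completes the verification that $F$ realizes $q$ in $(G,s,t)$.
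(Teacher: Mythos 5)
Your proof is correct and follows essentially the same route as the paper: acyclicity from the cut vertex $t_1 = s_2$, and the bottleneck via the decomposition of $st$-paths into $s_1t_1$- and $s_2t_2$-segments. You are somewhat more careful than the paper, which asserts $\bneck{G}{F}{s}{t} = \max\{\bneck{G_1}{F_1}{s_1}{t_1}, \bneck{G_2}{F_2}{s_2}{t_2}\}$ without the explicit min-max interchange or the corner cases where one of the path sets is empty; your added detail is correct and harmless.
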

\begin{proof}
Since $V(G_{1}) \cap V(G_{2}) = \{t_{1}\}$ ($=\{s_{2}\}$),
the set $F$ is clearly acyclic. It remains to show $\bneck{G}{F}{s}{t}=c_{i}$.
Every $st$-path in $\paths{G}{F}{s}{t}$ is the combination of
an $s_{1}t_{1}$-path  of $\paths{G_{1}}{F_{1}}{s_{1}}{t_{1}}$ with
an $s_{2}t_{2}$-path  of $\paths{G_{2}}{F_{2}}{s_{2}}{t_{2}}$.
It follows
$$
\bneck{G}{F}{s}{t} = \max \left\{ \bneck{G_{1}}{F_{1}}{s_{1}}{t_{1}},
\bneck{G_{2}}{F_{2}}{s_{2}}{t_{2}}\right\}
= \max\{c_{i_{1}},c_{i_{2}}\}
= c_{i},
$$
where the last equality is from~(S\ref{cond-series-i}).
\begin{flushright}\qed\end{flushright}\end{proof}

The proof of the next lemma is illustrated on Figure~\ref{fig-series}. It motivates the definition of series-compatibility.
\begin{lemma}
\label{lem-series-bneck}
Let $G^{+}$ be the graph $G$ augmented with a red edge $st$ of cost $c_j$,
and $G_{\ell}^{+}$ (for $\ell = 1,2$) the graph $G_{\ell}$ augmented with a red edge $s_{\ell}t_{\ell}$ of cost $c_{j_{\ell}}$.
Then for $\ell=1,2$ and every edge $uv \in F_{\ell}$,
$$
\redbneck{G^{+}}{F}{u}{v} = \redbneck{G_{\ell}^{+}}{F_{\ell}}{u}{v}.
$$
\end{lemma}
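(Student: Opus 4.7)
The plan is to prove the equality via two inequalities, exploiting the fact that the auxiliary red edge $s_1 t_1$ of cost $c_{j_1}$ in $G_1^+$ plays the same role, for bottleneck purposes, as the detour in $G^+$ that uses the red edge $st$ (cost $c_j$) together with some path in $G_2$ from $t_2$ to $s_2$ using $F_2$ (bottleneck $c_{i_2}$). By condition (S\ref{cond-series-j1}), these two quantities agree: $c_{j_1} = \max(c_j, c_{i_2})$. We treat only the case $\ell = 1$; the case $\ell = 2$ is symmetric. Fix an edge $uv \in F_1$, so that $u,v \in V(G_1)$.

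For the direction $\redbneck{G^{+}}{F}{u}{v} \leq \redbneck{G_1^{+}}{F_1}{u}{v}$, we take any red $uv$-path $P$ in $G_1^+$ using $F_1$ that realizes the right-hand side. If $P$ avoids $s_1 t_1$, it is already a path in $G^+$ using $F$. Otherwise we replace its unique use of $s_1 t_1$ by the detour $s_1 \to t_2 \to \cdots \to t_1$, whose first step is the red $st$-edge of $G^+$ and whose remainder is a bottleneck-optimal $t_2 s_2$-path in $G_2$ using $F_2$ (which exists because $\bneck{G_2}{F_2}{s_2}{t_2} = c_{i_2}$). The detour's max red cost is $\max(c_j, c_{i_2}) = c_{j_1}$, matching the cost of the removed edge, so the bottleneck of the new path equals that of $P$. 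Simplicity is preserved because the detour's internal vertices lie in $V(G_2) \setminus \{t_1\}$, disjoint from the vertices of $P$; a red edge is still present (either one already in $P$, or the $st$-edge itself).

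For the reverse direction, we take a red $uv$-path $P'$ in $G^+$ using $F$. Since $V(G_1) \cap V(G_2) = \{t_1\}$ and $P'$ is simple, either $P'$ lies entirely in $G_1$, in which case it is already a valid path in $G_1^+$ using $F_1$, or $P'$ uses the $st$-edge exactly once and visits $V(G_2) \setminus \{t_1\}$ along a single contiguous subpath joining $s_1$ and $t_1$ via $t_2$. In the latter case we contract that subpath to the single red edge $s_1 t_1$ of $G_1^+$. The $G_2$-part of the detour is an $s_2 t_2$-path in $G_2$ using $F_2$, so its max red cost is at least $c_{i_2}$; the detour as a whole therefore has bottleneck at least $\max(c_j, c_{i_2}) = c_{j_1}$, and the contraction cannot increase the overall bottleneck. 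The contracted walk remains simple and contains $s_1 t_1$ as a red edge.

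The main obstacle is the structural case analysis in the reverse direction: one must argue carefully that any excursion of $P'$ into $V(G_2) \setminus \{t_1\}$ is forced by simplicity to enter and exit $V(G_2)$ exactly once, using both the shared vertex $t_1$ and the $st$-edge, so that a well-defined detour subpath can be contracted. The degenerate cases where $\redpaths{G^{+}}{F}{u}{v}$ or $\redpaths{G_1^{+}}{F_1}{u}{v}$ is empty are handled by the same correspondence, which shows these sets are simultaneously empty, so both sides of the claimed equality default to $c_k$.
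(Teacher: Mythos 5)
Your overall strategy mirrors the paper's: establish equality via two inequalities, in each case passing from a witness path on one side to a path on the other by either splicing in a detour through $G_2$ (replacing $e_1$ by $e$ plus a bottleneck-optimal $s_2t_2$-path $P_2$) or contracting such a detour back to $e_1$, using (S\ref{cond-series-j1}) to match bottlenecks. The ``$\geq$'' direction is sound. However, there is a genuine gap in the ``$\leq$'' direction and in your treatment of the degenerate cases.

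In the ``$\leq$'' direction you assert that a bottleneck-optimal $s_2t_2$-path $P_2$ ``exists because $\bneck{G_2}{F_2}{s_2}{t_2}=c_{i_2}$.'' This does not follow: by the paper's definition, $\bneck{G_2}{F_2}{s_2}{t_2}$ is set to $c_k$ when $\paths{G_2}{F_2}{s_2}{t_2}=\varnothing$, so the equality $\bneck{G_2}{F_2}{s_2}{t_2}=c_{i_2}$ is perfectly compatible with there being \emph{no} $s_2t_2$-path at all (namely when $i_2=k$). In that case the detour you build does not exist. The paper disposes of this case separately: if $\paths{G_2}{F_2}{s_2}{t_2}=\varnothing$ then $i_2=k$, hence by (S\ref{cond-series-j1}) $c_{j_1}=\max\{c_j,c_{i_2}\}=c_k$, so any $P_1$ using $e_1$ already has $\mc(P_1)\geq c_k\geq\redbneck{G^+}{F}{u}{v}$ and the inequality is trivial. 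Your proof must either include this case split or explicitly justify $P_2$'s existence. Relatedly, your claim that $\redpaths{G^+}{F}{u}{v}$ and $\redpaths{G_1^+}{F_1}{u}{v}$ are ``simultaneously empty'' is false in general: when $\paths{G_2}{F_2}{s_2}{t_2}=\varnothing$, the set $\redpaths{G_1^+}{F_1}{u}{v}$ may consist solely of paths through $e_1$ that cannot be transported to $G^+$, so it can be non-empty while $\redpaths{G^+}{F}{u}{v}$ is empty. The conclusion that both bottlenecks equal $c_k$ is still correct there, but it follows from the observation $c_{j_1}=c_k$, not from the simultaneous-emptiness claim you make.
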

\begin{proof}
We prove the statement for $\ell = 1$, the case $\ell = 2$ follows by symmetry.
Let $uv \in F_{1}$, and let $e=st$ and $e_{1}=s_{1}t_{1}$ be the
additional red edges in  $G^{+}$ and $G^{+}_{1}$, respectively.

We first show:

\begin{claim}
\label{claim-series-bneck-geq}
$\redbneck{G^{+}}{F}{u}{v} \geq \redbneck{G_{1}^{+}}{F_{1}}{u}{v}$.
\end{claim}
\begin{proof}
The claim is true if $\redpaths{G^{+}}{F}{u}{v}=\varnothing$, since then
$\redbneck{G^{+}}{F}{u}{v} = c_{k} \geq \redbneck{G_{1}^{+}}{F_{1}}{u}{v}$.
Suppose thus $\redpaths{G^{+}}{F}{u}{v} \neq \varnothing$,
and let $P \in \redpaths{G^{+}}{F}{u}{v}$.
It is enough to show that $\mc(P) \geq \redbneck{G_{1}^{+}}{F_{1}}{u}{v}$.
This clearly holds if $e \notin E(P)$, as $P$ belongs then
also to $\redpaths{G_{1}^{+}}{F_{1}}{u}{v}$ (recall that $|V(G_{1}) \cap V(G_{2})|=1$).
Hence, we may assume $e\in E(P)$. It follows $s_{1},t_{1} \in V(P)$.

Let $s_{1}Pt_{1}$ denote the subpath of $P$ comprised between $s_1$ and $t_1$.
Also let $P_{1}$ denote the path of $\redpaths{G_{1}^{+}}{F_{1}}{u}{v}$ obtained by replacing
the subpath $s_{1}Pt_{1}$ of $P$ with the edge $e_{1}$.
Using~(S\ref{cond-series-j1}), we obtain
$$
\mc(s_{1}Pt_{1}) = \max\{c_{j}, \mc(t_{2}Pt_{1})\} \geq \max\{c_{j}, c_{i_{2}}\}=c_{j_{1}},
$$
implying $\mc(P) \geq \mc(P_{1}) \geq \redbneck{G_{1}^{+}}{F_{1}}{u}{v}$.
\begin{flushright}\qed\end{flushright}\end{proof}

Conversely, we prove:

\begin{claim}
\label{claim-series-bneck-leq}
$\redbneck{G^{+}}{F}{u}{v} \leq \redbneck{G_{1}^{+}}{F_{1}}{u}{v}$.
\end{claim}
\begin{proof}
Again, this trivially holds if $\redpaths{G_{1}^{+}}{F_{1}}{u}{v}$ is empty.
Suppose thus $\redpaths{G_{1}^{+}}{F_{1}}{u}{v} \neq \varnothing$, and
let $P_{1} \in \redpaths{G_{1}^{+}}{F_{1}}{u}{v}$. Similarly as before, it is enough to show
that $\redbneck{G^{+}}{F}{u}{v} \leq \mc(P_{1})$. This is true
if $e_{1} \notin E(P_{1})$, since then $P_{1} \in \redpaths{G^{+}}{F}{u}{v}$.
Assume thus  $e_{1} \in E(P_{1})$.

If $\paths{G_{2}}{F_{2}}{s_{2}}{t_{2}} = \varnothing$, then $i_{2}=k$ and
$\mc(P_{1}) \geq c_{j_{1}} = \max \{c_{j}, c_{i_{2}}\} = c_{k} \geq \redbneck{G^{+}}{F}{u}{v}$
by~(S\ref{cond-series-j1}).
We may thus assume that $\paths{G_{2}}{F_{2}}{s_{2}}{t_{2}}$   contains a
path $P_{2}$; we choose $P_{2}$ such that $\mc(P_{2}) = c_{i_{2}}$.

Denote by $P$ the path obtained
from $P_{1}$ by replacing the edge $e_{1}$ with the combination
of edge $e$ and path $P_{2}$. Since $P \in \redpaths{G^{+}}{F}{u}{v}$,
(S\ref{cond-series-j1}) yields
\begin{align*}
\mc(P_{1})  &= \max\left\{ c_{j_{1}}, \mc(P_{1} - e_{1})\right\} \\
&= \max\left\{ c_{j}, c_{i_{2}}, \mc(P_{1} - e_{1})\right\} \\
&= \max\left\{c_{j}, \mc(P_{2}), \mc(P_{1} - e_{1})\right\} \\
&= \mc(P) \\
&\geq \redbneck{G^{+}}{F}{u}{v}.
\end{align*}
\begin{flushright}\qed\end{flushright}\end{proof}

The lemma follows from Claims~\ref{claim-series-bneck-geq} and~\ref{claim-series-bneck-leq}.
\begin{flushright}\qed\end{flushright}\end{proof}

\begin{figure}
\begin{center}
\includegraphics[scale=.3, angle=-90]{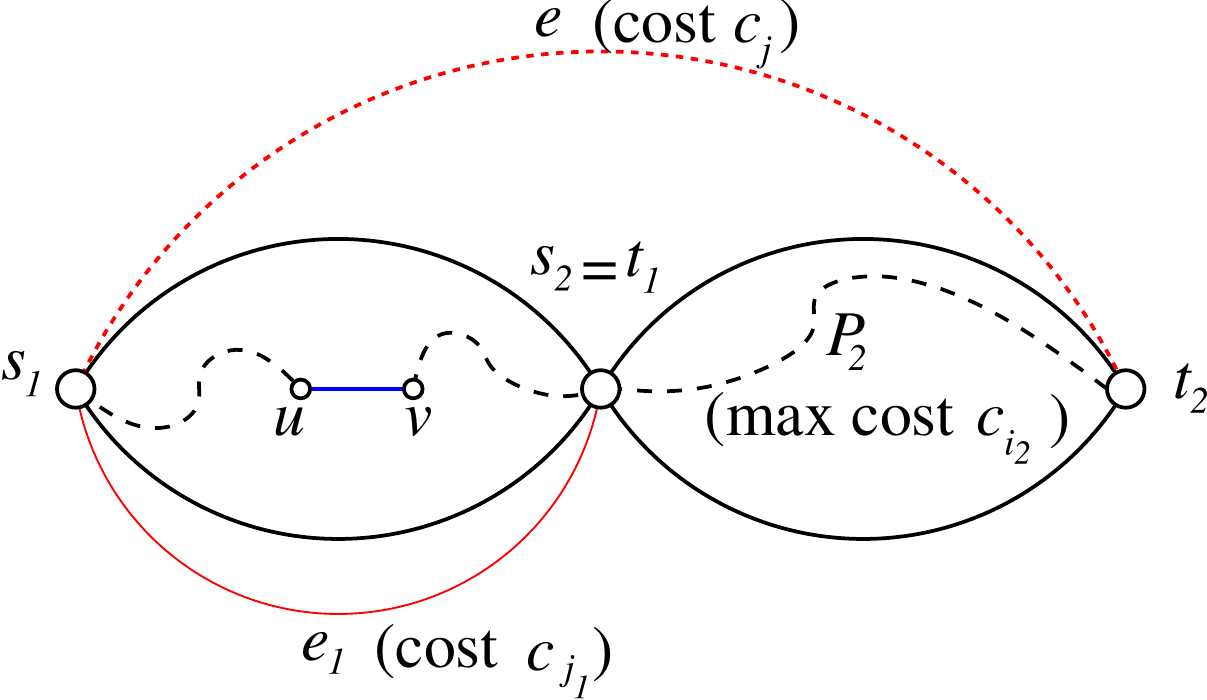}
\end{center}
\caption{\label{fig-series}Series composition: illustration of the proof of Lemma~\ref{lem-series-bneck}.}
\end{figure}

We are now ready to prove the correctness of the recursion step in Lemma~\ref{lem-opt-series}.
\begin{proof}[Proof of Lemma~\ref{lem-opt-series}]
Let $q$ and $G^{+}$ be defined as before.
We first show:
\begin{claim}
\label{claim-opt-series-exists}
There exist $q_{1}, q_{2}\in [k]^2$ such that $(q_{1},q_{2})$ is {\scomp} with $q$
and $\OPT{G}{q}{s}{t} \leq \OPT{G_{1}}{q_{1}}{s}{t} + \OPT{G_{2}}{q_{2}}{s}{t}$.
\end{claim}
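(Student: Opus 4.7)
The plan is to start from an optimal subset $F \subseteq B(G)$ that realizes $q=(i,j)$ in $(G,s,t)$ and achieves $\OPT{G}{q}{s}{t}$, and to split it along the series-composition vertex. Concretely, set $F_\ell := F \cap B(G_\ell)$ for $\ell=1,2$. Each $F_\ell$ is acyclic as a subset of the acyclic set $F$, so it suffices to choose the indices $i_\ell, j_\ell$ correctly to make $q_\ell := (i_\ell, j_\ell)$ series-compatible with $q$ while also having $F_\ell$ realize $q_\ell$ in $(G_\ell, s_\ell, t_\ell)$.

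The natural choice is to let $i_\ell \in [k]$ be the index with $\bneck{G_\ell}{F_\ell}{s_\ell}{t_\ell} = c_{i_\ell}$, and then to \emph{define} $j_1 := \max\{j, i_2\}$ and $j_2 := \max\{j, i_1\}$. With these definitions, conditions (S\ref{cond-series-j1}) and (S3) of series-compatibility hold by construction, and $F_\ell$ realizes $q_\ell = (i_\ell, j_\ell)$ in $(G_\ell, s_\ell, t_\ell)$ by choice of $i_\ell$. The remaining condition (S\ref{cond-series-i}), namely $\max\{i_1, i_2\} = i$, is where the structure of the series composition enters: every $st$-path in $\paths{G}{F}{s}{t}$ is obtained by concatenating an $s_1t_1$-path of $\paths{G_1}{F_1}{s_1}{t_1}$ with an $s_2t_2$-path of $\paths{G_2}{F_2}{s_2}{t_2}$ through the shared vertex $t_1=s_2$, and conversely any such concatenation is an $st$-path. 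This yields
\[
\bneck{G}{F}{s}{t} = \max\bigl\{\bneck{G_1}{F_1}{s_1}{t_1}, \bneck{G_2}{F_2}{s_2}{t_2}\bigr\} = \max\{c_{i_1}, c_{i_2}\},
\]
which combined with $\bneck{G}{F}{s}{t}=c_i$ gives (S\ref{cond-series-i}). The empty-path edge cases need to be checked separately: if, say, $\paths{G_2}{F_2}{s_2}{t_2}=\varnothing$, then $i_2=k$ by convention and $\paths{G}{F}{s}{t}=\varnothing$, forcing $i=k$; so (S\ref{cond-series-i}) still holds. This is the main bookkeeping obstacle but is routine.

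Once series-compatibility is established, Lemma~\ref{lem-series-bneck} applies to the augmented graphs $G^+$ (with red edge of cost $c_j$) and $G_\ell^+$ (with red edge of cost $c_{j_\ell}$), giving $\redbneck{G^+}{F}{u}{v} = \redbneck{G_\ell^+}{F_\ell}{u}{v}$ for each $uv \in F_\ell$. Summing over $F = F_1 \sqcup F_2$ yields
\[
\OPT{G}{q}{s}{t} = \sum_{uv \in F_1} \redbneck{G_1^+}{F_1}{u}{v} + \sum_{uv \in F_2} \redbneck{G_2^+}{F_2}{u}{v}.
\]
Since $F_\ell$ realizes $q_\ell$ in $(G_\ell, s_\ell, t_\ell)$, each of the two sums is bounded above by $\OPT{G_\ell}{q_\ell}{s_\ell}{t_\ell}$ by definition of $\OPTsymbol$, and the claimed inequality follows. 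Thus the witnesses $q_1, q_2$ chosen above yield exactly what Claim~\ref{claim-opt-series-exists} asks for.
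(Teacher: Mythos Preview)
Your proof is correct and follows essentially the same approach as the paper: take an optimal realizer $F$, split it as $F_\ell = F \cap B(G_\ell)$, define $i_\ell$ from $\bneck{G_\ell}{F_\ell}{s_\ell}{t_\ell}$ and $j_\ell$ by the series-compatibility formulas, then invoke Lemma~\ref{lem-series-bneck} and bound each sum by the corresponding $\OPTsymbol$. The paper's proof is terser (it merely asserts that series-compatibility is ``easily verified''), whereas you spell out the concatenation argument for (S\ref{cond-series-i}) and the empty-path edge case explicitly, but there is no substantive difference.
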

\begin{proof}
Let $F\subseteq B(G)$ be a subset of blue edges realizing
$q$ in $(G,s,t)$ such that
$$
\OPT{G}{q}{s}{t} = \sum_{uv \in F} \redbneck{G^{+}}{F}{u}{v}.
$$
For $\ell=1,2$, let also $F_{\ell}:=F \cap E(G_{\ell})$ and $q_{\ell} :=(i_{\ell}, j_{\ell})$, with
$i_{\ell}$ the index such that $c_{i_{\ell}} = \bneck{G_{\ell}}{F_{\ell}}{s_{\ell}}{t_{\ell}}$, and $j_{\ell} := \max\{j, i_{\ell+1}\}$
(indices are taken modulo 2).
$F_{\ell}$ ($\ell=1,2$) clearly realizes $q_{\ell}$  in $(G_{\ell},s_{\ell},t_{\ell})$.
It is also easily verified that $(q_{1}, q_{2})$
is {\scomp} with $q$. Hence we can apply Lemma~\ref{lem-series-bneck}:
\begin{align*}
\OPT{G}{q}{s}{t} &= \sum_{uv \in F} \redbneck{G^{+}}{F}{u}{v} \\
 &= \sum_{uv \in F_{1}} \redbneck{G^{+}_{1}}{F_{1}}{u}{v}
+  \sum_{uv \in F_{2}} \redbneck{G^{+}_{2}}{F_{2}}{u}{v} \\
&\leq \OPT{G_{1}}{q_{1}}{s_{1}}{t_{1}} + \OPT{G_{2}}{q_{2}}{s_{2}}{t_{2}},
\end{align*}
as claimed.
\begin{flushright}\qed\end{flushright}\end{proof}

We now prove:

\begin{claim}
\label{claim-opt-series-forall}
$\OPT{G}{q}{s}{t} \geq \OPT{G_{1}}{q_{1}}{s_{1}}{t_{1}} + \OPT{G_{2}}{q_{2}}{s_{2}}{t_{2}}$
holds for every $q_{1}, q_{2}\in [k]^2$ such that $(q_{1},q_{2})$ is {\scomp} with $q$.
\end{claim}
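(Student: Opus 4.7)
My plan is to establish the inequality by exhibiting an explicit witness on the left-hand side: given any series-compatible pair $(q_1,q_2)$, I combine optimal realizers on the two sides into a single realizer for $q$ on $G$ and then show that its revenue matches exactly the sum on the right.

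First I would dispose of the degenerate case. If $\OPT{G_1}{q_1}{s_1}{t_1} = -\infty$ or $\OPT{G_2}{q_2}{s_2}{t_2} = -\infty$, then the right-hand side is $-\infty$ and there is nothing to prove. So I may assume both optima are finite, and pick, for $\ell = 1, 2$, a set $F_\ell \subseteq B(G_\ell)$ realizing $q_\ell$ in $(G_\ell, s_\ell, t_\ell)$ and achieving $\OPT{G_\ell}{q_\ell}{s_\ell}{t_\ell} = \sum_{uv \in F_\ell} \redbneck{G_\ell^{+}}{F_\ell}{u}{v}$.

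Next I would set $F := F_1 \cup F_2$. By Lemma~\ref{lem-series-F}, $F$ is acyclic and realizes $q$ in $(G, s, t)$, so $F$ is a feasible set in the maximization defining $\OPT{G}{q}{s}{t}$. Therefore
\[
\OPT{G}{q}{s}{t} \; \geq \; \sum_{uv \in F} \redbneck{G^{+}}{F}{u}{v}.
\]
Since $V(G_1)\cap V(G_2) = \{t_1\} = \{s_2\}$, the edge sets $F_1$ and $F_2$ are disjoint, so I can split this sum as
\[
\sum_{uv \in F} \redbneck{G^{+}}{F}{u}{v}
= \sum_{uv \in F_1} \redbneck{G^{+}}{F}{u}{v} + \sum_{uv \in F_2} \redbneck{G^{+}}{F}{u}{v}.
\]

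Now I would apply Lemma~\ref{lem-series-bneck} term by term: for every $uv \in F_\ell$, we have $\redbneck{G^{+}}{F}{u}{v} = \redbneck{G_\ell^{+}}{F_\ell}{u}{v}$. Substituting, the right-hand side above equals $\OPT{G_1}{q_1}{s_1}{t_1} + \OPT{G_2}{q_2}{s_2}{t_2}$, which combined with the displayed inequality yields the claim. No genuine obstacle arises here: the two heavy lifts — preservation of the bottleneck value under union, and invariance of the individual edge prices under the series composition — have already been handled by Lemmas~\ref{lem-series-F} and~\ref{lem-series-bneck} respectively, so this direction is essentially a bookkeeping argument that packages the witness and invokes those lemmas.
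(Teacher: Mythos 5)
Your proposal is correct and follows essentially the same route as the paper: take optimal realizers $F_1, F_2$, form $F = F_1 \cup F_2$, invoke Lemma~\ref{lem-series-F} for feasibility and Lemma~\ref{lem-series-bneck} to transfer the per-edge values, then sum. Your explicit treatment of the degenerate case where one side is $-\infty$ is a minor addition the paper leaves implicit, but otherwise the arguments coincide.
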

\begin{proof}
Suppose that $(q_{1},q_{2})$ is {\scomp} with $q$.
Let $F_{\ell}\subseteq B(G_{\ell})$
($\ell=1,2$) be a subset of blue edges of $G_{\ell}$ such that
$$
\OPT{G_{\ell}}{q_{\ell}}{s_{\ell}}{t_{\ell}}
= \sum_{uv \in F_{\ell}} \redbneck{G^{+}_{\ell}}{F_{\ell}}{u}{v}.
$$
By Lemma~\ref{lem-series-F}, $F:= F_{1} \cup F_{2}$ realizes
$q$ in $(G,s,t)$. Using again Lemma~\ref{lem-series-bneck}, we have:
\begin{align*}
\OPT{G}{q}{s}{t} &\geq \sum_{uv \in F} \redbneck{G^{+}}{F}{u}{v} \\
&= \sum_{uv \in F_{1}} \redbneck{G^{+}_{1}}{F_{1}}{u}{v}
+  \sum_{uv \in F_{2}} \redbneck{G^{+}_{2}}{F_{2}}{u}{v} \\
&= \OPT{G_{1}}{q_{1}}{s_{1}}{t_{1}} + \OPT{G_{2}}{q_{2}}{s_{2}}{t_{2}},
\end{align*}
and the claim follows.
\begin{flushright}\qed\end{flushright}\end{proof}

The lemma follows from Claims~\ref{claim-opt-series-exists} and~\ref{claim-opt-series-forall}.
\begin{flushright}\qed\end{flushright}\end{proof}

\subsection{Parallel Compositions}
\label{sec-parallel}

The recursion step for parallel compositions follows a similar scheme.
Let $q,q_1,q_2\in [k]^2$ with $q=(i,j)$, $q_{1}=(i_{1},j_{1})$, and $q_{2}=(i_{2},j_{2})$.
We say that the pair $(q_{1},q_{2})$ is {\sl {\pcomp}} with $q$ if
\begin{enumerate}[(P1)]
\item \label{cond-parallel-one-false} at least one of $i_{1},i_{2}$ is non-zero;
\item \label{cond-parallel-i} $\min\{i_{1}, i_{2}\} = i$;
\item \label{cond-parallel-j1} $\min\{j, i_{2}\} = j_{1}$, and
\item $\min\{j, i_{1}\} = j_{2}$,
\end{enumerate}
The recursion step for parallel composition is as follows.

\begin{lemma}
\label{lem-opt-parallel}
Suppose that $(G,s,t)$ is a parallel composition of $(G_{1},s,t)$ and $(G_{2},s,t)$, and that
$q \in [k]^2$ is realizable in $(G,s,t)$. Then
$$
\OPT{G}{q}{s}{t} = \max \{ \OPT{G_{1}}{q_{1}}{s}{t} + \OPT{G_{2}}{q_{2}}{s}{t} \mid
(q_{1},q_{2})\text{\ is \pcomp\ with\ } q \}.
$$
\end{lemma}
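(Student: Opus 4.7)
The plan is to mirror the proof of Lemma~\ref{lem-opt-series} throughout, with $\max$ replaced by $\min$ and with extra care for condition (P\ref{cond-parallel-one-false}), which has no series analogue. Following the series template, I will establish two auxiliary lemmas analogous to Lemmas~\ref{lem-series-F} and~\ref{lem-series-bneck}, and then assemble the main claim from them via the two-claim scheme of the series proof.

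The first auxiliary lemma states that if $(q_1, q_2)$ is \pcomp{} with $q$ and $F_\ell$ realizes $q_\ell$ in $(G_\ell, s, t)$ for $\ell = 1, 2$, then $F := F_1 \cup F_2$ realizes $q$ in $(G, s, t)$. Acyclicity of $F$ is the only new subtlety: since $V(G_1) \cap V(G_2) = \{s, t\}$ and each $F_\ell$ is acyclic, any cycle in $F$ would consist of an $st$-path from $F_1$ together with an $st$-path from $F_2$; such paths being purely blue would force $i_1 = i_2 = 0$, which (P\ref{cond-parallel-one-false}) forbids. The bottleneck identity $\bneck{G}{F}{s}{t} = \min\{c_{i_1}, c_{i_2}\} = c_i$ then follows because every simple $st$-path in $\paths{G}{F}{s}{t}$ lies entirely in one of the two subgraphs, so $\paths{G}{F}{s}{t}$ is the union of $\paths{G_1}{F_1}{s}{t}$ and $\paths{G_2}{F_2}{s}{t}$, and (P\ref{cond-parallel-i}) handles the arithmetic.

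The second auxiliary lemma states $\redbneck{G^+}{F}{u}{v} = \redbneck{G^+_\ell}{F_\ell}{u}{v}$ for every $uv \in F_\ell$. I argue for $\ell = 1$, with $\ell = 2$ symmetric. The structural observation is that any simple $uv$-path $P$ in the graph underlying $G^+$ together with $F$ either stays in $G_1^+$, or uses exactly one of the following: the extra edge $e$ of cost $c_j$, or a single detour through $V(G_2) \setminus \{s, t\}$ whose $\mc$-value is at least $c_{i_2}$; simplicity forbids using both, since $s$ and $t$ are each visited at most once. Because the extra edge $e_1$ of $G_1^+$ has cost $c_{j_1} = \min\{c_j, c_{i_2}\}$ by (P\ref{cond-parallel-j1}), substituting either alternative by $e_1$ can only lower the bottleneck, which gives the analogue of Claim~\ref{claim-series-bneck-geq}. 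For the analogue of Claim~\ref{claim-series-bneck-leq}, a path $P_1$ in $\redpaths{G_1^+}{F_1}{u}{v}$ using $e_1$ is lifted to $G^+$ by replacing $e_1$ with either $e$ (if $c_{j_1} = c_j$) or a minimum-$\mc$ path in $\paths{G_2}{F_2}{s}{t}$ (if $c_{j_1} = c_{i_2}$); the degenerate case $c_{i_2} = c_k$ collapses into the first alternative because $c_j \leq c_k$ forces $c_{j_1} = c_j$ there.

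With these two lemmas in hand, the main claim follows exactly as in the series case. The $\geq$ direction combines optimal $F_1, F_2$ into $F$ and splits the sum of prices via the second auxiliary lemma. The $\leq$ direction takes an optimal $F$ realizing $q$, sets $F_\ell := F \cap E(G_\ell)$, defines $c_{i_\ell} := \bneck{G_\ell}{F_\ell}{s}{t}$ and $j_\ell := \min\{j, i_{3-\ell}\}$, and checks that $(q_1, q_2)$ is \pcomp{} with $q$. The main obstacle, and the only genuinely new ingredient relative to the series case, is verifying (P\ref{cond-parallel-one-false}) in the $\leq$ direction: this holds because acyclicity of $F$ prevents $F_1$ and $F_2$ from both containing a blue $st$-path, which is exactly what $i_1 = i_2 = 0$ would force.
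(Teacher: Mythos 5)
Your overall route is exactly the paper's: it proves the two analogues of your auxiliary lemmas (Lemmas~\ref{lem-parallel-F} and~\ref{lem-parallel-bneck}) and then observes that the assembly into Lemma~\ref{lem-opt-parallel} is verbatim the two-claim argument from the series case, which is also how you conclude. Your acyclicity argument for $F=F_1\cup F_2$, the verification of (P\ref{cond-parallel-one-false}) in the $\leq$ direction of the assembly, and the handling of the degenerate case $i_2=k$ all match the paper. (A small phrasing point: the part of a $uv$-path outside $G_1$ need not pass through $V(G_2)\setminus\{s,t\}$ --- it can be a single $G_2$-edge joining $s$ and $t$ --- but your bound $\mc \geq c_{i_2}$ only uses that it is an $st$-path in $(V(G_2),R(G_2)\cup F_2)$, so this is harmless.)

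There is, however, one genuine gap, in the direction $\redbneck{G^{+}}{F}{u}{v} \leq \redbneck{G_{1}^{+}}{F_{1}}{u}{v}$ of your second auxiliary lemma (the paper's Claim~\ref{claim-par-leq}). When $c_{j_1}=c_{i_2}<c_j$ you replace $e_1$ in $P_1$ by a minimum-$\mc$ path $P_2\in\paths{G_2}{F_2}{s}{t}$ and implicitly treat the resulting path $P$ as a member of $\redpaths{G^{+}}{F}{u}{v}$; but that set only contains paths with at least one red edge, and this is not automatic here: $P_2$ may be entirely blue (e.g.\ if $i_2=0$), and the only red edge of $P_1$ may have been $e_1$ itself, which you just deleted. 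If $P$ were all blue it would witness nothing about $\redbneck{G^{+}}{F}{u}{v}$, so the inequality would not follow as stated. The paper closes exactly this hole using acyclicity of $F$ (available from your first auxiliary lemma): were $P$ entirely blue, all its edges would lie in $F$, so $P\cup\{uv\}$ would be a cycle in $(V(G),F)$, a contradiction; hence $P$ does contain a red edge. Note that this subtlety is specific to the parallel case --- in the series proof of Claim~\ref{claim-series-bneck-leq} the red edge $e$ is always inserted together with $P_2$, so the lifted path automatically contains a red edge --- so it is a second parallel-specific ingredient besides (P\ref{cond-parallel-one-false}). With that observation added, your proof is complete and coincides with the paper's.
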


In what follows, $(G,s,t)$ is a parallel composition of $(G_{1},s,t)$ and $(G_{2},s,t)$;
$(q_{1},q_{2})$ is {\pcomp} with $q$; and
$F_{\ell} \subseteq B(G_{\ell})$ realizes $q_{\ell}$ in $(G_{\ell},s,t)$, for $\ell=1,2$.
Also, $F := F_{1} \cup F_{2}$.

Similarly to Lemma~\ref{lem-series-F}, the definition of parallel-compatibility implies the following lemma.
\begin{lemma}
\label{lem-parallel-F}
$F$ realizes $q$ in $(G,s,t)$.
\end{lemma}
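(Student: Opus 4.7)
The plan is to mirror the structure of Lemma~\ref{lem-series-F}: first establish that $F$ is acyclic, then compute $\bneck{G}{F}{s}{t}$ using condition (P\ref{cond-parallel-i}). In what follows let $q = (i,j)$ and $q_\ell = (i_\ell, j_\ell)$.

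For acyclicity, I would argue by contradiction. Suppose $F$ contains a cycle $C$. Since $F_1$ and $F_2$ are individually acyclic and $V(G_1) \cap V(G_2) = \{s,t\}$, the cycle $C$ must use edges from both $F_1$ and $F_2$ and can only switch between the two sides at the vertices $s$ and $t$. By simplicity, $C$ therefore decomposes into exactly two $st$-paths, one inside $F_1$ and one inside $F_2$, each consisting only of blue edges. For $\ell = 1,2$, the existence of an all-blue $st$-path in $F_\ell$ gives $\mc = 0$ on that path (red costs are strictly positive), so $\bneck{G_\ell}{F_\ell}{s}{t} = c_0 = 0$, i.e.\ $i_\ell = 0$. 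This forces $i_1 = i_2 = 0$, contradicting~(P\ref{cond-parallel-one-false}). Hence $F$ is acyclic.

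For the bottleneck, the key structural observation is that because $V(G_1) \cap V(G_2) = \{s,t\}$ and paths are simple, every $st$-path in the graph $(V(G), R(G) \cup F)$ lies entirely inside either $G_1$ or $G_2$. Consequently
\[
\paths{G}{F}{s}{t} \;=\; \paths{G_1}{F_1}{s}{t} \,\cup\, \paths{G_2}{F_2}{s}{t},
\]
so, when at least one side is nonempty,
\[
\bneck{G}{F}{s}{t} \;=\; \min\!\left\{ \bneck{G_1}{F_1}{s}{t},\, \bneck{G_2}{F_2}{s}{t} \right\} \;=\; \min\{c_{i_1}, c_{i_2}\} \;=\; c_i,
\]
the last equality being~(P\ref{cond-parallel-i}). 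The corner case where both path sets are empty is handled by the $c_k$ convention: then $\paths{G}{F}{s}{t} = \varnothing$ as well, so $\bneck{G}{F}{s}{t} = c_k$, while $i_1 = i_2 = k$ forces $i = k$ via~(P\ref{cond-parallel-i}), so the equation still holds. Thus $F$ realizes $q$ in $(G,s,t)$.

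The only nontrivial step is the acyclicity argument, where one must notice that (P\ref{cond-parallel-one-false}) is designed precisely to forbid the situation that produces a cycle, namely parallel all-blue $st$-paths on the two sides; the path-decomposition identity needed for the bottleneck computation is immediate from the parallel-composition structure.
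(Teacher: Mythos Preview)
Your proof is correct and follows essentially the same approach as the paper's: contradiction for acyclicity via (P\ref{cond-parallel-one-false}) using that a cycle in $F$ would yield all-blue $st$-paths on both sides, then the parallel path decomposition plus (P\ref{cond-parallel-i}) for the bottleneck. Your version is slightly more explicit (handling the empty-path-set corner case and spelling out why $i_\ell=0$), but the argument is the same.
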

\begin{proof}
We have to prove that $F$ is acyclic and that $\bneck{G}{F}{s}{t}=c_{i}$.

First, suppose that $(V(G), F)$ contains a cycle $C$.
Since $F_1$ and $F_{2}$ are both acyclic, $C$ includes the vertices $s$ and $t$, and moreover
$E(G_1) \cap E(C)$, $E(G_2) \cap E(C)$ are both non-empty.
But then, there is an $st$-path in $(V(G), F_{\ell})$ for $\ell=1,2$,
implying $i_{1}=i_{2}=0$, which contradicts
(P\ref{cond-parallel-one-false}). Hence, $F$ is acyclic.

Now, since each path of $\paths{G}{F}{s}{t}$ is included in
either $\paths{G_{1}}{F_{1}}{s}{t}$ or $\paths{G_{2}}{F_{2}}{s}{t}$,
it follows $\bneck{G}{F}{s}{t} =
\min\{ \bneck{G_{1}}{F_{1}}{s}{t}, \bneck{G_{2}}{F_{2}}{s}{t}\} = \min\{c_{i_{1}},c_{i_{2}}\}$,
which equals $c_{i}$ by (P\ref{cond-parallel-i}).
\begin{flushright}\qed\end{flushright}\end{proof}

The next lemma is the analogue of Lemma~\ref{lem-series-bneck} for parallel compositions.
\begin{lemma}
\label{lem-parallel-bneck}
Let $G^{+}$ be the graph $G$ augmented with a red edge $st$ of cost $c_j$,
and let $G_{\ell}^{+}$ (for $\ell = 1,2$) be the graph $G_{\ell}$ augmented with a red edge $s_{\ell}t_{\ell}$ of cost $c_{j_{\ell}}$.
Then for $\ell=1,2$ and every edge $uv \in F_{\ell}$,
$$
\redbneck{G^{+}}{F}{u}{v} = \redbneck{G_{\ell}^{+}}{F_{\ell}}{u}{v}.
$$
\end{lemma}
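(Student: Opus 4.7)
The plan is to prove the two inequalities $\redbneck{G^{+}}{F}{u}{v} \ge \redbneck{G_{1}^{+}}{F_{1}}{u}{v}$ and $\redbneck{G^{+}}{F}{u}{v} \le \redbneck{G_{1}^{+}}{F_{1}}{u}{v}$ separately, mirroring the proof of Lemma~\ref{lem-series-bneck}; the case $\ell = 2$ follows by symmetry since conditions (P\ref{cond-parallel-one-false})--(P\ref{cond-parallel-j1}) are invariant under swapping the two indices. Write $e$ for the new red edge of cost $c_{j}$ in $G^{+}$ and $e_{1}$ for the new red edge of cost $c_{j_{1}}$ in $G_{1}^{+}$. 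The basic structural observation, used throughout, is that $V(G_{1}) \cap V(G_{2}) = \{s,t\}$, so every simple $uv$-path in $G^{+}$ either lies entirely inside $G_{1} \cup \{e\}$, or else contains exactly one subpath $Q$ through $G_{2}$ joining $s$ and $t$; it cannot contain both such a $Q$ and the edge $e$, since that would revisit $s$ or $t$.

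For the $\ge$ inequality, I take an arbitrary $P \in \redpaths{G^{+}}{F}{u}{v}$ and build $P' \in \redpaths{G_{1}^{+}}{F_{1}}{u}{v}$ with $\mc(P') \le \mc(P)$ by a local substitution. If $P$ already lies in $G_{1}^{+}$ using only $G_{1}$-edges, set $P' := P$. If $P$ uses $e$, replace $e$ by $e_{1}$; since $c_{j_{1}} \le c_{j}$ by (P\ref{cond-parallel-j1}), $\mc$ does not increase. If $P$ uses a $G_{2}$-subpath $Q$ from $s$ to $t$, replace $Q$ by $e_{1}$; condition (P\ref{cond-parallel-j1}) and the definition $c_{i_{2}} = \bneck{G_{2}}{F_{2}}{s}{t}$ give $c_{j_{1}} \le c_{i_{2}} \le \mc(Q)$. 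In all three cases $P'$ is a red path, either inheriting a red edge of $P$ or gaining $e_{1}$.

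For the $\le$ inequality, given $P_{1} \in \redpaths{G_{1}^{+}}{F_{1}}{u}{v}$, the only non-trivial case is $e_{1} \in P_{1}$. Since $c_{j_{1}} = \min\{c_{j}, c_{i_{2}}\}$, I split into two sub-cases. If $c_{j_{1}} = c_{j}$ (which includes $\paths{G_{2}}{F_{2}}{s}{t} = \varnothing$, where $i_{2} = k$), I replace $e_{1}$ by $e$ to obtain a path $P \in \redpaths{G^{+}}{F}{u}{v}$ with $\mc(P) = \mc(P_{1})$. Otherwise $c_{j_{1}} = c_{i_{2}} < c_{j}$, and there exists $P_{2} \in \paths{G_{2}}{F_{2}}{s}{t}$ with $\mc(P_{2}) = c_{i_{2}}$; I replace $e_{1}$ by $P_{2}$. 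Since $V(P_{2}) \cap V(G_{1}) = \{s,t\}$, the result $P$ is a simple $uv$-path in $G^{+}$, and $\mc(P) = \max\{\mc(P_{1} - e_{1}), c_{i_{2}}\} \le \mc(P_{1})$.

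The main obstacle, and the only place where condition (P\ref{cond-parallel-one-false}) is needed, is checking that $P$ still contains a red edge in the last construction. The danger arises only when $i_{2} = 0$, so that $P_{2}$ is all blue, and simultaneously $P_{1} - e_{1}$ is all blue; then $P$ would be entirely blue. I rule this scenario out: if it occurred, the two blue sub-paths of $P_{1}$ from $u$ to one endpoint of $e_{1}$ and from the other endpoint to $v$, together with the blue edge $uv \in F_{1}$, would exhibit a blue $s$-$t$ walk in the forest $F_{1}$. Acyclicity of $F_{1}$ forces the three pieces to be internally vertex-disjoint, so this walk reduces to an all-blue simple $s$-$t$ path in $F_{1}$, implying $\bneck{G_{1}}{F_{1}}{s}{t} = 0$ and hence $i_{1} = 0$. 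Combined with $i_{2} = 0$, this contradicts~(P\ref{cond-parallel-one-false}). Taking minima over $P$ and $P_{1}$ in the two directions yields the claimed equality.
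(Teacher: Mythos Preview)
Your proof is correct and follows essentially the same approach as the paper's: both directions are handled by the same local substitutions (swap $e$ or a $G_{2}$-subpath for $e_{1}$, and vice versa), justified via (P\ref{cond-parallel-j1}). The only noteworthy difference is in how you certify that the path built in the $\le$ direction still contains a red edge: the paper appeals to Lemma~\ref{lem-parallel-F} (acyclicity of $F=F_1\cup F_2$) to rule out an all-blue cycle $P\cup\{uv\}$, whereas you argue directly that an all-blue $P$ would force $i_{1}=i_{2}=0$, contradicting (P\ref{cond-parallel-one-false}); these are equivalent, since Lemma~\ref{lem-parallel-F} itself uses (P\ref{cond-parallel-one-false}) to obtain acyclicity.
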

\begin{proof}
We prove the statement for $\ell=1$, the case $\ell=2$ follows by symmetry.
Let $e=st$ and $e_{1}=s_{1}t_{1}$ be the additional red edges in $G^{+}$ and $G^{+}_{1}$, respectively.

Let $uv \in F_{1}$.
Observe  that $\redpaths{G^{+}}{F}{u}{v}$ is empty if and only if
$\redpaths{G_{1}^{+}}{F_{1}}{u}{v}$ is. If both are empty, then
$\redbneck{G^{+}}{F}{u}{v} = \redbneck{G_{1}^{+}}{F_{1}}{u}{v}=c_{k}$, and the claim holds.
Hence, we may assume
$\redpaths{G^{+}}{F}{u}{v} \neq \varnothing$ and  $\redpaths{G_{1}^{+}}{F_{1}}{u}{v} \neq \varnothing$.

We first show:

\begin{claim}
\label{claim-par-leq}
$\redbneck{G^{+}}{F}{u}{v} \leq \redbneck{G_{1}^{+}}{F_{1}}{u}{v}$.
\end{claim}
\begin{proof}
Let $P_{1} \in \redpaths{G_{1}^{+}}{F_{1}}{u}{v}$.
It is enough to show $\redbneck{G^{+}}{F}{u}{v} \leq \mc(P_{1})$.
If $e_{1} \notin E(P_{1})$, then $P_{1} \in \redpaths{G^{+}}{F}{u}{v}$, and
$\redbneck{G^{+}}{F}{u}{v} \leq \mc(P_{1})$ holds by definition. Hence we may assume
$e_{1} \in E(P_{1})$.

By (P\ref{cond-parallel-j1}), we have $j_{1}=\min\{j,i_{2}\}$. If $j_{1}=j$, then replacing
the edge $e_{1}$ of $P_{1}$ by $e$ yields
a path $P \in \redpaths{G^{+}}{F}{u}{v}$ with $\mc(P) = \mc(P_{1})$,
implying $\redbneck{G^{+}}{F}{u}{v} \leq \mc(P_{1})$. Similarly, if $j_{1}=i_{2} < j$, then
$i_{2} < k$, implying that $\paths{G_{2}}{F_{2}}{s}{t}$ is not empty.
Replacing in $P_{1}$ the edge $e_1$ with any path
$P_{2} \in \paths{G_{2}}{F_{2}}{s}{t}$
with  $\mc(P_{2})=c_{i_{2}}$ gives again a
path $P$ with $\mc(P) = \mc(P_{1})$.
While the path $P_{2}$ does not necessarily contain a red edge,
the path $P$, on the other hand, cannot be completely blue. This is
because otherwise $F$ contains the
cycle $P \cup \{uv\}$, contradicting the fact that $F$ is acyclic (as follows
from Lemma~\ref{lem-parallel-F}). Hence,
$P\in \redpaths{G^{+}}{F}{u}{v}$, and
$\redbneck{G^{+}}{F}{u}{v} \leq \mc(P) = \mc(P_{1})$.
Claim~\ref{claim-par-leq} follows.
\begin{flushright}\qed\end{flushright}\end{proof}

Conversely, we prove:

\begin{claim}
\label{claim-par-geq}
$\redbneck{G^{+}}{F}{u}{v} \geq \redbneck{G_{1}^{+}}{F_{1}}{u}{v}$.
\end{claim}
\begin{proof}
Let $P \in \redpaths{G^{+}}{F}{u}{v}$. Again, it is enough to
show $\mc(P) \geq \redbneck{G_{1}^{+}}{F_{1}}{u}{v}$.
This clearly holds if $P \in \redpaths{G^{+}_{1}}{F_{1}}{u}{v}$.
Hence, we may assume $s,t \in V(P)$, and that the subpath $sPt$ of $P$ either
belongs to $\paths{G_{2}}{F_{2}}{s}{t}$, or corresponds to the edge $e$ (by $sPt$ we denote the subpath of $P$ that is between vertices $s$ and $t$).

In the first case, $c_{i_{2}} \leq \mc(sPt)$ holds by definition.
Moreover, $j_{1} \leq i_{2}$ follows from (P\ref{cond-parallel-j1}). Therefore, replacing
the subpath $sPt$ of $P$ with the edge $e_{1}$ yields
a path $P_{1}\in \redpaths{G_{1}^{+}}{F_{1}}{u}{v}$
with $\mc(P_{1}) \leq \mc(P)$, implying $\redbneck{G_{1}^{+}}{F_{1}}{u}{v} \leq \mc(P)$.

Similarly, (P\ref{cond-parallel-j1}) implies $j_{1} \leq j$ in the second case.
Hence, replacing the edge $e$ of $P$ with $e_{1}$ results in
a path $P_{1}\in \redpaths{G_{1}^{+}}{F_{1}}{u}{v}$ with
$\mc(P_{1}) \leq \mc(P)$, showing $\redbneck{G_{1}^{+}}{F_{1}}{u}{v} \leq \mc(P)$. This completes
the proof of Claim~\ref{claim-par-geq}.
\begin{flushright}\qed\end{flushright}\end{proof}

Lemma~\ref{lem-parallel-bneck} follows from Claims~\ref{claim-par-leq} and~\ref{claim-par-geq}.
\begin{flushright}\qed\end{flushright}\end{proof}

Using the two previous lemmas, the proof of Lemma~\ref{lem-opt-parallel} is
the same as that of Lemma~\ref{lem-opt-series} for series composition. We omit it.

\subsection{The Algorithm}
\label{sec-algo}

\begin{theorem}
\label{th-sp}
The {\stackmst} problem can be solved in $O(m^4)$ time on series-parallel graphs.
\end{theorem}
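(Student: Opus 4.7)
The plan is to convert the recursive structure certified by Lemmas~\ref{lem-opt-series} and~\ref{lem-opt-parallel} into a bottom-up dynamic program on a series-parallel decomposition tree of the input $(G,s,t)$, which can be built in linear time. At each node of the tree, corresponding to some series-parallel subgraph $(H,s_H,t_H)$, I would store a table whose entries are $\OPT{H}{q}{s_H}{t_H}$ for every $q\in [k]^2$, with the convention that unrealizable pairs hold the value $-\infty$. Since each distinct red cost requires at least one red edge, $k = O(m)$ and each table has $O(m^2)$ entries. By the observation stated at the end of Section~\ref{sec-def}, the final answer is $\max_{i\in [k]} \OPT{G}{(i,k)}{s}{t}$, read off at the root.

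At the leaves of the decomposition, corresponding to single edges, the tables can be initialised directly from the definitions. For a single blue edge $st$, the only acyclic subsets $F \subseteq B(G)$ are $\varnothing$, which realizes $(k,j)$ with revenue $0$, and $\{st\}$, which realizes $(0,j)$ with revenue $\redbneck{G^{+}}{\{st\}}{s}{t} = c_j$; every other entry is $-\infty$. For a single red edge of cost $c_r$, the only option is $F=\varnothing$, which realizes $(r,j)$ with revenue $0$, and all other entries are $-\infty$.

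For an internal node, assuming the tables of its two children are already computed, I would use Lemma~\ref{lem-opt-series} (in the series case) or Lemma~\ref{lem-opt-parallel} (in the parallel case) entrywise: for each $q\in [k]^2$, take the maximum, over all $(q_1,q_2)$ compatible with $q$, of the sum of the corresponding child-table values. Given $q=(i,j)$, a series-compatible pair is determined by the choice of $(i_1,i_2)$ with $\max\{i_1,i_2\}=i$, after which (S\ref{cond-series-j1}) and its symmetric counterpart force $j_1$ and $j_2$; the parallel case is analogous with $\min$. This yields $O(m)$ candidate pairs per entry, so each internal node is processed in $O(m^2)\cdot O(m) = O(m^3)$ time, and with $O(m)$ nodes in the decomposition tree the total running time is $O(m^4)$.

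Correctness is immediate from Lemmas~\ref{lem-opt-series} and~\ref{lem-opt-parallel}, so the remaining work is essentially bookkeeping and I do not expect a serious obstacle. The one subtlety worth flagging is the consistent handling of the empty-path case (captured by the default value $c_k$ in the definitions of $w$ and $\widetilde{w}$); one must check that the leaf initialisation above and the recursions interact correctly with this convention so that non-realizable pairs are never accidentally promoted out of $-\infty$.
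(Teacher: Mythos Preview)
Your proposal is correct and matches the paper's proof essentially line for line: the same bottom-up dynamic program over a linear-time series-parallel decomposition, the same leaf initialisations for single red and blue edges, the same use of Lemmas~\ref{lem-opt-series} and~\ref{lem-opt-parallel} at internal nodes, and the same $O(k)$ bound on compatible pairs per entry yielding $O(k^3 m)=O(m^4)$ overall. The subtlety you flag about the $c_k$ convention is handled exactly as you suggest and causes no trouble.
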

\begin{proof}
A series-parallel decomposition of a connected
series-parallel graph can be computed in linear time~\cite{VTL82}.
Given such a decomposition, Lemmas~\ref{lem-opt-series} and~\ref{lem-opt-parallel}
yield the following algorithm: consider each graph $(H,s,t)$ in the decomposition
tree in a bottom-up fashion.

If $H$ is a single edge $st$, we directly compute $\OPT{H}{q}{s}{t}$ for every $q\in [k]^2$.
In particular, if $H$ is a single red edge of cost $c_h$, then $\OPT{H}{(i,j)}{s}{t} = 0$ if $i=h$, and $-\infty$ otherwise. On the other hand, if $H$ is a single blue edge, then $\OPT{H}{(i,j)}{s}{t}$ is equal to $c_j$ if $i=0$ (corresponding to the case $F=\{st\}$), to $0$ if $i=k$ (corresponding to the case $F=\varnothing$), and to $-\infty$ otherwise.

If $(H,s,t)$ is a series or parallel composition of
$(H_{1},s_{1},t_{1})$ and $(H_{2},s_{2},t_{2})$,
compute $\OPT{H}{q}{s}{t}$ for every $q\in [k]^2$ based on the
previously computed values for $(H_{1},s_{1},t_{1})$ and $(H_{2},s_{2},t_{2})$,
relying on Lemmas~\ref{lem-opt-series} and~\ref{lem-opt-parallel}.

For every $q=(i,j)\in[k]^2$, there are $O(k)$ possible values for either series-compatible
or parallel-compatible pairs $(q_1,q_2)$. Hence every step costs $O(k)$ times. Since
there are $O(k^2)$ possible values for $q$, and $O(m)$ graphs in the decomposition of $G$,
the overall complexity is $O(k^3m) = O(m^4)$.

This results in a polynomial-time algorithm computing the maximum revenue
achievable by the leader. Moreover, using Lemmas~\ref{lem-series-F} and~\ref{lem-parallel-F},
it is not difficult to keep track at each step of a witness $F \subseteq B(H)$ for
$\OPT{H}{q}{s}{t}$, whenever $\OPT{H}{q}{s}{t} > -\infty$.
This proves the theorem.
\begin{flushright}\qed\end{flushright}\end{proof}

\begin{figure}
\begin{center}
\includegraphics[scale=.68]{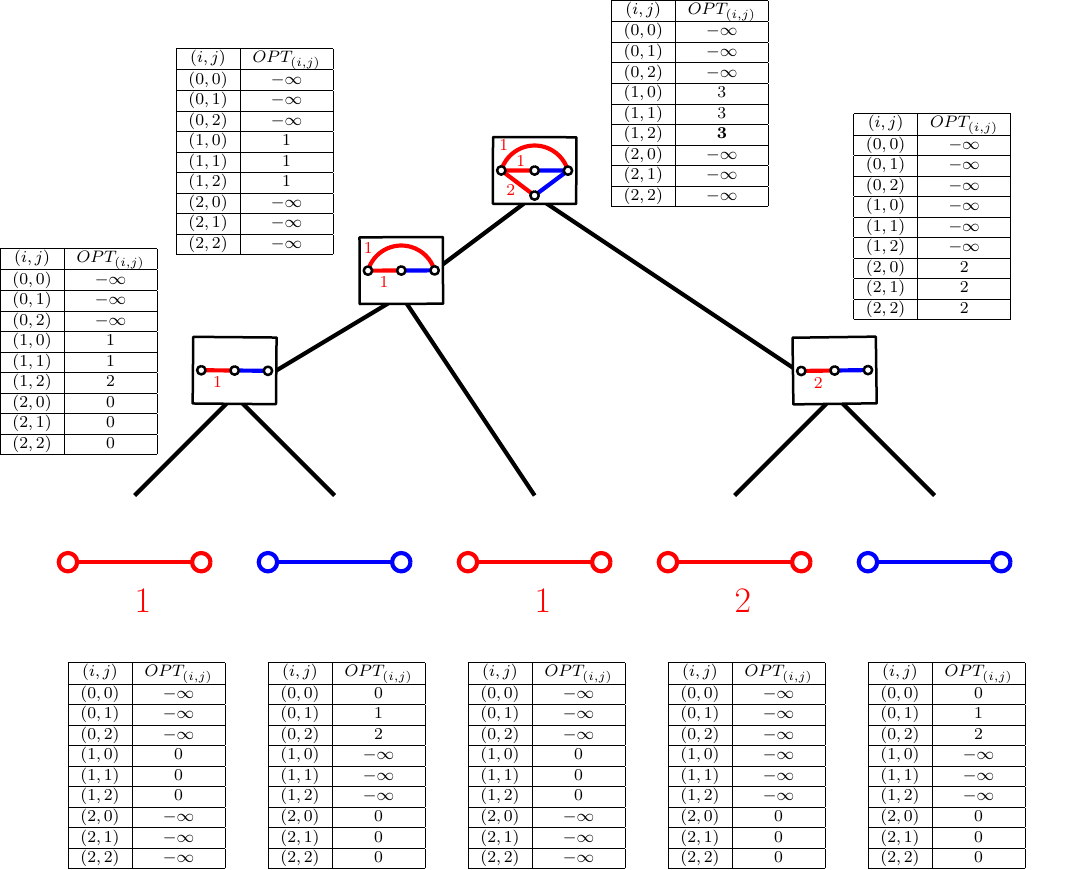}
\end{center}
\caption{\label{fig-ex}An example of execution of the dynamic programming algorithm for \stackmst\ on series-parallel graphs. The graph is constructed using two series compositions and two parallel compositions. The pairs $(i,j),OPT_{(i,j)}(H,s,t)$ are shown for each intermediate graph $(H,s,t)$ of the decomposition. The value 3 shown in boldface in the top table is the maximum achievable profit.}
\end{figure}

An example of execution of the algorithm is given in figure~\ref{fig-ex}.

\section{Bounded-Treewidth Graphs}
\label{Bounded-Treewidth Graphs}

In the previous section, we gave a polynomial-time algorithm for solving the
{\stackmst} problem on series-parallel graphs, which have
treewidth at most $2$.  In this section, we extend the algorithm to handle
graphs of bounded treewidth, as indicated by the following theorem.

\begin{theorem}
\label{th-tw}
The {\stackmst} problem can be solved in $2^{O(t^3)}m + m^{O(t^2)}$
time on graphs of treewidth $t$.
\end{theorem}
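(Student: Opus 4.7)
The plan is to first compute a nice tree decomposition of $G$ of width at most $t$ using Bodlaender's algorithm in time $2^{O(t^{3})} n$, and then run a dynamic program on this decomposition that generalizes the algorithm for \spl\ graphs. The state associated with each bag $X$ (of size at most $t+1$) and its associated subgraph $H$ generalizes the pair $q=(i,j)$ of Section~\ref{Series-Parallel Graphs} by handling several boundary vertices simultaneously. Concretely, a state $\sigma$ would consist of a partition $\pi$ of $X$ recording the connected components of $X$ in the blue forest $(V(H), F)$; an internal bottleneck matrix $W^{int}\colon X \times X \to \{c_0,\ldots,c_k\}$ playing the role of the index $i$, with $W^{int}(u,v) = \bneck{H}{F}{u}{v}$; and an external bottleneck matrix $W^{ext}\colon X \times X \to \{c_0,\ldots,c_k\}$ playing the role of $j$, parametrizing the best red $uv$-path one might find outside $H$. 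Then $\OPTtw(H,\sigma)$ would be defined, analogously to $\OPT{G}{(i,j)}{s}{t}$, as the maximum of $\sum_{uv \in F} \redbneck{H^{+}}{F}{u}{v}$, where $H^{+}$ is $H$ augmented with a red edge of cost $W^{ext}(u,v)$ between each pair $u,v \in X$.

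I would then set up the DP transitions for the standard node types of a nice tree decomposition: leaves (trivial), introduce-vertex (enlarge the partition and matrices with a new row/column), introduce-edge (handle a new red edge by updating $W^{int}$, or a new blue edge by iterating over its inclusion in $F$), forget-vertex (project out one row/column, routing bottleneck information through the forgotten vertex), and join (combine the two states sharing the bag). The join is the heart of the argument: it plays at once the role of the series and parallel compositions of Sections~\ref{sec-series} and~\ref{sec-parallel}. The plan is to define a notion of tree-decomposition compatibility between a parent state $\sigma$ and two child states $\sigma_1,\sigma_2$ requiring that the combined partition $\pi_1 \vee \pi_2$ equals $\pi$ and no new cycle arises; that $W^{int}(u,v)$ equals the minimum bottleneck over walks from $u$ to $v$ through $X$ alternating between $W^{int}_1$ and $W^{int}_2$; and that each child's $W^{ext}_\ell$ equals the analogous composite of $W^{ext}$ and $W^{int}_{3-\ell}$.

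The main obstacle will be proving the analogue of Lemmas~\ref{lem-series-bneck} and~\ref{lem-parallel-bneck} for joins: for every blue edge $uv \in F_\ell$, one must show $\redbneck{H^{+}}{F}{u}{v} = \redbneck{H_\ell^{+}}{F_\ell}{u}{v}$. The argument should proceed by a case analysis on how a cheapest red-touching $uv$-path enters and leaves $H_\ell$ through the boundary $X$; segments lying in $H_{3-\ell}$ can be replaced by direct boundary-to-boundary edges of $H_\ell^{+}$ using the external matrix $W^{ext}_\ell$, and conversely such direct edges in $H_\ell^{+}$ can be expanded into paths in $H_{3-\ell}$ whose bottlenecks are controlled by $W^{int}_{3-\ell}$. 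The partition $\pi$ plays the role previously fulfilled by condition (P\ref{cond-parallel-one-false}), by forbidding joins that would close a cycle in $F$ across the two sides.

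For the complexity, the number of partitions of a $(t+1)$-element set is the Bell number $B_{t+1} = 2^{O(t \log t)}$, and the number of bottleneck matrices on $X \times X$ is $(k+1)^{(t+1)^{2}} = m^{O(t^{2})}$, giving $m^{O(t^{2})}$ states per bag. Each join node costs $m^{O(t^{2})}$ by iterating over pairs of compatible child states, and introduce/forget/edge nodes are no more expensive; since a nice tree decomposition has $O(m)$ nodes, the total DP cost is $m^{O(t^{2})}$. Combined with the $2^{O(t^{3})} n$ preprocessing time for the tree decomposition itself, the overall running time is $2^{O(t^{3})} m + m^{O(t^{2})}$, as required.
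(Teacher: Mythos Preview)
Your proposal is correct and lands on the same running time, but it is organised around a different decomposition scheme than the paper. The paper works with the Abrahamson--Fellows presentation of treewidth via $t$-boundaried graphs and the five operators $\varnothing$, $\oplus$, $\eta$, $\epsilon$, and label permutations; its DP state is a ``$k$-graph'' $(K,f,g)$ on the boundary, where $f$ records the \emph{internal} bottlenecks $\intbneck{G}{F}{a}{b}$ (paths avoiding all other boundary vertices) and $g$ the external ones, with no explicit partition---acyclicity at an $\oplus$-step is enforced by a cycle condition (condition~(5)) on $f_1,f_2$, exploiting that $f(ab)=0$ already encodes the existence of a blue internal $ab$-path. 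You instead use a standard nice tree decomposition with leaf/introduce/forget/join nodes, track the full bottleneck $\bneck{H}{F}{u}{v}$ rather than the internal one, and record blue connectivity explicitly via a partition $\pi$. Both routes are sound: your join node plays the role of the paper's $\oplus$ operator and your forget node that of $\eta$. The paper's choice of internal paths makes the $\oplus$ recursion clean (each child's contribution decomposes edge by edge on $K$), whereas your formulation absorbs that bookkeeping into a min--max closure over boundary walks; conversely, your partition makes the acyclicity check at a join immediate, where the paper must argue via its cycle condition. Note that your partition $\pi$ is in fact redundant with $W^{int}$ (since $W^{int}(u,v)=c_0$ exactly when $u,v$ lie in the same component of $(V(H),F)$), so the two state spaces have the same asymptotic size $(k+1)^{O(t^2)} = m^{O(t^2)}$, and the overall analyses coincide.
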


The treewidth of a graph $G$ is usually defined as the minimum width
of a tree decomposition of $G$. Since we will not use tree decompositions explicitly,
we skip the definition  (see for instance~\cite{D05}).
Instead we will rely on the fact, first proved by
Abrahamson and Fellows~\cite{AbrahamsonFellows1995},
that every graph of treewidth $t$ is isomorphic to a {\sl $t$-boundaried graph}, which is defined
as a graph with $t$
distinguished vertices (called {\sl boundary vertices}), each uniquely labeled by a label in
$\{1,\dots,t\}$, which can be build recursively using the following  operators:
\begin{enumerate}
\item The null operator $\varnothing$ creates an  $t$-boundaried graph
having only $t$ boundary vertices, and they are all isolated.
\item The binary operator $\oplus$ takes the disjoint union
of two $t$-boundaried graphs and identify the $i$th
boundary vertex of the first graph with the $i$th boundary vertex of
the second graph.
Thus the edges between two boundary vertices of $G_{1} \oplus G_{2}$ correspond to
the union of the edges between these vertices in $G_{1}$ and in $G_{2}$.
(Observe that this operation is exactly a parallel-composition
if there are only two boundary vertices.)
\item The unary operator $\eta$ introduces a new isolated vertex and makes
this the new vertex with label 1 in the boundary. The previous
vertex that was labeled 1 is removed from the boundary (but not from
the graph).
\item The unary operator $\epsilon$ adds an edge between the vertices labeled
1 and 2 in the boundary.
\item Unary operators that permute the labels of the boundary vertices.
\end{enumerate}

We note that, conversely, every $t$-boundaried graph has treewidth at most $t$
(but not necessarily exactly $t$).
The set of boundary vertices of a $t$-boundaried graph $G$ is denoted by $\partial(G)$.
Every $t$-boundaried graph
on $n$ vertices can be constructed by applying
$O(t n)$ compositions according to the above five operators.
This construction as well as the boundary vertices can be found in
$2^{O(t^3)}m$ time~\cite{Bodlaender1996} (note that is linear time is $t$ is a fixed constant).

To summarize, in order to prove Theorem~\ref{th-tw}, it is enough to show that
the {\stackmst} problem can be solved in $m^{O(t^2)}$ time on $t$-boundaried graphs
when the above-mentioned construction is also given in input.

\subsection{Definitions}

Consider an instance $G$ of the {\stackmst} problem with $R(G)$ and $B(G)$ denoting
the set of red and blue edges, respectively, and with  cost function $c: R(G) \to \mathbb{R}_{+}$
on the set of red edges. As usual, denote by $c_{1}, c_{2}, \dots, c_{k}$ the different values taken
by $c$, in increasing order, and let $c_{0}:=0$.\medskip

For two distinct vertices $u,v \in V(G)$ of $G$ and
a subset $F \subseteq B(G)$ of blue edges, the sets $\paths{G}{F}{u}{v}$ and
$\redpaths{G}{F}{u}{v}$ are defined exactly as in Section~\ref{sec-def}, that is,
$\paths{G}{F}{u}{v}$ is the set of $uv$-paths in $(V(G), R(G) \cup F)$, while
$\redpaths{G}{F}{u}{v}$ denotes the subset of those paths that contain at least one red edge.
The corresponding quantities $\bneck{G}{F}{u}{v}$ and $\redbneck{G}{F}{u}{v}$
are also defined as before, that is,
$\bneck{G}{F}{u}{v}$ is the minimum of $\mc(P)$ over every path $P \in \paths{G}{F}{u}{v}$,
with $\bneck{G}{F}{u}{v}:=c_{k}$ if there is no such path, and
$\redbneck{G}{F}{u}{v}$ is defined in the same way but with respect to
$\redpaths{G}{F}{u}{v}$.

Now let us further assume the instance $G$ is a $t$-boundaried graph, and let us consider
two distinct boundary vertices $a,b \in \partial(G)$. An
$ab$-path of $G$ is said to be {\sl internal}
if the only boundary vertices of $G$ it includes are $a$ and $b$.
For $F \subseteq B(G)$,
the sets $\intpaths{G}{F}{a}{b}$ and $\intredpaths{G}{F}{a}{b}$ are defined
as $\paths{G}{F}{a}{b}$ and $\redpaths{G}{F}{a}{b}$, respectively, but with the additional
requirement that the $ab$-paths under consideration are internal $ab$-paths. The quantities
$\intbneck{G}{F}{a}{b}$ and $\intredbneck{G}{F}{a}{b}$ are defined with
respect to $\intpaths{G}{F}{a}{b}$ and $\intredpaths{G}{F}{a}{b}$, respectively, as expected.

For clarity, in what follows we will use the following convention:
the letters $a$ and $b$ will always denote vertices in the boundary
of $G$, while $u$ and $v$ will be used for arbitrary (possibly non-boundary) vertices of $G$.

A {\sl $k$-graph on the boundary of $G$}, or simply {\sl $k$-graph} when $G$ is clear from the context,
is a triple $I=(K, f, g)$ where
$K$ is a complete graph with vertex set $\partial(G)$, and $f:E(K) \to [k]$ and
$g:E(K) \to [k]$ are two functions assigning weights in $[k]$ to the edges of $K$.
(Let us recall that, by our convention, $[k]$ denotes the set $\{0, 1, \dots, k\}$.)
We say that a subset $F\subseteq B(G)$ of blue edges of $G$ {\sl realizes}
a $k$-graph $I=(K, f, g)$ if $F$ is acyclic,
and for every two distinct vertices $a,b\in \partial(G)$
we have $\intbneck{G}{F}{a}{b}=c_{f(ab)}$ (thus there is no condition on $g$).
The $k$-graph $I$ is said to be {\sl realizable} in $G$ if there exists such a subset
$F$ of blue edges.
Notice that this is a direct extension of the notion of realizability introduced
in Section~\ref{sec-def} for series-parallel graphs.
We define $G + I$ as the ($t$-boundaried) graph obtained from $G$
by adding, for every two distinct vertices $a,b\in \partial(G)$,
a red edge connecting $a$ and $b$ with cost $c_{g(ab)}$. We let
$\OPTtw_{I}(G)$ be defined as follows:
$$
\OPTtw_{I}(G) := \max\left.\left\{ \sum_{uv \in F} \redbneck{G + I}{F}{u}{v}
\,\right| F\subseteq B(G), F \textrm{ realizes } I  \textrm{ in } G \right\}.
$$
In cases where $OPT_{I}(G)$ is undefined (that is, $I$ is not realizable),
then we set $OPT_{I}(G)=-\infty$.

With these definitions, the dynamic program that will be used is a straightforward generalization
of the series-parallel case: We store for every $t$-boundaried graph $H$ appearing in the construction of our $t$-boundaried input graph $G$ the value $\OPTtw_{I}(H)$ for every $k$-graph $I$, together with a corresponding  optimal acyclic subset $F$ of blue edges (if $\OPTtw_{I}(H) > -\infty$).
The value returned by the dynamic programming solution is then the maximum
of $\OPTtw_{I}(G)$ over all $k$-graphs $I$, and a corresponding acyclic subset of blue edges
of $G$ is returned.
By Lemma~\ref{lem-opt-for-fixed-blue}, this is the maximum revenue achievable by the leader.

Now we consider the five operators appearing in the definition of $t$-boundaried graphs,
and show for each of them  how to compute $\OPTtw_{I}(G)$
from already computed values when $G$ results from the application of the operator.

\subsection{The null operator $\varnothing$}

We begin with the null operator $\varnothing$ that creates a new graph $G$ with $t$ isolated
boundary vertices labeled $1,\ldots,t$.
Consider an arbitrary $k$-graph $I=(K,f,g)$ on the boundary of $G$. If $f(ab) < k$
for some edge $ab \in E(K)$, then $I$ is not realizable in $G$, because there is no internal
$ab$-path in $G$. Thus we set $OPT_{I}(G) :=-\infty$ in this case.

If, on the other hand, $f(e) = k$ for every $e\in E(K)$, then the
subset $F=\varnothing$ of blue edges of $G$ realizes $I$, and it is of course the only one
since $B(G) = \varnothing$. Hence we let
$OPT_{I}(G):=0$ (associated with the set $F=\varnothing$).

\subsection{The binary operator $\oplus$}

The $\oplus$ operator is very similar to a parallel-composition of series-parallel graphs.
Suppose that $G = G_1\oplus G_2$, and let $I=(K,f,g)$ be an arbitrary $k$-graph on the boundary of $G$.
We extend the notion of parallel-compatibility from Section~\ref{sec-parallel} as follows:
If $I_{1}=(K_{1},f_{1},g_{1})$ and $I_{2}=(K_{2},f_{2},g_{2})$ are two $k$-graphs,
then we say that
the pair $(I_{1},I_{2})$ is {\sl {$\oplus$-compatible}} with $I$ if
$I_{i}$ ($i=1,2$) is realizable in $G_{i}$, and moreover
the following five conditions are satisfied for every $e \in E(K)$:
\begin{enumerate}
  \item[(1)]  at least one of $f_{1}(e)$ and $f_{2}(e)$ is non-zero;
  \item[(2)]  $f(e)=\min\{f_{1}(e), f_{2}(e)\}$;
  \item[(3)]  $g_{1}(e) = \min\{g(e),f_{2}(e)\}$;
  \item[(4)]  $g_{2}(e) = \min\{g(e),f_{1}(e)\}$, and
  \item[(5)] for every cycle $C$ in $K$,
  there exists $i\in \{1,2\}$ such that
  $f_{i}(e) > 0$ for every $e\in E(C)$.
\end{enumerate}

\noindent
Our goal is to compute $\OPTtw_{I}(G)$ based on values already computed for $G_{1}$ and $G_{2}$.
This is achieved by the following lemma.

\begin{lemma}
\label{lem-opt-oplus}
Assume that $G$, $I$, $G_{1}$ and $G_{2}$ are as above, and suppose further that $I$ is realizable
in $G$. Then
$$
\OPTtw_{I}(G) = \max \{ \OPTtw_{I_1}(G_1) + \OPTtw_{I_2}(G_2) \mid
(I_{1},I_{2})\text{\ is } \oplus \text{-compatible with } I \}.
$$
\end{lemma}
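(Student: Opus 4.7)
The plan is to adapt the two-step blueprint used in Section~\ref{sec-parallel} to the more general setting where the boundary can have up to $t$ vertices rather than $2$. Specifically, I would first establish analogs of Lemma~\ref{lem-parallel-F} (gluing two realizing sets) and Lemma~\ref{lem-parallel-bneck} (bottleneck preservation), then combine them in both directions exactly as in the proof of Lemma~\ref{lem-opt-series}. Fix $I=(K,f,g)$, and fix $F_\ell \subseteq B(G_\ell)$ realizing $I_\ell=(K_\ell,f_\ell,g_\ell)$ for $\ell=1,2$, with $(I_1,I_2)$ being $\oplus$-compatible with $I$. Let $F := F_1 \cup F_2$.

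The first sublemma asserts that $F$ realizes $I$ in $G$. For the bottleneck part, every internal $ab$-path in $G$ lies entirely in $G_1$ or entirely in $G_2$ (since $V(G_1)\cap V(G_2)=\partial(G)$, so any internal vertex is in exactly one side), hence $\intpaths{G}{F}{a}{b} = \intpaths{G_1}{F_1}{a}{b}\cup\intpaths{G_2}{F_2}{a}{b}$ and condition (2) gives $\intbneck{G}{F}{a}{b}=\min\{c_{f_1(ab)},c_{f_2(ab)}\}=c_{f(ab)}$. The subtle part is acyclicity. Suppose a cycle $C\subseteq F$ exists; since $F_1,F_2$ are acyclic, $C$ must use both sides and therefore visits some set $B'$ of boundary vertices. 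Decompose $C$ into maximal segments whose internal vertices avoid $\partial(G)$; each segment is an internal $a_ib_i$-path in $F_1$ or $F_2$, so $f_1(a_ib_i)=0$ or $f_2(a_ib_i)=0$ accordingly. If $|B'|=2$, the two endpoints are joined by one segment on each side, forcing both $f_1(ab)=0$ and $f_2(ab)=0$, contradicting (1). If $|B'|\ge 3$, take a simple cycle $C^\ast$ in $K$ among the visited boundary vertices: since the projection visits both sides (else $F_1$ or $F_2$ already had a cycle), the edges of $C^\ast$ include ones with $f_1=0$ and ones with $f_2=0$, contradicting (5). This is the main obstacle, and it is precisely what condition~(5) was defined to rule out.

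The second sublemma is the analog of Lemma~\ref{lem-parallel-bneck}: for every $uv\in F_\ell$, $\redbneck{G+I}{F}{u}{v}=\redbneck{G_\ell+I_\ell}{F_\ell}{u}{v}$. I would argue this for $\ell=1$ by a two-sided path-surgery mimicking Claims~\ref{claim-par-leq} and~\ref{claim-par-geq}. For $(\le)$, given a red $uv$-path $P_1$ in $G_1+I_1$, any $I_1$-edge $ab$ it uses has weight $c_{g_1(ab)}=c_{\min\{g(ab),f_2(ab)\}}$ by (3); replacing it either by the corresponding $I$-edge (weight $c_{g(ab)}$) or by a witnessing internal $ab$-path in $(G_2,F_2)$ of bottleneck $c_{f_2(ab)}$ yields a $uv$-path in $G+I$ of no greater max-cost, which, as in the parallel case, is still red because the walk cannot close up into a cycle of $F$ (using Lemma analogous to~\ref{lem-parallel-F}). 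The $(\ge)$ direction is symmetric using (3) again: a red $uv$-path in $G+I$ that wanders into $G_2$ or uses the $I$-edge $ab$ can be rerouted through the $I_1$-edge $ab$ whose weight $c_{g_1(ab)}$ is bounded by both alternatives.

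With these two sublemmas in hand, both inequalities of Lemma~\ref{lem-opt-oplus} follow as in the proof of Lemma~\ref{lem-opt-series}. For $(\ge)$: picking optimal $F_\ell$ witnessing $\OPTtw_{I_\ell}(G_\ell)$ and setting $F=F_1\cup F_2$, the realization sublemma gives a feasible $F$, and the bottleneck sublemma turns $\sum_{uv\in F}\redbneck{G+I}{F}{u}{v}$ into $\OPTtw_{I_1}(G_1)+\OPTtw_{I_2}(G_2)$. For $(\le)$: given an optimal $F$ for $\OPTtw_I(G)$, set $F_\ell := F\cap E(G_\ell)$, define $f_\ell(ab):=\intbneck{G_\ell}{F_\ell}{a}{b}$ (encoded as an index in $[k]$) and $g_\ell$ via (3)–(4); conditions (1) and (5) follow because $F$ itself is acyclic, while (2)–(4) are immediate from the definitions, so $(I_1,I_2)$ is $\oplus$-compatible with $I$ and the bottleneck sublemma again yields the reverse inequality.
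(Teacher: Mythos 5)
Your overall architecture matches the paper's: a gluing lemma (the analogue of Lemma~\ref{lem-parallel-F}), a bottleneck-preservation lemma (the analogue of Lemma~\ref{lem-parallel-bneck}), and then the two inequalities obtained exactly as in the series/parallel case. Your acyclicity argument for $F=F_1\cup F_2$ --- projecting a hypothetical cycle of $F$ onto the boundary and invoking condition (5) --- is essentially the paper's Lemma~\ref{lem-oplus-F}, and your $(\geq)$ direction and the construction of $(I_1,I_2)$ from an optimal $F$ follow the paper's Claims~\ref{claim-opt-oplus-forall} and~\ref{claim-opt-oplus-exists}. The proposal is too thin, however, at exactly the two points where a boundary of size $t$ genuinely differs from the two-terminal case. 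In the $(\leq)$ direction of the bottleneck lemma, a path $P_1$ in $G_1+I_1$ may use \emph{several} extra red edges $e_1^{a_ia_{i+1}}$, and the internal witnessing paths $Q_i$ in $(V(G_2),R(G_2)\cup F_2)$ that you substitute for them need not be pairwise internally disjoint, so the surgery does not directly yield a path; ``mimicking'' Claims~\ref{claim-par-leq} and~\ref{claim-par-geq} is not enough, since there at most one extra edge can ever appear. The paper repairs this by choosing $P_1$ of minimum length among optimal paths and proving, via condition (3), that the $Q_i$ are pairwise internally disjoint; alternatively one can take any $uv$-path inside the connected union and get redness from acyclicity, which your ``walk'' remark gestures at but does not carry out.

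The more serious gap is your half-sentence ``conditions (1) and (5) follow because $F$ itself is acyclic.'' For (1) this is indeed a one-liner (two internally disjoint all-blue internal $ab$-paths, one per side, would form a cycle of $F$), but for (5) it is not: acyclicity of $F$ only lets you build a cycle of $F$ out of edges of $K$ that carry an all-blue internal path on one of the two sides, whereas (5) quantifies over \emph{all} cycles of $K$, including ones with edges $ab$ for which $f_1(ab)>0$ and $f_2(ab)>0$ and which therefore contribute no blue path at all. Verifying (5) is the heart of the paper's Claim~\ref{claim-opt-oplus-exists}: one assumes a cycle violating (5), takes a shortest such cycle, uses condition (1) to pick for each relevant edge the side carrying an all-blue internal realizing path, proves these paths pairwise internally disjoint (this is where shortestness of the cycle is needed, together with the case analysis on $|C|=3$ versus $|C|\geq 4$), and only then glues them into a cycle of $(V(G),F)$ to contradict acyclicity. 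Your proposal omits this argument entirely, and the omitted step is precisely the nontrivial content of the decomposition direction, so as written the proof is incomplete there.
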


(Let us remark that, if $I$ is not realizable in $G$, then we trivially have
$\OPTtw_{I}(G) = -\infty$.)
The proof of Lemma~\ref{lem-opt-oplus} is a generalization of the proof of
Lemma~\ref{lem-opt-parallel} for parallel compositions and consists
of a few steps.  First we prove the following lemma, which is similar
to Lemma~\ref{lem-parallel-F}.

\begin{lemma}
\label{lem-oplus-F}
Suppose that  $I_{i} = (K_{i}, f_{i}, g_{i})$ is a $k$-graph realized in $G_{i}$
by a subset $F_{i} \subseteq B(G_{i})$ of blue edges of $G_{i}$, for $i=1,2$,
and assume further that $(I_{1},I_{2})$ is $\oplus$-compatible with $I$.
Then $F:= F_{1} \cup F_{2}$ realizes $I$ in $G$.
\end{lemma}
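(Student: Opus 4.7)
The plan is to verify the two conditions that define ``$F$ realizes $I$ in $G$'': that $F$ is acyclic, and that $\intbneck{G}{F}{a}{b} = c_{f(ab)}$ for every pair of distinct boundary vertices $a, b$. Throughout, the key structural fact is that every non-boundary vertex of $G = G_1 \oplus G_2$ belongs to exactly one of $G_1, G_2$, so every edge incident to such a vertex sits entirely on one side. Equivalently, any subpath of $G$ whose interior vertices are all non-boundary lies wholly in $G_1$ or wholly in $G_2$.

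The bottleneck condition is the easier half. By definition an internal $ab$-path has no boundary vertices in its interior, so by the structural observation $\intpaths{G}{F}{a}{b}$ decomposes as $\intpaths{G_1}{F_1}{a}{b} \cup \intpaths{G_2}{F_2}{a}{b}$. Taking the minimum of $\mc$ over these sets then gives
\[
\intbneck{G}{F}{a}{b} = \min\bigl\{\intbneck{G_1}{F_1}{a}{b},\ \intbneck{G_2}{F_2}{a}{b}\bigr\} = \min\{c_{f_1(ab)}, c_{f_2(ab)}\} = c_{f(ab)}
\]
via the realizability of the $I_i$'s and condition~(2) (the convention $c_k$ for empty path sets makes this work when one or both sides have no internal $ab$-path).

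For acyclicity, assume toward a contradiction that $F$ contains a simple cycle $C_F$, and traverse it cyclically to read off the sequence $v_0, v_1, \ldots, v_{m-1}, v_m = v_0$ of boundary vertices met (all distinct except for the wrap-around, since $C_F$ is simple). Each ``boundary segment'' of $C_F$ between consecutive $v_i$'s lies entirely in one $G_i$ by the structural fact, and is therefore an all-blue internal $v_iv_{i+1}$-path of $G_i$ inside $F_i$, forcing $f_i(v_iv_{i+1}) = 0$ for the corresponding edge of $K$. A contradiction now drops out of a case split on $m$: for $m \leq 1$ the whole of $C_F$ is confined to one $F_i$ (contradicting acyclicity of $F_i$); for $m = 2$ one either obtains a length-$\geq 2$ closed walk and hence a cycle in some $F_i$ (two same-side segments between $v_0,v_1$) or has $f_1(v_0v_1) = f_2(v_0v_1) = 0$, contradicting~(1); and for $m \geq 3$ the sequence $v_0 v_1 \ldots v_{m-1} v_0$ is a genuine simple cycle of $K$ on which condition~(5) forces all segments to lie on the same side, returning us to the previously contradicted scenario of $C_F \subseteq F_i$.

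The main obstacle is the acyclicity half: one has to verify that the small cases ($m \leq 2$) are covered, since condition~(5) is only phrased for cycles of $K$, which necessarily have length $\geq 3$. Condition~(1) and the individual acyclicity of $F_1, F_2$ are precisely what rule out these degenerate configurations, so once that bookkeeping is dispatched, everything else follows cleanly from the elementary structural observation about non-boundary vertices.
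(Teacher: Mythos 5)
Your proposal is correct and follows essentially the same route as the paper's proof: the bottleneck condition via the decomposition of internal $ab$-paths into those of $(G_1,F_1)$ and $(G_2,F_2)$ together with condition (2), and acyclicity via a case analysis on the number of boundary vertices on a hypothetical cycle in $F$, using acyclicity of $F_1,F_2$ for the degenerate cases, condition (1) for two boundary vertices, and condition (5) for three or more. The only difference is organizational (you split off the $m\leq 1$ case explicitly, which the paper folds into its opening observation), so nothing further is needed.
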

\begin{proof}
We have to prove that $F$ is acyclic and that $\intbneck{G}{F}{a}{b}=c_{f(ab)}$ for every edge $ab\in E(K)$.

First, suppose that $(V(G), F)$ contains a cycle $C$.
Since $F_1$ and $F_{2}$ are both acyclic, $C$ includes at least two distinct boundary vertices  $a$ and $b$, and moreover
$E(G_1) \cap E(C)$, $E(G_2) \cap E(C)$ are both non-empty. If $a$ and $b$ are the only boundary vertices in $C$ then
 there is an $ab$-path in $(V(G), F_{1})$ and an $ab$-path  in $(V(G), F_{2})$,
implying  that $f_{1}(ab) = f_{2}(ab) = 0$, which contradicts condition (1)
from the definition of $\oplus$-compatibility.

If, on the other hand, $C$ contains at least three boundary vertices,
choose an orientation of $C$ and an arbitrary vertex $a_{1} \in \partial(G) \cap V(C)$,
and enumerate the vertices in $\partial(G) \cap V(C)$
as $a_{1}, a_{2}, \dots, a_{p}$ according to the order in which they appear when
walking on $C$ from $a_{1}$ in the chosen orientation.
By condition (5), there is an index $j\in \{1,2\}$ such that
$f_{j}(a_{i}a_{i+1}) > 0$ for every $i\in \{1, \dots, p\}$
(taking indices modulo $p$). We may assume without loss of generality that
this is the case for $j=1$.

For every $i\in \{1, \dots, p\}$,
the (oriented) path from $a_{i}$ to $a_{i+1}$ in $C$
is a subgraph of $(V(G), F_{1})$ or $(V(G), F_{2})$, since it does not
contain other boundary vertices than $a_{i}$ and $a_{i+1}$.
This path cannot be a subgraph of $(V(G), F_{1})$ since $f_{1}(a_{i}a_{i+1}) > 0$,
hence it is contained in $(V(G), F_{2})$. However, it follows then that $C$ itself
is a subgraph of $(V(G), F_{2})$, which contradicts the fact that $F_{2}$ is acyclic.
Therefore, $F$ must be acyclic.

Now, consider two distinct vertices $a,b\in \partial(G)$.
Clearly
$\intpaths{G_{1}}{F_{1}}{a}{b} \cup \intpaths{G_{2}}{F_{2}}{a}{b} \subseteq \intpaths{G}{F}{a}{b}$.
By definition,
 each path  $P\in \intpaths{G}{F}{a}{b}$ has no other boundary vertices
than $a$ and $b$, hence $P$ is included in
either $\intpaths{G_{1}}{F_{1}}{a}{b}$ or $\intpaths{G_{2}}{F_{2}}{a}{b}$.
It follows that
$\intpaths{G}{F}{a}{b} = \intpaths{G_{1}}{F_{1}}{a}{b} \cup \intpaths{G_{2}}{F_{2}}{a}{b}$.
This in turn implies
$\intbneck{G}{F}{a}{b} =
\min\{ \intbneck{G_{1}}{F_{1}}{a}{b}, \intbneck{G_{2}}{F_{2}}{a}{b}\}
= \min\{c_{f_{1}(ab)},c_{f_{2}(ab)}\}$, which is equal to $c_{f(ab)}$
by condition (2).
\begin{flushright}\qed\end{flushright}\end{proof}

The next lemma is the analogue of Lemma~\ref{lem-parallel-bneck} from Section~\ref{sec-parallel}.
\begin{lemma}
\label{lem-oplus-bneck}
Let $I_{1} = (K_{1}, f_{1}, g_{1})$, $I_{2} = (K_{2}, f_{2}, g_{2})$, $F_{1}$, $F_{2}$, and
$F$ be as in Lemma~\ref{lem-oplus-F}. Then, for $i=1,2$, and every
edge $uv \in F_{i}$, we have
$$
\redbneck{G+I}{F}{u}{v} = \redbneck{G_{i}+ I_{i}}{F_{i}}{u}{v}.
$$
\end{lemma}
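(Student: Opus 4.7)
The plan is to mirror the proof of Lemma~\ref{lem-parallel-bneck} for parallel compositions, adapted to the fact that a $uv$-path in $G+I$ can cross the boundary $\partial(G)$ many times rather than at most once. By symmetry I treat only $i=1$ and fix $uv\in F_{1}$. The corner case where $\redpaths{G_{1}+I_{1}}{F_{1}}{u}{v}$ or $\redpaths{G+I}{F}{u}{v}$ is empty trivializes one of the two inequalities (the corresponding $\widetilde{w}$-value being $c_{k}$), so I may assume both sets of paths are non-empty and prove the two inequalities separately.

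For the $\le$ inequality, I take an arbitrary $P_{1}\in \redpaths{G_{1}+I_{1}}{F_{1}}{u}{v}$ and construct a walk $W$ in $G+I$ from $u$ to $v$ by replacing each $I_{1}$-edge $ab$ appearing in $P_{1}$. By condition~(3) of $\oplus$-compatibility, $g_{1}(ab)=\min\{g(ab),f_{2}(ab)\}$. If $g_{1}(ab)=g(ab)$, I replace by the $I$-edge $ab$ of identical cost; otherwise $g_{1}(ab)=f_{2}(ab)<g(ab)\le k$, so $f_{2}(ab)<k$ and $\intpaths{G_{2}}{F_{2}}{a}{b}$ contains an internal $ab$-path realizing the bottleneck $c_{f_{2}(ab)}=c_{g_{1}(ab)}$, which I splice in as the replacement. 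In both cases the substituted piece contributes $\mc$ at most $c_{g_{1}(ab)}$, so $\mc(W)\le \mc(P_{1})$. Extracting a simple $uv$-path $P$ from $W$ gives $\mc(P)\le \mc(W)\le \mc(P_{1})$. I then argue $P$ must contain a red edge: otherwise $P$ would use only edges of $F$, and together with $uv\in F$ would form a cycle in $F$, contradicting the acyclicity of $F$ established in Lemma~\ref{lem-oplus-F}. Minimizing over $P_{1}$ yields the desired inequality.

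For the $\ge$ inequality, I take an arbitrary $P\in \redpaths{G+I}{F}{u}{v}$ and decompose $P$ into maximal segments by cutting at every occurrence of a vertex of $\partial(G)$. The key structural observation is that non-boundary vertices of $G$ belong to exactly one of $G_{1}$ or $G_{2}$, so any segment whose interior contains no boundary vertex is either entirely in $G_{1}$, entirely in $G_{2}$, or a single $I$-edge joining two boundary vertices. I then build a walk $W'$ in $G_{1}+I_{1}$ by keeping every $G_{1}$-segment intact and replacing every $G_{2}$-segment or $I$-edge with endpoints $a,b\in\partial(G)$ by the $I_{1}$-edge $ab$ of cost $c_{g_{1}(ab)}$. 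A $G_{2}$-segment has max red cost at least $\intbneck{G_{2}}{F_{2}}{a}{b}=c_{f_{2}(ab)}$, which dominates $c_{g_{1}(ab)}$ by condition~(3); an $I$-edge has cost $c_{g(ab)}\ge c_{g_{1}(ab)}$ likewise by condition~(3). Hence $\mc(W')\le \mc(P)$, and extracting a simple $uv$-path $P_{1}\subseteq W'$ yields $\mc(P_{1})\le\mc(P)$. The same acyclicity argument, now applied to $F_{1}$ (which is acyclic since it realizes $I_{1}$), shows $P_{1}$ contains a red edge, giving $P_{1}\in \redpaths{G_{1}+I_{1}}{F_{1}}{u}{v}$ and the required inequality after minimizing.

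The main obstacle is the segment decomposition in the $\ge$ direction together with the cost bookkeeping: one must verify that replacing a $G_{2}$-segment by a single $I_{1}$-edge never raises the bottleneck of the walk. The tightest case is a single blue $G_{2}$-edge $ab$ lying in $F_{2}$, which contributes $\mc=0$; here the replacement $I_{1}$-edge has cost $c_{g_{1}(ab)}=0$, because $\intbneck{G_{2}}{F_{2}}{a}{b}=0$ forces $f_{2}(ab)=0$ and then condition~(3) forces $g_{1}(ab)=0$. Once condition~(3) is leveraged for these comparisons and Lemma~\ref{lem-oplus-F} is used for the acyclicity arguments, the rest of the proof essentially copies that of Lemma~\ref{lem-parallel-bneck}.
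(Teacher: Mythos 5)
Your proof is correct and follows the same broad strategy as the paper: decompose paths at boundary vertices, and replace segments across the boundary using condition~(3) of $\oplus$-compatibility to control the bottleneck cost, with the acyclicity guaranteed by Lemma~\ref{lem-oplus-F} ensuring the constructed path is genuinely red. The one genuine difference is in the ``$\leq$'' direction. The paper chooses $P_1$ of minimum \emph{length} among bottleneck-optimal paths in $\redpaths{G_1+I_1}{F_1}{u}{v}$ and then proves, via a contradiction argument invoking condition~(3) again, that the spliced-in internal paths $Q_i$ are pairwise internally disjoint, so that the substitution yields a simple path in $G+I$ directly. You instead take an arbitrary $P_1$, perform the same substitutions to obtain a \emph{walk} $W$ in $G+I$ with $\mc(W)\le\mc(P_1)$, and then extract a simple $uv$-path $P\subseteq W$; since $P$'s edge set is contained in $W$'s, $\mc(P)\le\mc(W)$, and the acyclicity of $F$ forces $P$ to contain a red edge (since $uv$ itself, being a blue edge of $F_1$, cannot appear in $W$, so $P+uv$ is a genuine cycle). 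This neatly sidesteps the paper's disjointness claim and the minimal-length bookkeeping it requires, at no loss of rigor. In the ``$\geq$'' direction your segment decomposition and cost comparison via condition~(3) coincide with the paper's; the extraction step there is actually superfluous, since distinctness of the boundary vertices along the simple path $P$ already makes the replaced object a path, but it does no harm.
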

\begin{proof}
We prove the statement for $i=1$, the case $i=2$ follows by symmetry.

For every two distinct vertices $a,b \in  \partial(G)$, let $e^{ab}$ and $e_1^{ab}$ be the additional red edges in $G + I$ and $G + I_{1}$, respectively, between the boundary vertices $a$ and $b$.

Let $uv \in F_{1}$. We first show:

\begin{claim}
\label{claim-oplus-leq}
$\redbneck{G+I}{F}{u}{v} \leq \redbneck{G+I_{1}}{F_{1}}{u}{v}$.
\end{claim}
\begin{proof}
If $\redpaths{G+I_{1}}{F_{1}}{u}{v}$ is empty then trivially
$\redbneck{G+I}{F}{u}{v} \leq c_{k} = \redbneck{G+I_{1}}{F_{1}}{u}{v}$, thus we may assume
$\redpaths{G+I_{1}}{F_{1}}{u}{v} \neq \varnothing$.

Let $P_{1}$ be a path in $\redpaths{G+I_{1}}{F_{1}}{u}{v}$
with $\mc(P_{1}) = \redbneck{G+I_{1}}{F_{1}}{u}{v}$ and minimizing its length.
We will show the existence of a path $P$ in
in $\redpaths{G+I}{F}{u}{v}$ with $\mc(P) \leq \mc(P_{1})$. Since
$\redbneck{G+I}{F}{u}{v} \leq \mc(P)$, this will imply the claim.

If $P_{1}$ includes at most one boundary vertex, then
$P_{1} \in \redpaths{G+I}{F}{u}{v}$ and we are done. Hence we may assume that
$P_{1}$ includes at least two boundary vertices.
Enumerate
the boundary vertices that are included in $P_{1}$ as $a_{1}, \dots, a_{p}$, in the order in which they
appear when going from $u$ to $v$.
Let $X$ be the set of indices $i\in \{1, \dots, p-1\}$ such that
the subpath $a_{i}P_{1}a_{i+1}$ of $P_{1}$ consists of the edge $e_{1}^{a_{i}a_{i+1}}$.
The latter edges are exactly the edges of $P_{1}$ that do no exist in $G+I$.
(Note that there could be none, that is, $X$ could be empty.)

For every $i \in X$, we have by condition (3) from the definition of $\oplus$-compatibility
that $g_{1}(a_{i}a_{i+1})$ is equal to the minimum of
$g(a_{i}a_{i+1})$ and $f_{2}(a_{i}a_{i+1})$. We define an internal $a_{i}a_{i+1}$-path $Q_{i}$
as follows: If $g_{1}(a_{i}a_{i+1}) = g(a_{i}a_{i+1})$, then $Q_{i}$ consists simply of the
edge $e^{a_{i}a_{i+1}}$. Otherwise, we let $Q_{i}$ be a path in
$\intpaths{G_{2}}{F_{2}}{a_{i}}{a_{i+1}}$
with $\mc(Q_{i}) = f_{2}(a_{i}a_{i+1}) = g_{1}(a_{i}a_{i+1})$.
(Observe that such a path exists since $F_{2}$ realizes $I_{2}$ in $G_{2}$.)
In both cases, $Q_{i}$ is a path which is a subgraph of $G+I$.

We claim that, for every $i,j \in X$ with $i < j$, the path $Q_{i}$ is internally disjoint
from $Q_{j}$ (that is, the only vertex they may have in common is $a_{i+1}$ provided $j = i + 1$).
Arguing by contradiction, assume otherwise. Then
the union of $Q_{i}$ and $Q_{j}$ contains an internal $a_{i}a_{j+1}$-path $R$,
and this path satisfies
 $\mc(R) \leq \max \{\mc(Q_{i}), \mc(Q_{j})\} = \max\{ g_{1}(a_{i}a_{i+1}),
 g_{1}(a_{j}a_{j+1})\} \leq \mc(P_{1})$. But then it follows from condition (3)
that $g_{1}(a_{i}a_{j+1}) \leq \mc(R) \leq \mc(P_{1})$.
Thus, replacing the $a_{i}P_{1}a_{j+1}$ subpath
of $P_{1}$ with the edge $e_{1}^{a_{i}a_{j+1}}$ gives a path $P'_{1}$ in
$\redpaths{G+I_{1}}{F_{1}}{u}{v}$  with
$\mc(P'_{1}) \leq \mc(P_{1})=\redbneck{G+I_{1}}{F_{1}}{u}{v}$ (and hence with
$\mc(P'_{1}) =\redbneck{G+I_{1}}{F_{1}}{u}{v}$), which is shorter than $P_{1}$, a contradiction.

For each $i\in X$, the path $Q_{i}$ has no other vertex in common with $P_{1}$ than its two endpoints
(since $Q_{i}$ is an internal $a_{i}a_{i+1}$-path from $G_{2}$).
Relying on the fact that the $Q_{i}$'s are pairwise internally disjoint, we let $P$
be the path obtained from $P_{1}$ by replacing, for every $i\in X$,
the edge $e_{1}^{a_{i}a_{i+1}}$ with the path $Q_{i}$. The path $P$ must contain
at least one red edge, because otherwise $P+uv$ would be a cycle in $(V(G), F)$,
contradicting Lemma~\ref{lem-oplus-F}. Thus $P$ is in
$\redpaths{G+I}{F}{u}{v}$. Moreover, by our choice of the $Q_{i}$'s, we have
$\mc(P) \leq \mc(P_{1})$, as desired.
\begin{flushright}\qed\end{flushright}\end{proof}

Conversely, we prove:

\begin{claim}
\label{claim-oplus-geq}
$\redbneck{G+I}{F}{u}{v} \geq \redbneck{G+I_{1}}{F_{1}}{u}{v}$.
\end{claim}
\begin{proof}
If $\redpaths{G+I}{F}{u}{v}$ is empty then
$\redbneck{G+I}{F}{u}{v} = c_{k} \geq \redbneck{G+I_{1}}{F_{1}}{u}{v}$, thus we may suppose that
$\redpaths{G+I}{F}{u}{v}$ is not empty.

We have to show that $\mc(P) \geq \redbneck{G+I_{1}}{F_{1}}{u}{v}$
for every $P \in \redpaths{G+I}{F}{u}{v}$. Consider such a path $P$.
If $P$ includes at most one boundary vertex, then $P \in \redpaths{G+I}{F}{u}{v}$
and we are done. So assume $P$ contains at least two boundary vertices,
and enumerate them as $a_{1}, \dots, a_{p}$ as in the proof of the previous claim.

For every $i \in \{1, \dots, p-1\}$, the  subpath $Q_{i}:=a_{i}Pa_{i+1}$ of $P$
is either in $\intpaths{G_{1}}{F_{1}}{a_{i}}{a_{i+1}}$, or in $\intpaths{G_{2}}{F_{2}}{a_{i}}{a_{i+1}}$,
or consists of the edge $e^{a_{i}a_{i+1}}$. Observe that, in the second case, we have
$g_{1}(a_{i}a_{i+1}) \leq f_{2}(a_{i}a_{i+1}) \leq \mc(Q_{i})$ by condition (3),
and in the last case $g_{1}(a_{i}a_{i+1}) \leq g(a_{i}a_{i+1}) = \mc(Q_{i})$ by the same condition.
Hence, if for every $i \in \{1, \dots, p-1\}$ such that $Q_{i}\notin \intpaths{G_{1}}{F_{1}}{a_{i}}{a_{i+1}}$, we replace the subpath $Q_{i}$ of $P$ with the edge $e^{a_{i}a_{i+1}}$, we obtain
a path $P_{1}$ which is in $\redpaths{G+I_{1}}{F_{1}}{u}{v}$ and which satisfies
$\mc(P_{1}) \leq \mc(P)$. Since $\redbneck{G+I_{1}}{F_{1}}{u}{v} \leq \mc(P_{1})$, this completes
the proof.
\begin{flushright}\qed\end{flushright}\end{proof}

Lemma~\ref{lem-oplus-bneck} follows from Claims~\ref{claim-oplus-leq} and~\ref{claim-oplus-geq}.
\begin{flushright}\qed\end{flushright}\end{proof}

We may now turn to the proof of Lemma~\ref{lem-opt-oplus}.

\begin{proof}[Proof of Lemma~\ref{lem-opt-oplus}]
We first show:
\begin{claim}
\label{claim-opt-oplus-exists}
There exist $k$-graphs $I_{1}$ and $I_{2}$ such that $(I_{1},I_{2})$ is $\oplus$-compatible with $I$
and $\OPTtw_I(G) \leq \OPTtw_{I_1}(G_1) + \OPTtw_{I_2}(G_2)$.
\end{claim}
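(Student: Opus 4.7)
The plan is to mirror Claim~\ref{claim-opt-series-exists} of the series case. First, I would pick an acyclic $F \subseteq B(G)$ realizing $I$ that attains the maximum in the definition of $\OPTtw_I(G)$, and split it along the composition: $F_\ell := F \cap B(G_\ell)$ for $\ell=1,2$. I would then define $I_\ell = (K_\ell, f_\ell, g_\ell)$ by letting $f_\ell(ab)$ be the unique index with $c_{f_\ell(ab)} = \intbneck{G_\ell}{F_\ell}{a}{b}$, and forcing $g_\ell$ to satisfy the compatibility identities $g_1(ab) := \min\{g(ab),f_2(ab)\}$ and $g_2(ab) := \min\{g(ab),f_1(ab)\}$. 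Since $F_\ell \subseteq F$ is acyclic and the bottleneck condition holds by construction, $F_\ell$ realizes $I_\ell$ in $G_\ell$, and hence $\OPTtw_{I_\ell}(G_\ell) \geq \sum_{uv\in F_\ell} \redbneck{G_\ell+I_\ell}{F_\ell}{u}{v}$.

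The next step is to verify the five $\oplus$-compatibility conditions. Conditions (2), (3), and (4) are immediate from the choices above, once one observes that every internal $ab$-path in $(G,F)$ lies entirely in one of $(G_1,F_1)$ or $(G_2,F_2)$ (because $V(G_1)\cap V(G_2)=\partial(G)$ and the paths avoid all other boundary vertices), hence $\intbneck{G}{F}{a}{b} = \min\{\intbneck{G_1}{F_1}{a}{b},\intbneck{G_2}{F_2}{a}{b}\}$, which yields (2); conditions (3) and (4) hold by definition of $g_1,g_2$. Condition (1) is handled by contradiction: if $f_1(ab)=f_2(ab)=0$, then there are blue internal $ab$-paths $P_1\subseteq F_1$ in $G_1$ and $P_2\subseteq F_2$ in $G_2$; since both paths avoid all boundary vertices other than $a$ and $b$, they are internally vertex-disjoint, and $P_1\cup P_2$ forms a cycle in $F$, contradicting acyclicity.

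Condition (5) is the main obstacle. Given a cycle $C=a_1 a_2\cdots a_p a_1$ in $K$ such that for each $i\in\{1,2\}$ some edge of $C$ has $f_i=0$, one would like to produce a cycle in $F$. For each edge $a_i a_{i+1}$ with $f_\ell(a_i a_{i+1})=0$, pick a blue internal $a_i a_{i+1}$-path $P_i\subseteq F_\ell$ in $G_\ell$. Chaining these paths around $C$ and handling the remaining edges carefully (using condition (1), and exploiting that each $P_i$ lies in the forest $F$ so paths between fixed endpoints in $F$ are unique) should produce a closed walk in $F$ whose underlying subgraph contains a cycle, contradicting acyclicity of $F$. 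This step is the most delicate, since the $P_i$'s may share internal vertices in intricate ways and several edges of $C$ may have both $f_1$ and $f_2$ strictly positive, so some bookkeeping is required.

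With $\oplus$-compatibility of $(I_1,I_2)$ established, Lemma~\ref{lem-oplus-bneck} applied edge by edge yields
\begin{align*}
\OPTtw_I(G) &= \sum_{uv\in F}\redbneck{G+I}{F}{u}{v}\\
&= \sum_{uv\in F_1}\redbneck{G_1+I_1}{F_1}{u}{v} + \sum_{uv\in F_2}\redbneck{G_2+I_2}{F_2}{u}{v}\\
&\leq \OPTtw_{I_1}(G_1) + \OPTtw_{I_2}(G_2),
\end{align*}
exactly as in the series case, which establishes the claim.
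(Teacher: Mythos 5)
Your choice of $F$, the split $F_\ell := F\cap B(G_\ell)$, the definitions of $f_\ell$ and $g_\ell$, the verification of conditions (1)--(4), and the concluding computation via Lemma~\ref{lem-oplus-bneck} all coincide with the paper's proof. The genuine gap is condition (5), which you leave as ``bookkeeping'' that ``should produce a closed walk in $F$''; this is the one delicate point of the whole claim, and your sketch does not establish it. Worse, the route you describe cannot succeed as stated: if the cycle $C$ of $K$ contains an edge $a_ia_{i+1}$ with $f_1(a_ia_{i+1})>0$ and $f_2(a_ia_{i+1})>0$, then $F$ contains no all-blue internal path between $a_i$ and $a_{i+1}$, so chaining blue paths around $C$ does not produce a closed walk in $F$ at all, and acyclicity of $F$ gives you no contradiction for such a cycle. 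A cycle of $(V(G),F)$ can only be extracted from a cycle of $K$ every edge of which carries a zero of $f_1$ or of $f_2$, and even then the extraction requires an argument, because the chosen blue paths may intersect one another.

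The paper supplies precisely this missing argument. It considers a \emph{shortest} bad cycle $C$, assigns to each edge the index $\varphi(i)$ with $f_{\varphi(i)}(a_ia_{i+1})=0$ together with the corresponding internal all-blue path $Q_i\subseteq F_{\varphi(i)}$, and proves the $Q_i$ are pairwise internally disjoint: a shared interior vertex forces $\varphi(i)=\varphi(j)$ and yields internal blue paths between non-consecutive vertices of $C$, which for $|C|\geq 4$ puts a zero on a chord and produces a shorter bad cycle, contradicting minimality, and for $|C|=3$ forces all three $f_1$-values to vanish, contradicting condition (1). Only after this disjointness is established does the union of the $Q_i$ form a cycle in $(V(G),F)$. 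The uniqueness of paths in the forest $F$ that you invoke is no substitute for this minimality argument, and the edges with both $f_1$ and $f_2$ positive --- exactly the case you flag yourself --- must be excluded from the cycles under consideration rather than handled by bookkeeping, since for them no contradiction with acyclicity is available. As written, your proposal proves conditions (1)--(4) but not (5), so the $\oplus$-compatibility of $(I_1,I_2)$, and hence the claim, is not established.
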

\begin{proof}
Let $F\subseteq B(G)$ be a subset of blue edges realizing
$I$ in $G$ such that
$$
OPT_I(G)  = \sum_{uv \in F} \redbneck{G+I}{F}{u}{v}.
$$

For $i=1,2$, let $F_{i}:=F \cap E(G_{i})$, and let $I_{i}=(K_{i}, f_{i}, g_{i})$
be the $k$-graph obtained by letting, for every $ab \in E(K)$,
$f_{i}(ab)$ be the index $j\in [k]$ such that $c_{j} = \intbneck{G_{i}}{F_{i}}{a}{b}$,
and $g_{i}(ab) := \min\{g(ab), f_{i+1}(ab)\}$
(indices are taken modulo 2).
Observe that $F_{i}$ realizes $I_{i}$ in $G_{i}$, for $i=1,2$.

Let us show that $(I_{1}, I_{2})$ is $\oplus$-compatible with $I$.
Condition (1) from the definition of $\oplus$-compatibility is satisfied
because otherwise the graph $(V(G), F)$ would have a cycle.
It should be clear from the definitions of $I_{1}$ and $I_{2}$
that conditions (2), (3) and (4) are also
satisfied. Hence, it remains to check condition (5).
Arguing by contradiction, let us assume it is not satisfied, that is, that
there exists a cycle in $K$ containing two edges $e$ and $e'$
such that $f_{1}(e)=0$ and $f_{2}(e')=0$. Such a cycle is said to be {\sl bad}.

Let $C$ be a shortest bad cycle in $K$.
Consider an arbitrary orientation of $C$ and enumerate the vertices of $C$
as $a_{1}, a_{2}, \dots, a_{p}$, in order.
By condition (1), for every $i\in \{1, \dots, p\}$ there is a {\em unique}
index $j\in \{1,2\}$ such that $f_{j}(a_{i}a_{i+1})=0$ (indices are taken modulo $p$);
let $\varphi(i)$ denote this index.

Let $Q_{i}$ be the (unique) $a_{i}a_{i+1}$-path in $(V(G_{\varphi(i)}), F_{\varphi(i)})$,
for every $i\in \{1, \dots, p\}$. Note that $Q_{i}$ is necessarily an {\em internal}
$a_{i}a_{i+1}$-path, that is, $Q_{i}$ does not contain any other boundary vertex than
$a_{i}$ and $a_{i+1}$.
We claim that the $Q_{i}$'s are pairwise internally
disjoint. Assume this is not the case, that is, that $Q_{i}$ and $Q_{j}$ share
an internal vertex $v$ for some $i,j \in \{1, \dots, p\}$ with $i < j$.
Since $v$ is not a boundary vertex, we must have  $\varphi(i) = \varphi(j)$.
For simplicity, assume without loss of generality that $\varphi(i) = 1$.
For every $a\in \{a_{i}, a_{i+1}\}$ and $b\in \{a_{j}, a_{j+1}\}$ with $a\neq b$,
there is an internal $ab$-path in the union of $Q_{i}$ and $Q_{j}$, implying
$f_{1}(ab) = 0$. If $|C| \geq 4$ then $a$ and $b$ can be chosen such that
$ab$ is not an edge of $C$. Then the chord $ab$ splits $C$ into two cycles,
at least one of which is bad. However, this implies that there is a bad cycle in $K$ that is shorter than $C$, a contradiction. If $|C| = 3$, then it follows that
$f_{1}(a_{1}a_{2}) = f_{1}(a_{2}a_{3}) = f_{1}(a_{3}a_{1}) = 0$. But we also
have $f_{2}(a_{i}a_{i+1}) = 0$ for some $i\in \{1,2,3\}$ since $C$ is bad,
which contradicts condition (1). Since in both cases we reach a contradiction,
we deduce that the $Q_{i}$'s must be pairwise internally disjoint.

Let $C'$ be obtained from the cycle $C$ by replacing each edge $a_{i}a_{i+1}$ ($i\in \{1, \dots, p\}$)
with the path $Q_{i}$. Then $C'$ is a cycle, since $Q_{i}$ and $Q_{j}$ are internally disjoint
for every $i < j$, and is a subgraph of $(V(G), F)$, contradicting the fact that $F$ is acyclic.
Therefore, there cannot be any bad cycle in $K$, and condition (5) holds.

Now that we know that $(I_{1}, I_{2})$ is $\oplus$-compatible with $I$,
we may apply Lemma~\ref{lem-oplus-bneck}:
\begin{align*}
OPT_I(G)&= \sum_{uv \in F} \redbneck{G + I}{F}{u}{v} \\
 &= \sum_{uv \in F_{1}} \redbneck{G + I_{1}}{F_{1}}{u}{v}
+  \sum_{uv \in F_{2}} \redbneck{G + I_{2}}{F_{2}}{u}{v} \\
&\leq OPT_{I_1}(G_1) +OPT_{I_2}(G_2).
\end{align*}
\begin{flushright}\qed\end{flushright}\end{proof}

Next we prove:

\begin{claim}
\label{claim-opt-oplus-forall}
$OPT_{I}(G)  \geq OPT_{I_1}(G_1)  +OPT_{I_2}(G_2) $
holds for every $I_{1}, I_{2}$ such that $(I_{1},I_{2})$ is  {\sl {$\oplus$-compatible}} with $I$.
\end{claim}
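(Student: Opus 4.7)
The plan is to mirror the strategy used in Claim~\ref{claim-opt-series-forall} for the series-parallel case, exploiting the two preparatory lemmas just proved. The converse inequality is the "easy direction": given any $\oplus$-compatible pair $(I_1, I_2)$, we need to exhibit a single subset $F \subseteq B(G)$ realizing $I$ in $G$ whose contribution to $OPT_I(G)$ is at least the sum of the two separate optima.

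First I would dispose of the degenerate case: if $OPT_{I_1}(G_1) = -\infty$ or $OPT_{I_2}(G_2) = -\infty$, the inequality holds trivially (with the convention that $-\infty$ dominates any sum). Otherwise, by definition of $\oplus$-compatibility both $I_1$ and $I_2$ are realizable in $G_1$ and $G_2$ respectively, so I can choose optimal witnesses $F_\ell \subseteq B(G_\ell)$ with
$$
OPT_{I_\ell}(G_\ell) = \sum_{uv \in F_\ell} \redbneck{G_\ell + I_\ell}{F_\ell}{u}{v}
$$
for $\ell = 1, 2$, and each $F_\ell$ realizes $I_\ell$ in $G_\ell$.

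Next I would set $F := F_1 \cup F_2$. Applying Lemma~\ref{lem-oplus-F} to this $F$ shows that $F$ is acyclic and realizes $I$ in $G$, so $F$ is a valid candidate in the maximization defining $OPT_I(G)$. Therefore
$$
OPT_I(G) \geq \sum_{uv \in F} \redbneck{G + I}{F}{u}{v} = \sum_{uv \in F_1} \redbneck{G + I}{F}{u}{v} + \sum_{uv \in F_2} \redbneck{G + I}{F}{u}{v},
$$
where the split is possible because $F_1$ and $F_2$ share no edges (the two graphs intersect only on boundary vertices). Finally, Lemma~\ref{lem-oplus-bneck} applied edge-by-edge converts each summand on the right into $\redbneck{G_\ell + I_\ell}{F_\ell}{u}{v}$, and collecting the two sums gives $OPT_{I_1}(G_1) + OPT_{I_2}(G_2)$, establishing the claim.

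I do not anticipate any serious obstacle here: all the genuine combinatorial content (that $F_1 \cup F_2$ remains acyclic and realizes $I$, and that the bottleneck prices do not change when passing between $G_\ell + I_\ell$ and $G + I$) has already been handled by the two preceding lemmas. The only minor care needed is to be explicit that $E(G_1) \cap E(G_2) = \varnothing$ so that the single sum over $F$ splits cleanly, and to handle the $-\infty$ case up front.
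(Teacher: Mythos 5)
Your proposal is correct and follows essentially the same route as the paper's proof: pick optimal witnesses $F_1, F_2$, form $F = F_1 \cup F_2$, invoke Lemma~\ref{lem-oplus-F} to certify that $F$ realizes $I$, and then apply Lemma~\ref{lem-oplus-bneck} term-by-term to split the sum. The only cosmetic difference is that you add an up-front $-\infty$ check, which is in fact vacuous here since the definition of $\oplus$-compatibility already requires $I_1$ and $I_2$ to be realizable in $G_1$ and $G_2$.
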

\begin{proof}
Suppose that $(I_{1},I_{2})$ is  {\sl {$\oplus$-compatible}} with $I$.
Let $F_{i}\subseteq B(G_{i})$
($i=1,2$) be a subset of blue edges of $G_{i}$ realizing $I_{i}$
such that
$$
OPT_{I_i}(G_i)
= \sum_{uv \in F_{i}} \redbneck{G +I_{i}}{F_{i}}{u}{v}.
$$
By Lemma~\ref{lem-oplus-F}, $F:= F_{1} \cup F_{2}$ realizes
$I$ in $G$. By Lemma~\ref{lem-oplus-bneck}, we have:
\begin{align*}
OPT_{I}(G)  &\geq \sum_{uv \in F} \redbneck{G + I}{F}{u}{v} \\
&= \sum_{uv \in F_{1}} \redbneck{G + I_{1}}{F_{1}}{u}{v}
+  \sum_{uv \in F_{2}} \redbneck{G + I_{2}}{F_{2}}{u}{v} \\
&= OPT_{I_1}(G_1) +OPT_{I_2}(G_2).
\end{align*}
\begin{flushright}\qed\end{flushright}\end{proof}

Lemma~\ref{lem-opt-oplus} follows from Claims~\ref{claim-opt-oplus-exists} and~\ref{claim-opt-oplus-forall}.
\begin{flushright}\qed\end{flushright}\end{proof}

\subsection{The unary operator $\eta$}
Suppose that $G=\eta(G')$, that is, that $G$ is obtained from $G'$ by adding
a new isolated boundary vertex $\etanode{b}$ and labeling it $1$. Thus
the vertex $\etanode{a}$ with label $1$ in the boundary of $G'$ is no longer
a boundary vertex in $G$.

The graphs $G$ and $G'$ have exactly the same set of edges.
However, an $ab$-path between two distinct boundary vertices $a,b\in \partial(G) \cap \partial(G')$ that goes through $\etanode{a}$ is not an {\em internal} path in $G'$, but could be in $G$
(if the path does not contain any other boundary vertex).
This leads us to the following definition. Let $I=(K, f, g)$ be an arbitrary
$k$-graph on the boundary of $G$. Then a $k$-graph $I'=(K', f', g')$ on
the boundary of $G'$ is {\sl {$\eta$-compatible}} with $I$ if $I'$ is realizable in $G'$ and,
for every two distinct vertices $a,b\in \partial(G) \cap \partial(G')$,
the following four conditions hold:
\begin{enumerate}
\item[(1)] $f(ab)=\min \big \{f'(ab),\max\{f'(a\etanode{a}),f'(\etanode{a}b)\}\big\}$;
\item[(2)]  $g'(ab)=\min \big \{g(ab),\max\{g(a\etanode{b}),g(\etanode{b}b)\}\big\}$;
\item[(3)] $f(a\etanode{b})=k$, and
\item [(4)]$g'(a\etanode{a})=k$.
\end{enumerate}

\begin{lemma}
\label{lem-opt-eta}
Assume that $G$, $I$, and $G'$ are as above, and suppose further that $I$ is realizable
in $G$. Then
$$
\OPTtw_{I}(G) = \max \{ \OPTtw_{I'}(G) \mid
I' \text{\ is } \eta \text{-compatible with } I \}.
$$
\end{lemma}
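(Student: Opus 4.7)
My plan is to mirror the structure of the proof of Lemma~\ref{lem-opt-oplus}. The key observation is that $G$ and $G'$ have the same edge set (the operator $\eta$ only adds an isolated vertex $\etanode{b}$ and changes the boundary status of $\etanode{a}$), so every $F\subseteq B(G)=B(G')$ is acyclic in $G$ if and only if it is in $G'$. Thus the whole game is to understand how the internal paths on the boundary and the red shortcuts of $I$ versus $I'$ relate the values $\OPTtw_{I}(G)$ and $\OPTtw_{I'}(G')$.

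First I would prove an analogue of Lemma~\ref{lem-oplus-F}: if $I'$ is $\eta$-compatible with $I$ and $F$ realizes $I'$ in $G'$, then $F$ realizes $I$ in $G$. The only nontrivial point is the bottleneck condition $\intbneck{G}{F}{a}{b}=c_{f(ab)}$. When $b=\etanode{b}$, the set $\intpaths{G}{F}{a}{\etanode{b}}$ is empty (since $\etanode{b}$ is isolated in $G$), so the bottleneck is $c_{k}$, matching condition~(3). When $a,b\in\partial(G)\cap\partial(G')$, any internal $ab$-path of $G$ either avoids $\etanode{a}$ (hence is an internal $ab$-path of $G'$) or traverses $\etanode{a}$ exactly once as an interior vertex (hence splits into an internal $a\etanode{a}$-path and an internal $\etanode{a}b$-path of $G'$); taking the minimum $\mc$ over both possibilities gives precisely the expression in condition~(1).

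Next I would establish the analogue of Lemma~\ref{lem-oplus-bneck}: for every $uv\in F$,
$$\redbneck{G+I}{F}{u}{v}=\redbneck{G'+I'}{F}{u}{v}.$$
For $\leq$, given a bottleneck-minimizing path $P'$ in $\redpaths{G'+I'}{F}{u}{v}$, I would rewrite it into a path of $G+I$ by replacing each red shortcut $ab$ (with $a,b\in\partial(G)\cap\partial(G')$) either by the shortcut $ab$ of $G+I$ or by the two-edge detour through $\etanode{b}$, depending on which side of the $\min$ in condition~(2) is attained. A shortcut of $G'+I'$ incident to $\etanode{a}$ has cost $c_{k}$ by condition~(4), so its presence forces $\mc(P')=c_{k}$ and the inequality is trivial; traversals of $\etanode{a}$ as an interior vertex via original edges carry over unchanged since those edges are the same in $G$ and $G'$. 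For $\geq$, each shortcut $ab$ used by a minimizing path in $G+I$ maps to the shortcut $ab$ of $G'+I'$ (no larger by condition~(2)), and any passage through $\etanode{b}$, which must consist of exactly two shortcut edges since $\etanode{b}$ is isolated in $G$, collapses to the direct shortcut $ab$ of $G'+I'$ (again no larger by condition~(2)); one checks that the rewritten path still contains a red edge. Keeping this case analysis tight, particularly around the interplay of $\etanode{a}$ and $\etanode{b}$, will be the main obstacle.

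With the two lemmas in hand, Lemma~\ref{lem-opt-eta} follows the scheme of Lemma~\ref{lem-opt-oplus}. The direction $\geq$ is immediate: any $F$ realizing an $\eta$-compatible $I'$ optimally in $G'$ also realizes $I$ in $G$ with the same revenue. For $\leq$, given an optimal $F$ for $I$ in $G$, I would construct $I'=(K',f',g')$ on $\partial(G')$ by setting $c_{f'(ab)}:=\intbneck{G'}{F}{a}{b}$ for $a,b\in\partial(G')$, then $g'(ab):=\min\{g(ab),\max\{g(a\etanode{b}),g(\etanode{b}b)\}\}$ for $a,b\in\partial(G)\cap\partial(G')$, and $g'(a\etanode{a}):=k$; a routine verification confirms the four $\eta$-compatibility conditions, after which the bottleneck lemma yields $\OPTtw_{I}(G)\leq\OPTtw_{I'}(G')$.
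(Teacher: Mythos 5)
Your proposal follows essentially the same approach as the paper: you identify the two key ingredients -- a realization-transfer lemma (the paper's Lemma~\ref{lem-eta-F}) showing that $F$ realizing $I'$ in $G'$ also realizes $I$ in $G$, and a bottleneck-preservation lemma (the paper's Lemma~\ref{lem-eta-bneck}) giving $\redbneck{G+I}{F}{u}{v}=\redbneck{G'+I'}{F}{u}{v}$ -- and then close the argument by mirroring the two-claim structure of the $\oplus$ case, with exactly the paper's construction of $I'$.

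One caveat: the step you treat as immediate is actually the main technical content of the realization-transfer lemma. Decomposing internal $ab$-paths of $G$ into those that avoid $\etanode{a}$ and those that pass through it gives only the inequality
$\intbneck{G}{F}{a}{b}\geq\min\bigl\{\intbneck{G'}{F}{a}{b},\max\{\intbneck{G'}{F}{a}{\etanode{a}},\intbneck{G'}{F}{\etanode{a}}{b}\}\bigr\}$.
To show the reverse, you must exhibit an internal $ab$-path of $G$ attaining the right-hand side; when the second term of the $\min$ is strictly smaller, the natural candidate is the concatenation of bottleneck-achieving paths $P_1\in\intpaths{G'}{F}{a}{\etanode{a}}$ and $P_2\in\intpaths{G'}{F}{\etanode{a}}{b}$, but these may intersect at a vertex other than $\etanode{a}$, in which case their union is not a path. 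The paper resolves this by noting that any such intersection would produce an internal $ab$-path of $G'$ avoiding $\etanode{a}$ with $\mc$ at most $\max\{\mc(P_1),\mc(P_2)\}$, contradicting the assumed strict inequality; the same argument is needed a second time when you verify condition~(1) for the $I'$ you construct in the $\leq$ direction. A more minor point: in the $\leq$ direction of the bottleneck lemma, replacing several shortcut edges of $P'$ with two-edge detours through $\etanode{b}$ may give $\etanode{b}$ degree greater than two, so the result is a connected subgraph rather than a path; you then need to extract a $uv$-path from it and use acyclicity of $F$ to argue that this path still contains a red edge. Neither observation changes your plan, but calling these verifications ``routine'' understates where the actual work lies.
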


(Again, if $I$ is not realizable in $G$, then trivially $\OPTtw_{I}(G) = -\infty$.)
The proof of Lemma~\ref{lem-opt-eta} is split into a few lemmas, as in the previous section.
We begin with the following lemma.

\begin{lemma}
\label{lem-eta-F}
Suppose that $F' \subseteq B(G')$ realizes a $k$-graph $I'=(K',f',g')$ in $G'$ which is
$\eta$-compatible with $I$. Then $F:= F'$ realizes $I$ in $G$.
\end{lemma}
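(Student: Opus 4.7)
The plan is to verify the two requirements for $F=F'$ to realize $I$ in $G$: acyclicity and the correct internal bottleneck for every pair of boundary vertices of $G$. Acyclicity is immediate, since $F=F'$ is acyclic by hypothesis and $G$ and $G'$ share the same edge set; the only work is the bottleneck condition $\intbneck{G}{F}{a}{b}=c_{f(ab)}$ for every distinct $a,b\in\partial(G)$.

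I would split into two cases according to whether $\etanode{b}$ is one of $a,b$. If $b=\etanode{b}$ (or $a=\etanode{b}$), then $\etanode{b}$ is isolated in $G$, so $\intpaths{G}{F}{a}{\etanode{b}}=\varnothing$, giving $\intbneck{G}{F}{a}{\etanode{b}}=c_{k}$; condition~(3) of $\eta$-compatibility yields $f(a\etanode{b})=k$, matching this value.

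The main case is $a,b\in\partial(G)\cap\partial(G')$. The key observation is a bijection-of-types between internal $ab$-paths in the two graphs: an internal $ab$-path in $G$ either avoids $\etanode{a}$ (in which case it is also an internal $ab$-path of $G'$, as no boundary vertex of $G'\setminus\{\etanode{a}\}$ is internal to it and $\etanode{b}$ is isolated) or it passes through $\etanode{a}$ (in which case each of its two subpaths at $\etanode{a}$ is an internal path of $G'$, since they meet no boundary vertex of $G'$ other than their endpoints). Using this, I would prove
\[
\intbneck{G}{F}{a}{b}=\min\bigl\{\intbneck{G'}{F'}{a}{b},\ \max\{\intbneck{G'}{F'}{a}{\etanode{a}},\intbneck{G'}{F'}{\etanode{a}}{b}\}\bigr\}.
\]
The $\geq$ direction comes from taking a minimizing internal $ab$-path in $G$ and using the case split above; the $\leq$ direction comes from concatenating optimal internal $a\etanode{a}$- and $\etanode{a}b$-paths of $G'$ (and, separately, from any optimal internal $ab$-path of $G'$). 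Since $F'$ realizes $I'$ in $G'$, the right-hand side equals $\min\{c_{f'(ab)},\max\{c_{f'(a\etanode{a})},c_{f'(\etanode{a}b)}\}\}$, which by condition~(1) of $\eta$-compatibility is precisely $c_{f(ab)}$.

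The only delicate point, and the step I expect to need the most care, is the $\leq$ direction of the bottleneck identity: concatenating optimal internal $a\etanode{a}$- and $\etanode{a}b$-paths of $G'$ gives a walk, not necessarily a simple path, and we must extract from it an internal $ab$-path of $G$ with no larger $\mc$. This is handled by noting that any simple $ab$-path contained in the walk uses only a subset of its edges (so $\mc$ does not increase), and has no internal vertex that is a boundary vertex of $G$, because internal vertices of the two $G'$-internal subpaths avoid $\partial(G')$ and $\etanode{b}$ is isolated.
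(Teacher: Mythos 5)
Your proposal is correct and follows essentially the same route as the paper's proof: dispense with acyclicity, treat the $\etanode{b}$ case via condition~(3), and establish the identity $\intbneck{G}{F}{a}{b}=\min\{\intbneck{G'}{F'}{a}{b},\max\{\intbneck{G'}{F'}{a}{\etanode{a}},\intbneck{G'}{F'}{\etanode{a}}{b}\}\}$ by the same path-decomposition-at-$\etanode{a}$ argument, after which condition~(1) closes the case. The only stylistic difference is that the paper handles the concatenation step by splitting on whether $\intbneck{G'}{F'}{a}{b}$ exceeds the max and, in the hard subcase, showing $P_{1}$ and $P_{2}$ are internally disjoint by contradiction, whereas you avoid the case split by extracting a simple internal $ab$-path from the walk $P_{1}\cup P_{2}$ directly; both are sound.
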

\begin{proof}
Since $F'$ realizes $I'$ in $G'$, the set $F=F'$ is acyclic, we are left with proving
that $\intbneck{G}{F}{a}{b}=c_{f(ab)}$ for every edge $ab\in E(K)$. Let thus $ab$
be an arbitrary edge in $E(K)$.

First suppose that $a$ or $b$ is equal to $\etanode{b}$, say without loss of generality
$b=\etanode{b}$. Since $b$ is an isolated vertex of $G$,
we have $\intpaths{G}{F}{a}{b}=\varnothing$, and thus $\intbneck{G}{F}{a}{b}=c_k$.
We also have $f(ab)=k$ by condition (3) from the definition of $\eta$-compatibility; hence $\intbneck{G}{F}{a}{b}=c_{f(ab)}$ as desired.

Next suppose that $a,b \neq \etanode{b}$.
For every path $P \in \intpaths{G}{F}{a}{b}$, either $P$ includes the vertex $\etanode{a}$ or not.
If $\etanode{a}\not \in V(P)$, then $P$ is also an internal $ab$-path in $G'$. If $\etanode{a}\in V(P)$, then $P$ is not internal in $G'$ but $P$ is the concatenation of an internal $a\etanode{a}$-path
$P_{1}$ in $G'$ with an internal $\etanode{a}b$-path $P_{2}$ in $G'$, and thus
$\mc(P) =  \max\{ \mc(P_1), \mc(P_2)\}$. It follows that
$$
\intbneck{G}{F}{a}{b} \geq \min \big \{   \intbneck{G'}{F}{a}{b},  \max\{ \intbneck{G'}{F}{a}{\etanode{a}},  \intbneck{G'}{F}{\etanode{a}}{b}  \} \big\}.
$$
Let us show that the reverse inequality also holds.
This is easy to see
if $\intbneck{G'}{F}{a}{b} \leq \max\{ \intbneck{G'}{F}{a}{\etanode{a}},  \intbneck{G'}{F}{\etanode{a}}{b}  \} \big\}$, since
every path in $\intpaths{G'}{F}{a}{b}$ is included in $\intpaths{G}{F}{a}{b}$, implying
$\intbneck{G}{F}{a}{b} \leq \intbneck{G'}{F}{a}{b}$.

Let us thus assume $\intbneck{G'}{F}{a}{b} > \max\{ \intbneck{G'}{F}{a}{\etanode{a}},  \intbneck{G'}{F}{\etanode{a}}{b}  \} \big\}$, and let
$P_{1} \in \intpaths{G'}{F}{a}{\etanode{a}}$
and $P_{2} \in \intpaths{G'}{F}{\etanode{a}}{b}$
be such that $\mc(P_{1}) = \intbneck{G'}{F}{a}{\etanode{a}}$
and $\mc(P_{2}) = \intbneck{G'}{F}{\etanode{a}}{b}$.
Then $P_{1}$ and $P_{2}$ cannot
have another vertex in common than $\etanode{a}$, because otherwise their union would
contain an $ab$-path $P$ avoiding $\etanode{a}$,  which is thus in
$\intpaths{G'}{F}{a}{b}$. This in turn implies
$\intbneck{G'}{F}{a}{b} \leq \mc(P) \leq \max\{ \mc(P_1), \mc(P_2)\} = \max\{ \intbneck{G'}{F}{a}{\etanode{a}},  \intbneck{G'}{F}{\etanode{a}}{b}  \} \big\}$, which contradicts our hypothesis.
Hence, $V(P_{1}) \cap V(P_{2}) = \{\etanode{a}\}$, and
the concatenation of $P_{1}$ and $P_{2}$ gives an $ab$-path
$P$ which is internal in $G$ (but not in $G'$), and which is thus included in
$\intpaths{G}{F}{a}{b}$. This implies
$\intbneck{G}{F}{a}{b} \leq \mc(P) = \max\{ \mc(P_1), \mc(P_2)\} = \max\{ \intbneck{G'}{F}{a}{\etanode{a}},  \intbneck{G'}{F}{\etanode{a}}{b}  \}$, as desired.

Therefore,
\begin{align*}
\intbneck{G}{F}{a}{b} &= \min \big \{   \intbneck{G'}{F}{a}{b},  \max\{ \intbneck{G'}{F}{a}{\etanode{a}},  \intbneck{G'}{F}{\etanode{a}}{b}  \} \big\} \\
&= \min \big \{   c_{f'(ab)},  \max\{ c_{f'(a\etanode{a})}, c_{f'(\etanode{a}b)}  \} \big\},
\end{align*}
which is equal to $c_{f(ab)}$ by condition (1).
\begin{flushright}\qed\end{flushright}\end{proof}

\begin{lemma}
\label{lem-eta-bneck}
Let $I' = (K', f', g')$ and $F'$ be as in Lemma~\ref{lem-eta-F}, and let $F:= F'$.
Then, for every edge $uv \in F$,
$$
\redbneck{G+I}{F}{u}{v} = \redbneck{G'+I'}{F}{u}{v}.
$$
\end{lemma}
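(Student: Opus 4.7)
The plan is to prove the two inequalities $\redbneck{G+I}{F}{u}{v} \leq \redbneck{G'+I'}{F}{u}{v}$ and $\redbneck{G+I}{F}{u}{v} \geq \redbneck{G'+I'}{F}{u}{v}$ separately, mirroring Claims~\ref{claim-oplus-leq}--\ref{claim-oplus-geq} in the proof of Lemma~\ref{lem-oplus-bneck}. The key observation is that the red edges added by $I$ and $I'$ differ in a controlled way: those joining two vertices of $\partial(G)\cap\partial(G')$ have related $g$- and $g'$-values by condition~(2), while the edges of $G'+I'$ incident to $\etanode{a}$ all have cost $c_k$ by condition~(4) and thus cannot appear on any path of $\mc$ strictly less than $c_k$. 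Only conditions~(2) and~(4) will be used; conditions~(1) and~(3), which constrain $f$, have already played their role in Lemma~\ref{lem-eta-F}.

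For the $\leq$ direction, I would start from $P' \in \redpaths{G'+I'}{F}{u}{v}$ with $\mc(P')=\redbneck{G'+I'}{F}{u}{v}$. If $P'$ uses any added edge incident to $\etanode{a}$, condition~(4) gives $\mc(P')\geq c_k$ and the inequality is immediate; otherwise every added edge of $P'$ has both endpoints $a,b\in\partial(G)\cap\partial(G')$. For each such edge, condition~(2) either lets me keep the corresponding added edge of $G+I$ unchanged (when $g'(ab)=g(ab)$) or, when $g'(ab)<g(ab)$, replace it by the detour $a\to\etanode{b}\to b$ through the two added edges of $G+I$, whose bottleneck is exactly $c_{g'(ab)}$. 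If several edges of $P'$ trigger the detour option, I short-circuit them: letting $a_1b_1,\dots,a_rb_r$ be those edges in the order they appear along $P'$, replace the entire subpath $a_1 P' b_r$ by the single detour $a_1\to\etanode{b}\to b_r$. The resulting $P^*$ visits $\etanode{b}$ at most once, is a simple $uv$-path in $G+I$, contains at least one red edge (either an added edge picked up on the detour, or a red edge inherited from the retained pieces of $P'$), and satisfies $\mc(P^*)\leq\mc(P')$ since $c_{g(a_1\etanode{b})}\leq c_{g'(a_1b_1)}\leq\mc(P')$ and $c_{g(\etanode{b}b_r)}\leq c_{g'(a_rb_r)}\leq\mc(P')$ by condition~(2), while the retained subpaths of $P'$ already have $\mc$ bounded by $\mc(P')$.

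For the $\geq$ direction, I would take $P\in\redpaths{G+I}{F}{u}{v}$ achieving the bottleneck. Since $\etanode{b}$ is isolated in $G$, if $P$ visits $\etanode{b}$ it does so via a two-edge segment $a\to\etanode{b}\to b$ with both edges in $I$; I collapse this into the corresponding single added edge of $G'+I'$, whose cost is bounded by $\max\{c_{g(a\etanode{b})},c_{g(\etanode{b}b)}\}$ via condition~(2). Any other added edge $ab$ of $P$ (necessarily with $a,b\in\partial(G)\cap\partial(G')$) is rewritten as the corresponding added edge of $G'+I'$, whose cost is at most $c_{g(ab)}$ by~(2) as well. Red edges of $R(G)=R(G')$ remain untouched, and added edges replaced by added edges remain red, so the resulting $P'$ lies in $\redpaths{G'+I'}{F}{u}{v}$ with $\mc(P')\leq\mc(P)$. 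I expect the main obstacle to be the simple-path requirement in the $\leq$ direction: naive edge-by-edge substitution can force repeated visits to $\etanode{b}$, and the shortcut step described above is the device that restores simplicity without loosening the $\mc$-bound. Once simplicity is handled, both directions reduce to the cost inequalities built into conditions~(2) and~(4).
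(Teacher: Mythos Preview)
Your proposal is correct and follows essentially the same route as the paper: prove the two inequalities separately, dispose of extra edges incident to~$\etanode{a}$ in $G'+I'$ via condition~(4), and translate the remaining extra red edges between $G+I$ and $G'+I'$ using condition~(2). The only cosmetic difference is in the $\leq$ direction, where the paper replaces \emph{every} detour edge to obtain a connected subgraph~$H$ of $G+I$ (possibly with $\etanode{b}$ of high degree) and then extracts any $uv$-path from it, whereas you shortcut through~$\etanode{b}$ once using the first and last detour edges; both devices resolve the same simplicity obstruction and give the same bound.
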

\begin{proof}
For every $ab \in E(K)$, let $e^{ab}$ be the extra red edge in $G+I$
between the boundary vertices $a$ and $b$. Similarly,
for every $ab \in E(K')$, let $e'^{ab}$ be the extra red edge in $G'+I'$
between the boundary vertices $a$ and $b$.

Let $uv \in F$. The proof consists of three claims.

\begin{claim}
\label{claim-eta-empty}
If $\redpaths{G+I}{F}{u}{v}=\varnothing$ or
$\redpaths{G' + I'}{F}{u}{v}=\varnothing$ then
$\redbneck{G + I}{F}{u}{v} = \redbneck{G' + I'}{F}{u}{v} = c_k$.
\end{claim}

\begin{proof}

First suppose that $\redpaths{G + I}{F}{u}{v}=\varnothing$. Then $\redbneck{G + I}{F}{u}{v} = c_k$ by definition. If $\redpaths{G'+I'}{F}{u}{v}=\varnothing$ as well then $\redbneck{G + I}{F}{u}{v} = \redbneck{G' + I'}{F}{u}{v} = c_k$, and we are done. Let us thus assume that $\redpaths{G'+I'}{F}{u}{v}$
is not empty. Every path $P\in \redpaths{G' + I'}{F}{u}{v}$ contains an extra red edge
of the form $e^{'ab}$ with $a$ or $b$ being equal to $\etanode{a}$,
since $\redpaths{G + I}{F}{u}{v}=\varnothing$. The cost of this extra edge is
$c_{g'(ab)}$, which is equal to $c_{k}$ by condition (4). It follows that
$\mc(P) = c_k$, and hence $\redbneck{G' + I'}{F}{u}{v} = c_k$, as desired.

Now assume that $\redpaths{G' + I'}{F}{u}{v}=\varnothing$. We show that this implies
$\redpaths{G + I}{F}{u}{v}=\varnothing$ as well, which reduces this case to the case treated above.
Arguing by contradiction, suppose that
$\redpaths{G + I}{F}{u}{v}\neq \varnothing$, and let $P\in \redpaths{G + I}{F}{u}{v}$.
Since $\redpaths{G' + I'}{F}{u}{v}=\varnothing$, the path $P$ must contain
the vertex $\etanode{b}$. The two edges of $P$ incident to $\etanode{b}$
are extra red edges of the form $e^{a\etanode{b}}$ and $e^{b\etanode{b}}$, respectively,
with $a, b \in \partial(G) \cap \partial(G')$ and $a\neq b$. However, replacing the subpath
of $P$ consisting of these two edges with the edge $e^{ab}$ gives a path in
$\redpaths{G + I}{F}{u}{v}$ avoiding $\etanode{b}$, implying that
$\redpaths{G' + I'}{F}{u}{v}$ is not empty, a contradiction. The claim follows.
\begin{flushright}\qed\end{flushright}\end{proof}

\begin{claim}
\label{claim-eta-leq}
If $\redpaths{G+I}{F}{u}{v}\neq \varnothing$ and
$\redpaths{G'+I'}{F}{u}{v}\neq \varnothing$ then $\redbneck{G+I}{F}{u}{v} \leq \redbneck{G' + I'}{F}{u}{v}$.
\end{claim}
\begin{proof}
We have to show that $\redbneck{G+ I}{F}{u}{v} \leq \mc(P')$
for every path $P' \in \redpaths{G'+I'}{F}{u}{v}$. Consider such a path $P'$.
If $P'$ contains no extra red edge
(that is, a red edge of the form $e'^{ab}$ with $a,b \in \partial(G')$),
then $P' \in \redpaths{G + I}{F}{u}{v}$, and $\redbneck{G+ I}{F}{u}{v} \leq \mc(P')$
holds. Thus we may assume that $P'$ contains at least one such edge.

If $P'$ includes an edge of the form $e'^{ab}$ with $a$ or $b$ being equal to $\etanode{a}$, then
this edge has cost $c_{g'(ab)}=c_{k}$ by condition (4), implying $\mc(P')=c_{k}$, and thus we have
$\redbneck{G+ I}{F}{u}{v} \leq c_{k} = \mc(P')$. Hence we may assume that $P'$ has no such edge.

Let $H$ be the subgraph of $G+I$ obtained from $P'$ as follows: for each
each extra red edge $e'^{ab}$ included in $P'$, replace $e'^{ab}$ with
$e^{ab}$ if $g'(ab)=g(ab)$, with the path consisting of the two edges $e^{a\etanode{b}}$,
$e^{\etanode{b}b}$ otherwise. Note that $H$ is connected but is not necessarily a path, since the
vertex $\etanode{b}$ could have degree more than $2$ in $H$. On the other hand, we have
$\mc(H) = \mc(P')$ by condition (2). Also, note that every $uv$-path in $H$ contains at least one red
edge (since the edges of $H$ not in $P'$ are all red). Let $P$ be such a path. Then
$\mc(P) \leq \mc(H) = \mc(P')$. Since $P$
is in $\redpaths{G + I}{F}{u}{v}$, it follows that
$\redbneck{G+I}{F}{u}{v} \leq \mc(P) \leq \mc(P')$, as desired.
\begin{flushright}\qed\end{flushright}\end{proof}

\begin{claim}
\label{claim-eta-geq}
If $\redpaths{G+I}{F}{u}{v}\neq \varnothing$ and
$\redpaths{G' + I'}{F}{u}{v}\neq \varnothing$ then $\redbneck{G+ I}{F}{u}{v} \geq \redbneck{G' + I'}{F}{u}{v}$.
\end{claim}
\begin{proof}
We have to show that $\redbneck{G'+ I'}{F}{u}{v} \leq \mc(P)$
for every path $P \in \redpaths{G+I}{F}{u}{v}$. Consider such a path $P$.
We proceed similarly as in the proof of the previous claim.

If $P$ contains no extra red edge of $G+I$
then $P \in \redpaths{G' + I'}{F}{u}{v}$, and $\redbneck{G' + I'}{F}{u}{v} \leq \mc(P)$
holds. Thus we may assume that $P$ contains at least one such edge.

Let $P'$ be the path obtained from $P$ as follows:
First, for each extra red edge $e^{ab}$ in $P$ with $a,b \neq \etanode{b}$, replace $e^{ab}$ with
$e'^{ab}$. Now, if $P$ includes the vertex $\etanode{b}$, then it has two
extra red edges of the form $e^{a\etanode{b}}$ and $e^{b\etanode{b}}$, respectively,
with $a, b \in \partial(G) \cap \partial(G')$ and $a\neq b$. Replace then the subpath of $P$
consisting of these two edges with the edge $e^{ab}$.
The resulting path $P'$ is in $\redpaths{G'+I'}{F}{u}{v}$. Moreover, it
follows from condition (2) that $\mc(P') \le \mc(P)$. Therefore,
$\redbneck{G'+ I'}{F}{u}{v} \leq \mc(P') \le \mc(P)$, as claimed.
\begin{flushright}\qed\end{flushright}\end{proof}

Lemma~\ref{lem-eta-bneck} follows from Claims~\ref{claim-eta-empty}, ~\ref{claim-eta-leq}, and~\ref{claim-eta-geq}.
\begin{flushright}\qed\end{flushright}\end{proof}

We may now proceed with the proof of Lemma~\ref{lem-opt-eta}.

\begin{proof}[Proof of Lemma~\ref{lem-opt-eta}]
We first show:
\begin{claim}
\label{claim-opt-eta-exists}
There exists a $k$-graph $I'=(K',f',g')$ on the boundary of $G'$
such that $I'$ is $\eta$-compatible with $I$
and $\OPTtw_I(G) \leq \OPTtw_{I'}(G')$.
\end{claim}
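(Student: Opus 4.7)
The plan is to mirror the forward-direction arguments in Claims~\ref{claim-opt-series-exists} and~\ref{claim-opt-oplus-exists}: start from an optimal witness for $\OPTtw_I(G)$, transport it to $G'$ (which has the same edge set), build an $I'$ that it automatically realizes, and then finish by a term-by-term application of Lemma~\ref{lem-eta-bneck}. Concretely, fix $F \subseteq B(G)$ realizing $I$ in $G$ with $\OPTtw_I(G) = \sum_{uv \in F} \redbneck{G+I}{F}{u}{v}$, and set $F' := F$, which is acyclic since $F$ is.

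The construction of $I' = (K', f', g')$ is dictated by the compatibility conditions. For realizability, I define $f'(ab)$, for every pair of distinct boundary vertices $a,b \in \partial(G')$, as the index $j \in [k]$ with $c_j = \intbneck{G'}{F'}{a}{b}$; by construction, $F'$ realizes $I'$ in $G'$. For the weight function $g'$, I make conditions~(2) and~(4) of $\eta$-compatibility hold by fiat: for $a, b \in \partial(G) \cap \partial(G')$, set $g'(ab) := \min\{g(ab), \max\{g(a\etanode{b}), g(\etanode{b}b)\}\}$, and for the edges incident to $\etanode{a}$ set $g'(a\etanode{a}) := k$.

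The remaining work is to verify the two conditions not taken care of by the construction, namely (1) and (3). Condition (3), $f(a\etanode{b}) = k$, is forced by the realizability of $I$: the vertex $\etanode{b}$ is isolated in $G$, so $\intpaths{G}{F}{a}{\etanode{b}} = \varnothing$ and hence $\intbneck{G}{F}{a}{\etanode{b}} = c_k$. Condition (1) is the only genuinely nontrivial step. For $a, b \in \partial(G) \cap \partial(G')$, the very case analysis performed in the proof of Lemma~\ref{lem-eta-F} establishes the bottleneck identity
$$\intbneck{G}{F}{a}{b} = \min\Bigl\{\intbneck{G'}{F'}{a}{b},\ \max\{\intbneck{G'}{F'}{a}{\etanode{a}}, \intbneck{G'}{F'}{\etanode{a}}{b}\}\Bigr\}.$$
Translating through the defining relations $c_{f(ab)} = \intbneck{G}{F}{a}{b}$ and $c_{f'(\cdot)} = \intbneck{G'}{F'}{\cdot}$, and using that $c_0 < c_1 < \cdots < c_k$ is strictly increasing so that $\min$ and $\max$ commute with the indexing $j \mapsto c_j$, yields exactly condition (1).

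Once $\eta$-compatibility is established, Lemma~\ref{lem-eta-bneck} applied edge by edge gives $\redbneck{G+I}{F}{u}{v} = \redbneck{G'+I'}{F'}{u}{v}$ for every $uv \in F$, so
$$\OPTtw_I(G) = \sum_{uv \in F} \redbneck{G+I}{F}{u}{v} = \sum_{uv \in F'} \redbneck{G'+I'}{F'}{u}{v} \leq \OPTtw_{I'}(G'),$$
which finishes the claim. The only real obstacle in executing this plan is the $\leq$ direction of the bottleneck identity feeding condition (1), which needs a short ``uncrossing'' argument to rule out that two bottleneck-achieving internal paths through $\etanode{a}$ share an interior vertex; this is already carried out inside Lemma~\ref{lem-eta-F}, so here it amounts to a citation. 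Everything else is bookkeeping and unwinding of definitions.
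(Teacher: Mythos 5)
Your proposal is correct and matches the paper's own proof essentially step for step: same witness $F$, same definition of $f'$ via $\intbneck{G'}{F}{a}{b}$, same choice of $g'$ enforcing conditions (2) and (4), the same observation for condition (3), and the same finish via Lemma~\ref{lem-eta-bneck}. The only cosmetic difference is that for condition (1) the paper re-derives the bottleneck identity in full inside the claim's proof, whereas you cite the case analysis and uncrossing argument embedded in the proof of Lemma~\ref{lem-eta-F}; that shortcut is legitimate because that sub-argument nowhere uses the $\eta$-compatibility hypothesis, though you should say so explicitly to avoid any appearance of circularity with the lemma's stated hypotheses.
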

\begin{proof}
Let $F\subseteq B(G)$ be a subset of blue edges realizing
$I$ in $G$ such that
$$
\OPTtw_I(G)  = \sum_{uv \in F} \redbneck{G + I}{F}{u}{v}.
$$

Let $I'=(K',f',g')$ be the $k$-graph on the boundary of $G'$ defined
by setting, for every $ab\in E(K')$, $f'(ab) := j$ where $j$ is the index in $[k]$
such that $c_{j} = \intbneck{G'}{F}{a}{b}$, and letting
$g'(ab) := \min \big \{g(ab), \max\{ g'(a\etanode{a}),  g'(\etanode{a}b)\} \big\}$
for every two distinct vertices $a, b\in \partial(G') \setminus \{\etanode{a}\}$, and
$g'(a\etanode{a}) := k$ for every $a \in \partial(G') \setminus \{\etanode{a}\}$.
By definition, the set $F$ realizes $I'$ in $G'$.
Let us show that $I'$ is $\eta$-compatible  with $I$. By definition, $I'$
satisfies conditions (2) and (4) of the definition of $\eta$-compatibility. Also, condition (3)
is satisfied, since $\etanode{b}$ is isolated in $G$.
Thus it remains to show that $f(ab) = \min \{ f'(ab), \max\{f'(a\etanode{a}),f'(\etanode{a}b)\}\}$
for every two distinct vertices $a,b \in \partial(G) \cap \partial(G')$.
Consider two such vertices $a$ and $b$.

First we show that
$f(ab) \leq \min \{ f'(ab), \max\{f'(a\etanode{a}),f'(\etanode{a}b)\}\}$.
If $f'(ab) \leq \max\{f'(a\etanode{a}),f'(\etanode{a}b)\}$, then either $f'(ab) = k$
and the claimed upper bound on $f(ab)$ trivially holds, or $f'(ab) < k$ and hence
there is a path $P' \in \intpaths{G'}{F}{a}{b}$ with $\mc(P') = \intbneck{G'}{F}{a}{b} = c_{f'(ab)}$.
The path $P'$ is also included in $\intpaths{G'}{F}{a}{b}$; hence
$\intbneck{G}{F}{a}{b} \leq c_{f'(ab)}$, which implies $f(ab) \leq f'(ab)$ (since $F$
realizes $I$ in $G$). Now suppose that
$f'(ab) > \max\{f'(a\etanode{a}),f'(\etanode{a}b)\}$. Since the righthand side
of this inequality is strictly less than
$k$, both $\intpaths{G'}{F}{a}{\etanode{a}}$ and $\intpaths{G'}{F}{\etanode{a}}{b}$ are nonempty.
Let $P'_{1} \in \intpaths{G'}{F}{a}{\etanode{a}}$ and $P'_{2} \in \intpaths{G'}{F}{\etanode{a}}{b}$
be paths such that $\mc(P'_{1}) = c_{f'(a\etanode{a})}$ and $\mc(P'_{2}) = c_{f'(\etanode{a}b)}$.
These two paths cannot have any vertex in common other than $\etanode{a}$,
because otherwise their union would contain an $ab$-path $P^{*}$ with $\mc(P^{*}) \leq
\max\{\mc(P'_{1}), \mc(P'_{2})\}$ and avoiding $\etanode{a}$, which would imply
$f'(ab) \leq \max\{f'(a\etanode{a}),f'(\etanode{a}b)\}$, contradicting our hypothesis.
Thus the concatenation of $P'_{1}$ and $P'_{2}$ gives an $ab$-path $P$ which is
internal in $G$ (but not in $G'$) satisfying $\mc(P) = \max\{\mc(P'_{1}), \mc(P'_{2})\}$.
Since $\intbneck{G}{F}{a}{b} \leq \mc(P)$, we deduce that
$f(ab) \leq \max\{f'(a\etanode{a}),f'(\etanode{a}b)\}$, as desired.

Next we prove that
$f(ab) \geq \min \{ f'(ab), \max\{f'(a\etanode{a}),f'(\etanode{a}b)\}\}$.
This is obviously true if $\intpaths{G}{F}{a}{b}$ is empty, so let us assume this is not the case
and let $P\in \intpaths{G}{F}{a}{b}$ be such that $\mc(P) = c_{f(ab)}$.
If $P$ does not include the vertex $\etanode{a}$, then
$P\in \intpaths{G'}{F}{a}{b}$ and hence $\mc(P) \geq \intbneck{G'}{F}{a}{b}$,
implying $f(ab) \geq f'(ab)$.
If $P$ includes $\etanode{a}$, the path $P$ is the concatenation of an $a\etanode{a}$-path $P_{1}$
from $\intpaths{G'}{F}{a}{\etanode{a}}$ with an $\etanode{a}b$-path $P_{2}$ from
$\intpaths{G'}{F}{\etanode{a}}{b}$, implying
$\mc(P) = \max\{\mc(P_{1}), \mc(P_{2})\} \geq \max\{\intbneck{G'}{F}{a}{\etanode{a}},\intbneck{G'}{F}{\etanode{a}}{b}\}$, and hence
$f(ab) \geq \max\{f'(a\etanode{a}),f'(\etanode{a}b)\}$, as desired.

Therefore, $f(ab) = \min \{ f'(ab), \max\{f'(a\etanode{a}),f'(\etanode{a}b)\}\}$
holds, and $I'$ is $\eta$-compatible with $I$.
Now we may apply Lemma~\ref{lem-eta-bneck}, giving
$$
\OPTtw_I(G)= \sum_{uv \in F} \redbneck{G + I}{F}{u}{v}
 = \sum_{uv \in F} \redbneck{G' + I'}{F}{u}{v}
\leq \OPTtw_{I'}(G').
$$
\begin{flushright}\qed\end{flushright}\end{proof}

Next we prove:

\begin{claim}
\label{claim-opt-eta-forall}
$\OPTtw_{I}(G)  \geq \OPTtw_{I'}(G)$
holds for every $k$-graph $I'=(K',f',g')$ on the boundary of $G'$
such that $I'$ is $\eta$-compatible with $I$.
\end{claim}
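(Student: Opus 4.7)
The plan is to mirror the structure of Claim~\ref{claim-opt-oplus-forall}: starting from an optimal realizer $F'$ of $I'$ in $G'$, we reuse it verbatim as a realizer of $I$ in $G$, and then observe that the per-edge contributions do not change when we pass from $(G',I')$ to $(G,I)$. Concretely, fix a subset $F' \subseteq B(G')$ realizing $I'$ in $G'$ such that
$$
\OPTtw_{I'}(G') = \sum_{uv \in F'} \redbneck{G'+I'}{F'}{u}{v},
$$
and set $F := F'$. Since $\eta$ does not alter the edge set (it only relabels boundary vertices and introduces an isolated vertex $\etanode{b}$), we have $F \subseteq B(G)$.

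The first step is to verify that $F$ realizes $I$ in $G$. This is exactly the content of Lemma~\ref{lem-eta-F}, which applies because $I'$ is $\eta$-compatible with $I$ by assumption. Hence $F$ is acyclic and $\intbneck{G}{F}{a}{b} = c_{f(ab)}$ for every pair of distinct boundary vertices of $G$, so $F$ is a feasible candidate in the maximization that defines $\OPTtw_I(G)$.

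The second step is to compare the per-edge contributions in the two objective functions. For every $uv \in F$, Lemma~\ref{lem-eta-bneck} yields $\redbneck{G+I}{F}{u}{v} = \redbneck{G'+I'}{F}{u}{v}$. Combining these two observations,
$$
\OPTtw_I(G) \;\geq\; \sum_{uv \in F} \redbneck{G+I}{F}{u}{v} \;=\; \sum_{uv \in F'} \redbneck{G'+I'}{F'}{u}{v} \;=\; \OPTtw_{I'}(G'),
$$
where the first inequality uses that $F$ realizes $I$ and the definition of $\OPTtw_I(G)$ as a maximum, the middle equality is Lemma~\ref{lem-eta-bneck}, and the last equality is the choice of $F'$.

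There is essentially no obstacle beyond invoking the already-established lemmas: all the substantive combinatorial work about rerouting paths through $\etanode{a}$ or $\etanode{b}$ has been absorbed into Lemmas~\ref{lem-eta-F} and~\ref{lem-eta-bneck}, so the claim amounts to packaging them together, exactly as Claim~\ref{claim-opt-oplus-forall} packages Lemmas~\ref{lem-oplus-F} and~\ref{lem-oplus-bneck} in the $\oplus$ case.
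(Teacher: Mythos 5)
Your proof is correct and matches the paper's argument exactly: choose an optimal realizer $F'$ of $I'$ in $G'$, reuse it as $F$ in $G$ via Lemma~\ref{lem-eta-F}, and apply Lemma~\ref{lem-eta-bneck} termwise to conclude $\OPTtw_I(G) \geq \OPTtw_{I'}(G')$. There is nothing further to add.
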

\begin{proof}
Let $F'\subseteq B(G')$
be a subset of blue edges of $G'$ such that
$$
\OPTtw_{I'}(G')
= \sum_{uv \in F'} \redbneck{G' + I'}{F'}{u}{v}.
$$
By Lemma~\ref{lem-eta-F}, $F:= F'$ realizes
$I$ in $G$. Using again Lemma~\ref{lem-eta-bneck}, we have:
$$
\OPTtw_{I}(G)  \geq \sum_{uv \in F} \redbneck{G + I}{F}{u}{v}
= \sum_{uv \in F} \redbneck{G' + I'}{F}{u}{v}
= \OPTtw_{I'}(G') .
$$
\begin{flushright}\qed\end{flushright}\end{proof}

Lemma~\ref{lem-opt-eta} follows from Claims~\ref{claim-opt-eta-exists} and~\ref{claim-opt-eta-forall}.
\begin{flushright}\qed\end{flushright}\end{proof}

 \subsection{The unary operator $\epsilon$}
 If $G=\epsilon(G')$, then $G$ is obtained from $G'$ by adding an edge $e^{*}$ between
 the two boundary vertices labeled $1$ and $2$.
Notice that $G=G'\oplus H$, where $H$ is the $t$-boundaried graph having only boundary vertices, and only the edge $e^{*}$. Thus, instead of dealing with the $\epsilon$ operator we can use the $\oplus$ operator that we already treated, and introduce two new null-like operators that create the graph
$H$ with the edge $e^{*}$ being either red or blue. Therefore, it is enough to
describe how to compute $\OPTtw_{I}(H)$ for every $k$-graph $I$ on the boundary of $H$
in both cases, which we do now.
    \begin{itemize}
      \item If $e^{*}$ is red with cost $c(e^{*})$ then we have $\OPTtw_{I}(H)=0$ (associated with the acyclic   set $F=\varnothing$ of blue edges) for every $k$-graph $I=(K,f,g)$
      such that $f(e') = c(e^{*})$ and $f(e) = k$ for every $e \in E(K) \setminus \{e'\}$, where
      $e'$ is the edge in $E(K)$ with the same endpoints as $e^{*}$.
      For all other $k$-graphs $I$ we have $\OPTtw_{I}(H)=-\infty$
      (since none of them are realizable in $H$).

      \item If $e^{*}$ is blue then we have $\OPTtw_{I}(H)=0$
      (associated with $F=\varnothing$) for every
      $k$-graph $I=(K,f,g)$ such that $f(e)  = k$ for every $e \in E(K)$.
      In addition, for every
      $k$-graph $I=(K,f,g)$ such that $f(e') = 0$ and
      $f(e)  = k$ for every $e \in E(K) \setminus \{e'\}$ (where $e'$ is defined as previously),
      we have $\OPTtw_{I}(H) = \redbneck{H+I}{F}{a}{b}$ where $F=\{e^{*}\}$ and
      $a, b$ are the two endpoints of $e^{*}$. Let us emphasize that
      the quantity $\redbneck{H+I}{F}{a}{b}$ is
      easily computed here, since it is the minimum of $\mc(P)$ over all
      $ab$-paths $P$ in $H+I$ containing at least one red edge (with $\redbneck{H+I}{F}{a}{b}=c_{k}$
      if there is no such path), and there are at most $t!$ such paths.
      Finally, for all $k$-graphs $I$ not considered above, we have $\OPTtw_{I}(H)=-\infty$.
    \end{itemize}

 \subsection{Unary operators that permute  labels}
 Unary operators that permute the labels of the boundary vertices are handled in the obvious way.

\subsection{The Algorithm}
\label{sec-algo-treewidth}

We may now prove Theorem~\ref{th-tw}, which we restate here.

\setcounter{theoremduplicate}{2} %% !!! change this if theorem above is no longer Theorem 3 !!!
\begin{theoremduplicate}
The {\stackmst} problem can be solved in $2^{O(t^3)}m + m^{O(t^2)}$
time on graphs of treewidth $t$.
\end{theoremduplicate}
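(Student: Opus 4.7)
The plan is to analyse the dynamic program developed throughout Section~\ref{Bounded-Treewidth Graphs} and read off its running time. First, invoke Bodlaender's algorithm~\cite{Bodlaender1996} to obtain, in $2^{O(t^3)}m$ time, an explicit construction sequence of length $O(tm)$ that builds the input graph $G$ as a $t$-boundaried graph using the five operators $\varnothing$, $\oplus$, $\eta$, $\epsilon$, and label permutations. Process the intermediate graphs in bottom-up order: at each such graph $H$, maintain a table storing $\OPTtw_{I}(H)$, together with a witnessing acyclic blue subset $F \subseteq B(H)$, for every $k$-graph $I$ on $\partial(H)$. The base cases and the label-permutation and $\epsilon$ operators are handled as already described; the $\oplus$ and $\eta$ operators update the table via Lemmas~\ref{lem-opt-oplus} and~\ref{lem-opt-eta}, respectively. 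Once the root graph $G$ has been processed, return $\max_{I}\OPTtw_{I}(G)$, which by Lemma~\ref{lem-opt-for-fixed-blue} is the leader's maximum revenue.

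For the time bound, the key counting step is to note that a $k$-graph $I=(K,f,g)$ is specified by the pair $(f,g)$ of $[k]$-valued functions on the $\binom{t}{2}$ edges of $K$, so the total number of $k$-graphs on $\partial(H)$ is at most $(k+1)^{2\binom{t}{2}} = m^{O(t^2)}$ (using $k \le m$). Each table therefore has $m^{O(t^2)}$ entries. The most expensive update occurs at an $\oplus$-node, where for each target $I$ one iterates over all $\oplus$-compatible pairs $(I_{1},I_{2})$; doing this naively by iterating over all pairs of $k$-graphs from the tables for the two children costs $m^{O(t^2)} \cdot m^{O(t^2)} = m^{O(t^2)}$. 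The $\eta$ update is strictly cheaper since only one inner $k$-graph is involved. Multiplying $m^{O(t^2)}$ per node by the $O(tm)$ nodes in the construction gives $m^{O(t^2)}$ for the entire dynamic program, and adding the construction cost yields the claimed $2^{O(t^3)}m + m^{O(t^2)}$ bound.

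The main obstacle is to verify that every compatibility test can be performed within the budgeted time, and in particular the acyclicity condition (5) of $\oplus$-compatibility, which is quantified over all cycles of $K$. This reduces to a simple check: for each fixed edge $e$ of $K$, the pair $(I_{1},I_{2})$ fails (5) via a cycle through $e$ if and only if the subgraph of $K$ on the edges $e'\ne e$ with $f_{1}(e')=0$ contains a path between the endpoints of $e$, and likewise for $f_{2}$. Since $|E(K)| = O(t^2)$, this test runs in time polynomial in $t$, which is absorbed into the $m^{O(t^2)}$ per-pair cost. The remaining conditions in the definitions of $\oplus$- and $\eta$-compatibility are local equalities between individual entries of $f,g,f_{1},g_{1},f_{2},g_{2}$ and are similarly cheap, so no term larger than $m^{O(t^2)}$ appears, and the witness $F$ can be propagated by concatenating the witnesses stored at the children, as justified by Lemmas~\ref{lem-oplus-F} and~\ref{lem-eta-F}. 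This completes the proposed proof.
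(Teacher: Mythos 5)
Your proof follows the paper's own route step by step: Bodlaender's algorithm yields the construction sequence in $2^{O(t^3)}m$ time, the number of $k$-graphs on a boundary is $(k+1)^{2\binom{t}{2}} = m^{O(t^2)}$ (using $k\le m$), and a bottom-up sweep over the $O(tm)$ operator nodes with brute-force enumeration of compatible $k$-graph pairs gives $m^{O(t^2)}$ total, exactly as in the paper's Section~\ref{sec-algo-treewidth}.

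The one flaw is your proposed test for condition (5) of $\oplus$-compatibility. You claim that $(I_1,I_2)$ fails (5) via a cycle through a fixed edge $e$ if and only if the edges $e'\neq e$ with $f_1(e')=0$ contain a path between the endpoints of $e$ (and likewise for $f_2$). This is not equivalent. Take $t\ge 4$ and a pair in which $ab$ is the \emph{only} edge with $f_1(ab)=0$ and $cd$ the \emph{only} edge with $f_2(cd)=0$, where $\{a,b\}\cap\{c,d\}=\varnothing$. The $4$-cycle $a\,b\,c\,d\,a$ of $K$ contains an $f_1$-zero edge and an $f_2$-zero edge, so (5) fails; yet for no edge $e$ is there a path of $f_1$-zero edges (nor of $f_2$-zero edges) joining its endpoints while avoiding $e$, so your test never fires. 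The correct test is actually simpler: $K$ is a complete graph on $\partial(G)$, so for $t\ge 3$ any two distinct edges of $K$ lie on a common cycle, and since condition (1) rules out $f_1(e)=f_2(e)=0$, condition (5) fails if and only if some edge has $f_1=0$ \emph{and} some (necessarily different) edge has $f_2=0$. That is an $O(t^2)$ scan, matching the paper's (unelaborated) claim of an $O(t^2)$ check per compatibility test. With this substitution, the per-pair cost stays $\mathrm{poly}(t)$ and your overall bound $2^{O(t^3)}m + m^{O(t^2)}$ stands.
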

\begin{proof}
As noted after the definition of $t$-boundaried graphs in the beginning of
Section~\ref{Bounded-Treewidth Graphs}, it is enough to show that
the problem can be solved in $m^{O(t^2)}$ time on a given $t$-boundaried graph
when the construction according to the five operators is also given in input,
thanks to the result of Bodlaender~\cite{Bodlaender1996}.

Our algorithm considers each graph $H$ appearing in the decomposition
tree in a bottom-up fashion, maintaining the $\OPTtw_I(H)$ values (and associated acyclic
sets $F$ of blue edges)
as described by the previous subsections on the five composition operators.

The operators $\oplus$ and $\eta$ require us to check every combination of at most three different
$k$-graphs for compatibility (three for $\oplus$-compatiblity, two for $\eta$-compatibility).
There are $((k+1)^{2})^{t \choose 2}= (k+1)^{t(t-1)}$ different $k$-graphs on a given boundary, so we need to check $O(k^{3t^2})$ combinations. Each check can be done in $O(t^2)$ time.

The most time-consuming check is the one for the $\epsilon$ operator when it adds a blue edge, since
the computation of $\OPTtw_I(H)$ for one $k$-graph $I$
may require considering $O(t!)$ paths.

The total time complexity of the algorithm is therefore bounded by
$O(k^{3 t^2}\cdot t!)=m^{O(t^2)}$.

This results in a polynomial-time algorithm, when the input graph is of bounded treewidth, for computing the maximum revenue
achievable by the leader. Moreover, as mentioned earlier, it is not difficult to keep track of a witness $F \subseteq B(H)$ for $\OPTtw_I(H)$ whenever $\OPTtw_I(H) > -\infty$ when applying any one of the five operators.
\begin{flushright}\qed\end{flushright}\end{proof}

\section{Conclusion and Open Problems}

To our knowledge, our algorithms are the first examples of a bilevel pricing
problem solved by dynamic programming on a graph decomposition tree. Several interesting problems are left open.

We proved that the problem can be solved in polynomial time for every constant value of the treewidth $t$. However,
it is unclear whether there exists a fixed-parameter algorithm of complexity $O(f(t) n^c)$ for an arbitrary (possibly large)
function $f$ of $t$ and a constant $c$. In fact, we conjecture that under reasonable complexity-theoretic assumptions,
such an algorithm does not exist.

We believe that our results provide insights into the structure of the problem,
and could be a stepping stone toward a polynomial-time approximation scheme
for planar graphs. Also, the proposed techniques could be useful in the design of
dynamic programming algorithms for other important pricing problems in graphs, including
pricing problems with many followers~\cite{BHK08,GLSU09-journal}, and
Stackelberg problems involving shortest paths~\cite{RSM05,BCKLN10}
or shortest path trees~\cite{BGPW08}.

\paragraph{Acknowledgements.} 
We would like to thank the anonymous referees for their helpful comments. 

\bibliographystyle{alpha}
\bibliography{planarStackMST}

\end{document}